\newcommand{\keywords}[1]{\par\addvspace\baselineskip
\noindent\keywordname\enspace\ignorespaces#1}
\begin{document}

\mainmatter  

\title{Satisfiability of the Two-Variable Fragment of First-Order Logic over Trees}


%
%

\author{Witold Charatonik\inst{1}\fnmsep\thanks{Supported by Polish NCN grant number  DEC-2011/03/B/ST6/00346.} \and
Emanuel Kiero\'nski\inst{1}\fnmsep\thanks{Supported by Polish Ministry of Science and Higher Education grant N N206 371339.} \and  Filip Mazowiecki\inst{2}\fnmsep\footnotemark[1]}

%
\authorrunning{Witold Charatonik, Emanuel Kiero\'nski, Filip Mazowiecki}

\institute{University of Wroc\l aw \and University of Warsaw}

%

\toctitle{Complexity of two-variable logic over finite trees}
\tocauthor{Witold Charatonik, Emanuel Kiero\'nski, Filip Mazowiecki}
\maketitle


\newcommand{\DDD}{\mbox{\large \boldmath $\delta$}}

\newcommand{\bp}{\mathbf{p}}
\newcommand{\br}{\mathbf{r}}
\newcommand{\bP}{\mathbf{P}}
\newcommand{\bR}{\mathbf{R}}
\newcommand{\bE}{\mathbf{E}}
\newcommand{\bL}{\mathbf{L}}
\newcommand{\bO}{\mathbf{O}}
\newcommand{\bU}{\mathbf{U}}

\newcommand{\cA}{\mathcal{A}}
\newcommand{\cB}{\mathcal{B}}
\newcommand{\cC}{\mathcal{C}}
\newcommand{\cD}{\mathcal{D}}%
\newcommand{\cE}{\mathcal{E}}%
\newcommand{\cP}{\mathcal{P}}%
\newcommand{\cQ}{\mathcal{Q}}%
\newcommand{\cK}{\mathcal{K}}%
\newcommand{\cF}{\mathcal{F}}%
\newcommand{\cG}{\mathcal{G}}%
\newcommand{\cL}{\mathcal{L}}%
\newcommand{\cT}{\mathcal{T}}%

\renewcommand{\phi}{\varphi} 

\newcommand{\zz}{\mathit{\!0\!0}}
\newcommand{\zo}{\mathit{\!0\!1}}
\newcommand{\oz}{\mathit{1\!0}}
\newcommand{\oo}{\mathit{1\!1}}

\newcommand{\AAA}{\mbox{\large \boldmath $\alpha$}}
\newcommand{\BBB}{\mbox{\large \boldmath $\beta$}}

\newcommand{\EQ}{\ensuremath{{\mathcal EQ}}}
\newcommand{\Sat}{\ensuremath{\textit{Sat}}}
\newcommand{\FinSat}{\ensuremath{\textit{FinSat}}}

\newcommand{\PDL}{\mbox{\rm PDL}}
\newcommand{\FO}{\mbox{\rm FO}}
\newcommand{\FOt}{\mbox{$\mbox{\rm FO}^2$}}
\newcommand{\FOth}{\mbox{$\mbox{\rm FO}^3$}}

\newcommand{\GF}{\mbox{$\mbox{\rm GF}$}}
\newcommand{\FOtEC}{\mbox{$\mbox{\rm EC}^2$}}
\newcommand{\FOtECth}{\mbox{$\mbox{\rm EC}^2_3$}}
\newcommand{\FOtECt}{\mbox{$\mbox{\rm EC}^2_2$}}
\newcommand{\FOtECo}{\mbox{$\mbox{\rm EC}^2_1$}}
\newcommand{\FOtECk}{\mbox{$\mbox{\rm EC}^2_k$}}
\newcommand{\GFOTC}{\mbox{$\mbox{\rm G}_\exists \mbox{\rm FO}^2+\mbox{\rm TC}$}}
\newcommand{\FOtb}{\mbox{$\mbox{\rm FO}^2[\lessv, \succv]$}}
\newcommand{\FOts}{\mbox{$\mbox{\rm FO}^2[\succv]$}}
\newcommand{\FOto}{\mbox{$\mbox{\rm FO}^2[\lessv]$}}
\newcommand{\GFtb}{\mbox{$\mbox{\rm GF}^2[\lessv, \succv]$}}
\newcommand{\GFts}{\mbox{$\mbox{\rm GF}^2[\succv]$}}
\newcommand{\GFto}{\mbox{$\mbox{\rm GF}^2[\lessv]$}}
\newcommand{\GFt}{\mbox{$\mbox{\rm GF}^2$}}
\newcommand{\FOtEth}{\mbox{$\mbox{\rm EQ}^2_3$}}
\newcommand{\FOtEt}{\mbox{$\mbox{\rm EQ}^2_2$}}
\newcommand{\FOtEo}{\mbox{$\mbox{\rm EQ}^2_1$}}
\newcommand{\FOtEk}{\mbox{$\mbox{\rm EQ}^2_k$}}

\newcommand{\NLogSpace}{\textsc{NLogSpace}}
\newcommand{\NP}{\textsc{NPTime}}
\newcommand{\PTime}{\textsc{PTime}}
\newcommand{\PSpace}{\textsc{PSpace}}
\newcommand{\ExpTime}{\textsc{ExpTime}}
\newcommand{\ExpSpace}{\textsc{ExpSpace}}
\newcommand{\NExpTime}{\textsc{NExpTime}}
\newcommand{\TwoExpTime}{2\textsc{-ExpTime}}
\newcommand{\TwoNExpTime}{2\textsc{-NExpTime}}
\newcommand{\APSpace}{\textsc{APSpace}}
\newcommand{\APTime}{\textsc{APTime}}

\newcommand{\AExpSpace}{\textsc{AExpSpace}}
\newcommand{\ASpace}{\textsc{ASpace}}
\newcommand{\Space}{\textsc{Space}}
\newcommand{\Time}{\textsc{Time}}
\newcommand{\ATime}{\textsc{ATime}}
\newcommand{\AExpTime}{\textsc{AExpTime}}
\newcommand{\DTime}{\textsc{DTime}}
\newcommand{\NPTime}{\textsc{NPTime}}%

\newcommand{\set}[1]{\{#1\}}
\newcommand{\md}[2][] {{\lfloor#2\rfloor_{#1}}}
\newcommand{\sizeOf}[1]{\lVert #1 \rVert}
\newcommand{\str}[1]{{\mathfrak{#1}}}
\newcommand{\restr}{\!\!\restriction\!\!}

\newcommand{\N}{{\mathbb N}}
\newcommand{\Z}{{\mathbb Z}} 

\newcommand{\true}{\mathit{true}}
\newcommand{\false}{\mathit{false}}
\newcommand{\depth}{\mathit{depth}}
\newcommand{\height}{\mathit{height}}
\newcommand{\Root}{\mathit{root}}
\newcommand{\leaf}{\mathit{leaf}}
\newcommand{\elem}{\mathit{elem}}
\newcommand{\transl}[1]{T(#1)}

\newcommand{\taucl}{\tau^{\scriptscriptstyle \#}}

\newcommand{\sss}{\scriptscriptstyle}
\newcommand{\emodels}{\models_{\sss \#}}
\newcommand{\pmodels}{\mid \hspace*{-1.5pt}\approx_{\sss \#}}

\newif\ifdraftpaper
\draftpapertrue

\ifdraftpaper
\newcommand{\nb}[1]{\textcolor{red}{\bf\large \#}\footnote{\textcolor{blue}{#1}}}
\pagestyle{plain}
\else
\newcommand{\nb}[1]{}
\fi

\newcommand{\epr}{\hfill $\Box$}


\newcommand{\succv}{{\downarrow}}
\newcommand{\lessv}{{\downarrow_{\scriptscriptstyle +}}}
\newcommand{\succh}{{\rightarrow}}
\newcommand{\lessh}{{\rightarrow^{\scriptscriptstyle +}}}

\newcommand{\tsuccv}{\theta_{\downarrow}}
\newcommand{\tprecv}{\theta_{\uparrow}}
\newcommand{\tlessv}{\theta_{\downarrow \downarrow_+}}
\newcommand{\tgreatv}{\theta_{\uparrow \uparrow^+}}
\newcommand{\tsucch}{\theta_{\rightarrow}}
\newcommand{\tprech}{\theta_{\leftarrow}}
\newcommand{\tlessh}{\theta_{\rightrightarrows^+}}
\newcommand{\tgreath}{\theta_{\leftleftarrows^+}}
\newcommand{\tfree}{\theta_{\not\sim}}
\newcommand{\teq}{\theta_{=}}

\newcommand{\FOtall}{\mbox{$\FOt[\succv, \lessv, \succh, \lessh]$}}
\newcommand{\cutout}[1]{}

\newcommand{\fw}{{\not\sim}}
\newcommand{\Nn}{\mathbb{N}}
\newcommand{\tree}{\str{T}}
\newcommand{\Ll}{L}
\newcommand{\trs}{I}
\newcommand{\hv}{\hat{v}}
\newcommand{\greatv}{{\uparrow^{\scriptscriptstyle +}}}
\newcommand{\greateqv}{{\uparrow^{\scriptscriptstyle *}}}
\newcommand{\lesseqv}{{\downarrow_{\scriptscriptstyle *}}}

\begin{abstract}
We consider the satisfiability problem for the two-variable fragment of first-order logic over finite unranked trees.
We work with signatures consisting of some unary predicates and the binary navigational predicates $\succv$ (child), $\succh$ (right sibling), and
their respective transitive closures $\lessv$, $\lessh$. We prove that the satisfiability problem for the logic containing all 
these predicates, \FOtall{}, is \ExpSpace-complete.
Further, we consider the restriction of the class of structures to \emph{singular trees}, i.e., we assume that at every node precisely one unary predicate holds. We observe that \FOtall{} and even 
\FOtb{} remain \ExpSpace-complete over finite singular trees, but the complexity decreases for some weaker logics. Namely, the logic with one binary predicate, $\lessv$, denoted \FOto, is \NExpTime-complete, and
its guarded version, \GFto, is \PSpace-complete over finite singular trees, even though both these logics are \ExpSpace-complete over 
arbitrary finite trees. 
\keywords{two-variable logic, finite trees, satisfiability, XML}
\end{abstract}

\section{Introduction}
Classical results from the 1930s by Church and Turing show that the satisfiability problem for first-order logic is undecidable. Moreover, 
undecidability can be proved even for the fragment with only three variables, $\FOth$, \cite{KMW62}. This fact attracted the attention of researchers to the
two-variable fragment, \FOt{}, which turns out to be decidable \cite{Mor75} and \NExpTime-complete \cite{GKV97}. 
In particular, \FOt{} gained a lot of interest  from computer scientists, because of its close connections to formalisms such as modal, temporal, description logics, and XML, widely used in various areas of computer science, including hardware and software verification, knowledge representation, databases, and  artificial intelligence. 

The expressive power of \FOt{} is limited and is not sufficient to axiomatise some natural simple classes of structures, such us trees or words. It is also not possible to say, e.g., that a binary relation is transitive, an equivalence or a linear order.  Thus, \FOt{} over various classes of structures, in which certain relational symbols have to be interpreted in
a special way, e.g., as equivalences, has been extensively studied (see, e.g., \cite{GO98,GradelOR99,Otto01,Kie2011,KO12,KMPT12} for some results in this area). 

\FOt{} over words is investigated in \cite{EVW02}. The authors work there with signatures consisting
of some unary predicates and two built-in binary predicates: $succ$ for the successor relation and $<$ for its transitive closure.
The resulting logic, \FOt$[succ, <]$, is shown to have \NExpTime-complete satisfiability problem, both over $\omega$-words and over
finite words. Actually, the lower bound can be shown for \emph{monadic} \FOt{}, i.e., without using the binary relations $succ$ and $<$. The elementary complexity of \FOt{} over words sharply contrasts with the non-elementary complexity of \FOth{} over words which follows from \cite{Sto74}.

In this paper we consider \FOt{} over unranked trees (ordered or unordered), assuming that, beside unary symbols, signatures may include the child relation $\succv$, the right sibling relation $\succh$, and their respective transitive closures
$\lessv$ and $\lessh$. Decidability of the satisfiability problem for \FOt{} over various classes of infinite trees is implied by the celebrated result by Rabin \cite{Rabin69}, 
that the monadic second-order theory of the binary tree is decidable. Over finite trees decidability follows from \cite{Idziak88}. 
However, regarding complexity, the above mentioned results give only non-elementary upper bounds. 
A better upper complexity bound for the richest of the logics we consider, \FOtall, can be obtained by exploring its correspondence to XPath. In \cite{Marx05} it is argued that \FOtall{} is expressively equivalent to a variant of
Core XPath which is shown in \cite{Marx04} to be \ExpTime-complete. As the translation to XPath involves an exponential blowup in the size of formulas, we get this way \TwoExpTime{} upper bound. 
Our first contribution is establishing the precise complexity of the satisfiability
problem for \FOtall{} over finite trees by showing that it is \ExpSpace-complete. 

Worth mentioning here is the work from \cite{BMSS09}, where two-variable logics over unranked, ordered trees with additional equivalence relation
on nodes, denoted $\sim$, is proposed. The purpose of $\sim$ is to model XML \emph{data values}. It is argued that this extension of \FOtall{} is very hard and  its decidability is left as an
open problem. On the positive side, decidability of \FOt$[\succv, \succh, \sim]$ is shown. 

In the context of XML reasoning it is natural to consider also the additional semantic restriction that at a node of a tree precisely one unary predicate holds. 
We call trees meeting this assumption  \emph{singular trees}. 
In \cite{Weis11} an analogous restriction for finite words is considered.\footnote{In that paper a slightly different terminology is used: the term \emph{word} denotes a structure meeting the singularity assumption, and the term \emph{power words} is reserved for  structures that allow for multiple
unary predicates holding at a single position.}  It appears that \FOt$[succ, <]$ over finite singular words remains \NExpTime-complete, but \FOt$[<]$ becomes \NPTime-complete. In this paper we observe
a similar effect  in the case of unordered trees: over singular trees, \FOtb{} remains \ExpSpace-hard, and the complexity of \FOto{} decreases. This time  the complexity
drop is slightly less spectacular, as the problem is \NExpTime-complete.
We observe, however, that for \NExpTime-hardness the ability of speaking about pairs of elements $x,y$ in free position, i.e., such that $y$ is 
neither an ascendant or descendant of $x$, is needed. This is not typical of logics used in computer science, as their atomic constructions usually allow to refer only to pairs of elements that lie on the same path. To capture the former kind of scenario we consider the restriction of \FOt{} to the two-variable guarded fragment, \GFt{}, in which all quantifiers have to be relativised by binary predicates. We observe that the satisfiability problem for \GFto{} over finite singular trees is \PSpace-complete. To complete the picture we show that augmenting \GFto{} with any of the remaining navigational predicates leads to \ExpSpace-hardness over singular trees. 
Thus, we establish the complexity  over finite trees and over finite
singular trees of all logics \GFt[$\tau_{bin}$] and \FOt[$\tau_{bin}$], for $\{ \lessv \} \subseteq \tau_{bin} \subseteq \{\succv, \lessv, \succh, \lessh \}$. 

\section{Preliminaries}

{\bf Trees and logics.}
We work with signatures of the form $\tau=\tau_0 \cup \tau_{bin}$, where 
$\tau_0$ is a set of unary symbols and $\tau_{bin} \subseteq \{ \succv, \lessv, \succh, \lessh \}$.
Over such signatures we consider two fragments of first-order logic: \FOt{}, i.e., the restriction
of first-order logic in which only variables $x$ and $y$ are available, and \GFt{} being the 
intersection of \FOt{} and the \emph{guarded fragment}, \GF{} \cite{ABN98}.
\GF{} is defined as the least set of formulas such that:
(i) every atomic formula belongs to \GF{};
(ii) \GF{} is closed under logical connectives $\neg, \vee,
\wedge, \Rightarrow$;
and (iii) quantifiers are appropriately relativised by atoms, i.e.,
if $\phi ({\mathbf x}, {\mathbf y})$ is a formula of \GF{} and
$\alpha ({\mathbf x}, {\mathbf y})$ is an atomic formula
containing all the free variables of $\phi$, then the formulas
${\boldsymbol \forall} {\mathbf y}(\alpha ({\mathbf x}, {\mathbf y}) \Rightarrow  \phi ({\mathbf x}, {\mathbf y}))$ and
${\boldsymbol \exists}  {\mathbf y}(\alpha ({\mathbf x}, {\mathbf y}) \wedge \phi ({\mathbf x}, {\mathbf
y}))$ belong to \GF{}.
%
Atom $\alpha ({\mathbf x}, {\mathbf y})$ is called a {\em guard}. 
Equalities $x{=}x$ or $x{=}y$ are also allowed as guards. 

For a given formula $\phi$ we denote by $\tau_0(\phi)$ the set of unary symbols
that appear in $\phi$. 
We write \FOt$[\tau_{bin}]$ or \GFt$[\tau_{bin}]$ to denote that the only binary symbols that are allowed in signatures are those from $\tau_{bin}$.
We are interested in finite unranked tree structures, in which the interpretation of symbols from $\tau_{bin}$ is fixed: if available in
the signature, $\succv$ is interpreted as the child relation, $\succh$ as the right sibling relation, and $\lessv$ and $\lessh$
as their respective transitive closures. If at least one of $\succh$, $\lessh$ is interpreted in a tree then we
say that this tree is \emph{ordered}; in the opposite case we say that the tree is \emph{unordered}. 

We use $x \fw y$ to abbreviate the formula stating that $x$ and $y$ are in \emph{free position},
i.e., that they are related by none of the binary predicates available in the signature. E.g., if we consider 
ordered trees over $\tau_{bin}= \{ \succv, \lessv, \succh, \lessh\}$ then $x \fw y$ can be defined as $x {\not=} y \wedge \neg (x \lessv y) \wedge \neg (y \lessv x) \wedge \neg (x \lessh y) \wedge \neg (y \lessh x)$;
for unordered trees over $\tau_{bin}=\{\lessv \}$ it is just $x {\not=} y \wedge \neg (x \lessv y) \wedge \neg (y \lessv x)$. 

Let us call the formulas specifying the relative position of a pair of elements in a tree with respect to binary predicates \emph{order formulas}.
There are ten possible order formulas: 
$x \succv y$, $y \succv x$, $x \lessv y \wedge \neg (x \succv y)$, $y \lessv x \wedge \neg (y \succv x)$, $x \succh y$, $y \succh x$, $x \lessh y \wedge \neg (x \succh y)$, $y \lessh x \wedge \neg (y \succh x)$, $x \fw y$, $x{=}y$. 
They are denoted, respectively, as: $\tsuccv$, $\tprecv$, $\tlessv$, $\tgreatv$, 
$\tsucch$, $\tprech$, $\tlessh$, $\tgreath$, $\tfree$, $\teq$. Let $\Theta$ be the set of these ten formulas. 


A structure over a signature $\tau=\tau_0 \cup \tau_{bin}$ is \emph{singular} if at every element of this
structure precisely one unary predicate from $\tau_0$ holds. We say that a formula $\phi$ is \emph{singularly satisfiable} (over a class of structures $\cC$)
if there exists a singular model of $\phi$ (from  $\cC$).

We use symbol $\str{T}$ (possibly with sub- or superscripts) to denote tree structures. For a given
tree $\str{T}$ we denote by $T$ its universe.
A \emph{tree frame} is a tree over a signature containing no unary predicates.
We say that a formula $\phi$ is {\emph (singularly) satisfiable over a tree frame} $\cT$ if 
 $\str{T} \models \phi$ for some (singular) $\str{T}$ such that $\cT$ is the restriction of  $\str{T}$ to binary symbols.

\medskip\noindent
{\bf Normal form.}
We say that an \FOtall{} formula $\phi$ is in \emph{normal form} if $\phi=\forall xy \chi(x,y) \wedge \bigwedge_{i \in I} \forall x (\lambda_i(x) \Rightarrow \exists y (\eta_i(x,y) \wedge \psi_i(x,y)))$, for some index set $I$, 
where $\chi(x,y)$ is quantifier-free, $\lambda_i(x)$ is an atomic formula $a(x)$ for some unary symbol $a$, $\psi_i(x,y)$ is a boolean combination of unary atomic formulas, and $\eta_i(x,y)$ is an order formula. Please note, that in $\chi$ the equality symbol may be used, e.g., we can enforce that a model contains at most one node satisfying $a$: $\forall xy (a(x) \wedge a(y) \Rightarrow x{=}y)$.
The following lemma can be proved in a standard fashion (cf.~e.g., \cite{KMPT12}).

\begin{lemma} \label{l:normalform}
Let $\phi$ be an \FOtall{} formula over a signature $\tau$ and let $\cT$ be a tree frame. There exists a polynomially computable \FOtall{} normal form formula $\phi'$ over signature $\tau'$ consisting of $\tau$
and some additional unary symbols, such that $\phi$ is  satisfiable over $\cT$  (singularly satisfiable over $\cT$) iff $\phi'$ is satisfiable over $\cT$ (satisfiable over $\cT$ in a model that restricted to $\tau$ is singular).
 
\end{lemma}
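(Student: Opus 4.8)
The plan is a three-stage reduction in which every newly introduced symbol is a \emph{fresh unary} predicate, so that the $\tau$-reduct of a structure is never touched; the ``singular'' half of the statement is then obtained for free, since the construction sends a model whose $\tau$-reduct is singular to a model with the same (singular) $\tau$-reduct, and conversely. The first stage is the usual Scott-type renaming (cf.~\cite{KMPT12}): repeatedly replace an innermost quantified subformula $Qz\,\xi$ ($Q\in\{\exists,\forall\}$) by a fresh atom $p(u)$, where $u$ is its free variable — an auxiliary variable is used if $\xi$ happens to be a sentence — and conjoin $\forall u\,(p(u)\leftrightarrow Qz\,\xi)$; each biconditional then splits into a $\forall\forall$-conjunct and a conjunct of the form $\forall x\,(\lambda(x)\Rightarrow\exists y\,\xi'(x,y))$, or, for a top-level sentence, of the form $\exists x\,\beta(x)$. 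Here $\xi',\beta$ and the accumulated $\forall\forall$-matrix are quantifier-free over $\tau$ and the new symbols, each trigger $\lambda$ is a single unary atom (a fresh everywhere-true predicate, forced by a $\forall xy$-clause, is used when no natural trigger is available), and a trigger arising negated, $\neg p(x)$, is positivised by a fresh complement $\bar p$ constrained by $\forall xy\,(p(x)\vee\bar p(x))$ (an element in both is then forced to be impossible). Everything so far is equisatisfiable with $\phi$ over every tree frame.

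The second, main stage shapes each conjunct $\forall x\,(\lambda_j(x)\Rightarrow\exists y\,\xi_j(x,y))$. The key fact is the standard property of tree structures that any ordered pair of nodes satisfies exactly one formula of $\Theta$ expressible over $\tau_{bin}$, and that formula determines the truth value of every binary atom over $\tau_{bin}$. So, writing $\xi_j^\theta$ for the boolean combination of \emph{unary} atoms obtained from $\xi_j$ by substituting for each binary atom its value under $\theta$ (identifying the two variables when $\theta=\teq$), we have over trees $\exists y\,\xi_j(x,y)\equiv\bigvee_\theta\exists y\,(\theta(x,y)\wedge\xi_j^\theta(x,y))$. We now add a fresh unary $q_{j,\theta}$ for each relevant $\theta$, place the covering clause $\forall x\,(\lambda_j(x)\Rightarrow\bigvee_\theta q_{j,\theta}(x))$ into the $\forall\forall$-part, and replace the $j$-th conjunct by $\bigwedge_\theta\forall x\,(q_{j,\theta}(x)\Rightarrow\exists y\,(\theta(x,y)\wedge\xi_j^\theta(x,y)))$ — each conjunct already has the required shape, with unary-atom trigger $q_{j,\theta}$, order-formula guard $\eta_i=\theta$, and unary-boolean body $\psi_i=\xi_j^\theta$. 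Equivalence over a fixed frame is immediate: in a model of the old conjunct, for each $x$ with $\lambda_j(x)$ colour $x$ by some $q_{j,\theta}$ for which a witness of relative position $\theta$ exists (one does); conversely, from the covering clause pick $\theta$ with $q_{j,\theta}(x)$ and obtain $y$ with $\theta(x,y)\wedge\xi_j^\theta(x,y)$, hence with $\xi_j(x,y)$.

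The third stage removes the $\exists x\,\beta_k(x)$ conjuncts, and is the only place where finiteness is used. The root is definable as the unique node with no parent (no proper ancestor, if $\succv\notin\tau_{bin}$), and the conjuncts $\forall xy\,(\mathit{rt}(x)\Rightarrow\neg(y\succv x))$, $\forall x\,(\overline{\mathit{rt}}(x)\Rightarrow\exists y\,(y\succv x))$, $\forall xy\,(\mathit{rt}(x)\vee\overline{\mathit{rt}}(x))$ and $\forall xy\,(\mathit{rt}(x)\wedge\mathit{rt}(y)\Rightarrow x=y)$ pin $\mathit{rt}$ onto exactly the actual root, since over a finite tree an unmarked root would demand an infinite chain of ancestors. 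For each $k$ we add fresh $D_k,E_k$ (``$\beta_k$ holds here or below'') and the conjuncts $\forall xy\,(\mathit{rt}(x)\Rightarrow D_k(x))$, $\forall x\,(D_k(x)\Rightarrow\beta_k(x)\vee E_k(x))$ and $\forall x\,(E_k(x)\Rightarrow\exists y\,(x\succv y\wedge D_k(y)))$, using $\lessv$ in place of $\succv$ in the last clause when $\succv\notin\tau_{bin}$. Over a finite tree these force some node to satisfy $\beta_k$ (descend from the root along $D_k$-nodes; this cannot continue forever), while conversely, given $b$ with $\beta_k(b)$, one colours the root-to-$b$ branch by $D_k$ and its proper initial part by $E_k$. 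Folding the initial $\forall\forall$-matrix and all new $\forall\forall$-clauses into $\chi$ yields the normal-form formula $\phi'$.

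The transformation is polynomial: stage one adds one predicate per subformula, stage two multiplies the number of existential conjuncts by $|\Theta|\le 10$ and performs only substitutions — crucially no DNF expansion — and stage three adds a constant number of conjuncts per $\beta_k$. The step demanding the most thought is stage two: coercing the existential requirements into the rigid ``unary trigger / order-formula guard / unary body'' shape without blowing up the formula, which is exactly what the partition property of $\Theta$ and the selector predicates $q_{j,\theta}$ (together with the covering clause) buy us; the bookkeeping in stages one and three — positivising negated triggers, handling sentence-subformulas, definability of the root — is routine, and the singular case requires nothing beyond observing that no stage modifies the $\tau$-reduct.
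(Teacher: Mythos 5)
Your proof is correct and follows exactly the route the paper has in mind: the paper gives no proof of Lemma~\ref{l:normalform}, merely noting it "can be proved in a standard fashion (cf.~\cite{KMPT12})", and your three stages — Scott-style renaming with fresh unary predicates, case-splitting each existential requirement over the partition $\Theta$ of pair-positions via selector predicates, and eliminating residual existential sentences using the definable root — are precisely that standard argument, correctly adapted so that only unary symbols are added and hence the singular case comes for free.
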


Consider a conjunct $\phi_i=\forall x (\lambda_i(x) \Rightarrow \exists y (\eta_i(x,y) \wedge \psi_i(x,y)))$ of a normal form \FOtall{} 
formula $\phi$. 
Let $\str{T} \models \phi$, and let $v \in T$ be an element such that $\str{T} \models \lambda_i[v]$. Then an element $w \in T$ such
that $\str{T} \models \eta_i[v,w] \wedge \psi_i[v,w]$ is called a \emph{witness} for $v$ and $\phi_i$. 
Sometimes, $b$ is called an \emph{upper} witness if $\eta_i(x,y) \models y \lessv x$, a \emph{lower} witness if $\eta_i(x,y) \models x \lessv y$, and
a \emph{free} witness if $\eta_i(x,y) \models x \fw y$.

\medskip\noindent
{\bf Types.}
A (atomic) {\em $1$-type}, over a signature $\tau=\tau_0 \cup \tau_{bin}$, is a subset of
$\tau_0$. 
We often identify a $1$-type $\alpha$ with the formula $\bigwedge_{a \in \alpha} a(x) \wedge \bigwedge_{a \not\in \alpha} \neg a(x)$.
For a given $\tau$-tree $\str{T}$, and $v \in T$, we denote by ${\rm
  tp}^\str{T}(v)$ the $1$-type \emph{realized} by $v$, i.e., the unique
$1$-type $\alpha$ such that $\str{T} \models \alpha[v]$.

A {\em full type} is a function $\bar{\alpha}:\Theta \rightarrow \cP(\tau_0)$, such that
 $\bar{\alpha}(\tprecv)$, $\bar{\alpha}(\tsucch)$, $\bar{\alpha}(\tprech)$ are singletons or empty, $\bar{\alpha}(\teq)$ is a singleton, and
if $\bar{\alpha}(\tprecv)$ (respectively $\bar{\alpha}(\tsuccv)$, $\bar{\alpha}(\tprech)$, $\bar{\alpha}(\tsucch)$) is empty then 
$\bar{\alpha}(\tgreatv)$ (respectively $\bar{\alpha}(\tlessv)$, $\bar{\alpha}(\tgreath)$, $\bar{\alpha}(\tlessh)$) is also empty.
We employ the following convention: for a given full type $\bar{\alpha}$ we denote by $\alpha$ the unique member of $\bar{\alpha}(\teq)$.
For a given $\tau$-tree $\str{T}$, and $v \in T$, we denote by ${\rm ftp}^\str{T}(v)$  the full type \emph{realized} by $v$, 
i.e., the unique full type $\bar{\alpha}$, such that $\alpha$ is the $1$-type of $v$, 
and for all $\theta \in \Theta$ we have that $\bar{\alpha}(\theta)=
\{ {\rm tp}^{\str{T}}(w): \str{T} \models \theta[v,w] \}$.

A {\em reduced full type} is a tuple $(\alpha, A, B, F)$, where $\alpha$ is a $1$-type
and $A, B, F$ are sets of $1$-types. Reduced full types are used to keep information recorded in full types 
in a slightly (lossy) compressed form.
Let ${\rm ftp}^\str{T}(v)=\bar{\alpha}$. By ${\rm rftp}^\str{T}(v)$ we denote the reduced full type \emph{realized} by $v$, 
i.e., the reduced full type $(\alpha, A, B, F)$, such that  $A=\bar{\alpha}(\tprecv) \cup \bar{\alpha}(\tgreatv)$, 
$B=\bar{\alpha}(\tsuccv) \cup \bar{\alpha}(\tlessv)$ and $F=\bar{\alpha}(\tsucch) \cup \bar{\alpha}(\tprech) \cup \bar{\alpha}(\tlessh) \cup \bar{\alpha}(\tgreath) \cup \bar{\alpha}(\tfree)$.
Note that $\alpha$ denotes the $1$-type of $v$, and, informally speaking, $A$ is the set of $1$-types of elements realized \emph{above} $v$, 
$B$ is the set of $1$-types of elements realized \emph{below} $v$, and $F$ is the set of $1$-types of the siblings of $v$ and the elements realized in \emph{free} position to $v$.

Note that the number
of $1$-types is bounded exponentially, and the numbers of full types and reduced full types are bounded
doubly exponentially in the size of the signature.

For a given normal form \FOtall{} formula $\phi$ and a full type $\bar{\alpha}$, we say that $\bar{\alpha}$ is $\phi$-\emph{consistent} if 
an element realizing $\bar{\alpha}$ cannot be a member of a pair violating the universal conjunct $\forall xy \chi(x,y)$ of $\phi$, and has 
all witnesses required by $\phi$. Formally, $\bar{\alpha}$ is $\phi$-consistent if for every $\theta \in \Theta$, and every $\alpha' \in \bar{\alpha}(\theta)$ we have
$\alpha(x) \wedge \alpha'(y) \wedge \theta(x,y) \models \chi(x,y) \wedge \chi(y,x)$, and 
for every conjunct  
$\forall x (\lambda_i(x) \Rightarrow \exists y (\eta_i(x,y) \wedge \psi_i(x,y)))$ of $\phi$, such that $\alpha(x) \models \lambda_i(x)$,
there exists a $1$-type $\alpha' \in \bar{\alpha}(\eta_i)$ such that $\alpha(x), \alpha'(y) \models \psi_i(x,y)$. 
A proof of the following proposition is straightforward.

\begin{proposition} \label{p:consistent_types}
Let $\str{T}$ be a tree and let $\phi$ be a normal form \FOtall{}-formula. Then $\str{T} \models \phi$ iff
every full type realized in $\str{T}$ is $\phi$-consistent. 
\end{proposition}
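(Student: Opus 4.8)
The plan is to prove both implications by directly unfolding the definitions of full type and of $\phi$-consistency; the whole argument rests on two elementary observations that I would isolate first. (1) As $\phi$ is in normal form, $\chi(x,y)$ is quantifier-free and each of its atoms is either a unary atom (whose truth at $v$ is fixed by ${\rm tp}^{\str{T}}(v)$) or a binary atom from $\tau_{bin}$ or an equality (whose truth at a pair $(v,w)$ is pinned down by the order formula of $\Theta$ relating $v$ and $w$, since each $\theta\in\Theta$ fixes the truth value of every binary atom and of $x{=}y$); hence the truth of $\chi[v,w]$, and of $\chi(x,y)\wedge\chi(y,x)$, in a tree $\str{T}$ is a function of ${\rm tp}^{\str{T}}(v)$, ${\rm tp}^{\str{T}}(w)$ and that order formula. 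Likewise each $\psi_i(x,y)$ is a boolean combination of unary atoms, so the truth of $\psi_i[v,w]$ is already determined by ${\rm tp}^{\str{T}}(v)$ and ${\rm tp}^{\str{T}}(w)$. (2) The ten formulas of $\Theta$ are pairwise exclusive and jointly exhaustive over ordered pairs of nodes of a tree (the diagonal falling under $\teq$), so to every $v,w\in T$ corresponds exactly one $\theta\in\Theta$ with $\str{T}\models\theta[v,w]$, and then ${\rm tp}^{\str{T}}(w)\in\bar{\alpha}(\theta)$ for $\bar{\alpha}={\rm ftp}^{\str{T}}(v)$, straight from the definition of full type.

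For the forward direction I would assume $\str{T}\models\phi$, fix $v\in T$, put $\bar{\alpha}={\rm ftp}^{\str{T}}(v)$ with $1$-type $\alpha$, and check the two clauses of $\phi$-consistency. Universal clause: given $\theta\in\Theta$ and $\alpha'\in\bar{\alpha}(\theta)$, choose $w$ with $\str{T}\models\theta[v,w]$ and ${\rm tp}^{\str{T}}(w)=\alpha'$; since $\str{T}\models\forall xy\,\chi(x,y)$, the pair $(v,w)$ satisfies $\chi(x,y)\wedge\chi(y,x)$, and observation (1) upgrades this to $\alpha(x)\wedge\alpha'(y)\wedge\theta(x,y)\models\chi(x,y)\wedge\chi(y,x)$. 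Existential clause: if $\alpha(x)\models\lambda_i(x)$ then $\str{T}\models\lambda_i[v]$, so the conjunct $\phi_i$ yields a witness $w$ with $\str{T}\models\eta_i[v,w]\wedge\psi_i[v,w]$; putting $\alpha'={\rm tp}^{\str{T}}(w)$ we get $\alpha'\in\bar{\alpha}(\eta_i)$ and, by observation (1) applied to $\psi_i$, $\alpha(x),\alpha'(y)\models\psi_i(x,y)$. Hence $\bar{\alpha}$ is $\phi$-consistent.

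For the converse I would assume every full type realized in $\str{T}$ is $\phi$-consistent. To obtain $\str{T}\models\forall xy\,\chi(x,y)$, take arbitrary $v,w\in T$ (including $v=w$, where $\theta=\teq$ and $\bar{\alpha}(\teq)=\{\alpha\}$), let $\theta$ be the order formula holding between them, put $\alpha'={\rm tp}^{\str{T}}(w)\in\bar{\alpha}(\theta)$ for $\bar{\alpha}={\rm ftp}^{\str{T}}(v)$, invoke $\phi$-consistency to get $\alpha(x)\wedge\alpha'(y)\wedge\theta(x,y)\models\chi(x,y)$, and instantiate at $v,w$. For each conjunct $\phi_i$ and each $v$ with $\str{T}\models\lambda_i[v]$ we have $\alpha(x)\models\lambda_i(x)$, so $\phi$-consistency gives $\alpha'\in\bar{\alpha}(\eta_i)$ with $\alpha(x),\alpha'(y)\models\psi_i(x,y)$; choosing $w$ with ${\rm tp}^{\str{T}}(w)=\alpha'$ and $\str{T}\models\eta_i[v,w]$ (such $w$ exists precisely because $\alpha'\in\bar{\alpha}(\eta_i)$) gives a witness for $v$ and $\phi_i$, so $\str{T}\models\phi_i$. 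Altogether $\str{T}\models\phi$. I do not expect a genuine obstacle here: the proposition is bookkeeping, and the only points needing a moment's care are the two observations above --- that the relevant subformulas are determined by the $1$-types together with the order formula, and that $\Theta$ really partitions the ordered pairs of a tree (including the diagonal) --- both of which are immediate from the normal form and from the definitions of $\fw$ and of the ten order formulas.
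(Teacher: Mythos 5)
Your proof is correct and is exactly the definition-unfolding argument the paper has in mind: the paper declares this proposition "straightforward" and omits the proof entirely, and your two isolated observations (that $\chi$ and the $\psi_i$ are determined by the $1$-types together with the order formula, and that $\Theta$ partitions the ordered pairs of a tree) are precisely the bookkeeping needed to make both directions go through.
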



We say that a full type $\bar{\alpha}$ is \emph{combined} of two full types $\bar{\alpha}_1$ and $\bar{\alpha}_2$
if $\alpha=\alpha_1=\alpha_2$ and 
for each $\theta \in \Theta$ we have $\bar{\alpha}(\theta)=\bar{\alpha}_1(\theta)$ or 
$\bar{\alpha}(\theta)=\bar{\alpha}_2(\theta)$. Also the following fact is immediate.


\begin{proposition} \label{p:combined_types}
Let $\phi$ be a normal form \FOtall{}-formula, and let $\bar{\alpha}$ be a full type combined of two $\phi$-consistent full types $\bar{\alpha}_1, \bar{\alpha}_2$.
Then $\bar{\alpha}$ is $\phi$-consistent.
\end{proposition}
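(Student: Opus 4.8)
The plan is to verify directly that the combined full type $\bar{\alpha}=(\alpha,\dots)$ meets the two clauses in the definition of $\phi$-consistency, using only the defining properties of ``combined'': that $\alpha=\alpha_1=\alpha_2$ and that for each $\theta\in\Theta$ the set $\bar{\alpha}(\theta)$ equals $\bar{\alpha}_1(\theta)$ or $\bar{\alpha}_2(\theta)$.

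First I would check the universal clause. Fix an arbitrary $\theta\in\Theta$ and an arbitrary $\alpha'\in\bar{\alpha}(\theta)$. By the definition of being combined there is $j\in\{1,2\}$ with $\bar{\alpha}(\theta)=\bar{\alpha}_j(\theta)$, hence $\alpha'\in\bar{\alpha}_j(\theta)$. Since $\bar{\alpha}_j$ is $\phi$-consistent and its $1$-type $\alpha_j$ equals $\alpha$, the first clause of $\phi$-consistency applied to $\bar{\alpha}_j$ yields $\alpha(x)\wedge\alpha'(y)\wedge\theta(x,y)\models\chi(x,y)\wedge\chi(y,x)$. As $\theta$ and $\alpha'$ were arbitrary, the first clause holds for $\bar{\alpha}$.

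Next I would check the existential clause. Fix a conjunct $\forall x(\lambda_i(x)\Rightarrow\exists y(\eta_i(x,y)\wedge\psi_i(x,y)))$ of $\phi$ with $\alpha(x)\models\lambda_i(x)$; by the shape of the normal form $\eta_i\in\Theta$. Choose $j\in\{1,2\}$ with $\bar{\alpha}(\eta_i)=\bar{\alpha}_j(\eta_i)$. Because $\alpha_j=\alpha\models\lambda_i(x)$, the $\phi$-consistency of $\bar{\alpha}_j$ supplies a $1$-type $\alpha'\in\bar{\alpha}_j(\eta_i)=\bar{\alpha}(\eta_i)$ with $\alpha(x),\alpha'(y)\models\psi_i(x,y)$, which is exactly the witness-type required of $\bar{\alpha}$ for this conjunct. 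Since the conjunct was arbitrary, the second clause holds for $\bar{\alpha}$, and the two clauses together give that $\bar{\alpha}$ is $\phi$-consistent.

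I do not expect a genuine obstacle here: the statement is essentially an unfolding of the definitions. The only point deserving a word of care is that the combination is formed coordinate-by-coordinate over $\Theta$, so different coordinates of $\bar{\alpha}$ may be inherited from different $\bar{\alpha}_j$; but both clauses of $\phi$-consistency refer to a single coordinate at a time (one order formula $\theta$, respectively one $\eta_i$), so this mixing can never create a conflict, and picking the appropriate index $j$ per coordinate suffices.
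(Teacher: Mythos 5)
Your proof is correct and is exactly the definition-unfolding the paper has in mind (the paper simply declares the fact ``immediate'' and gives no proof). Picking the index $j$ coordinate-by-coordinate for each $\theta$, respectively each $\eta_i$, and using $\alpha=\alpha_1=\alpha_2$ is all that is needed, and you have handled the one point of care correctly.
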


\section{Finite ordered trees}

This section is devoted to a proof of the following theorem.

\begin{theorem} \label{t:finitetrees}
The satisfiability problem for \FOtall{} over finite trees is \ExpSpace-complete.
\end{theorem}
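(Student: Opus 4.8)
\noindent\emph{Upper bound, the certificate.}
The plan is to prove both directions. For the upper bound I would, after applying Lemma~\ref{l:normalform}, assume $\phi$ is in normal form over a signature with $N=2^{O(|\tau_0(\phi)|)}$ many $1$-types, and recall (Proposition~\ref{p:consistent_types}) that a tree is a model of $\phi$ iff every realised full type is $\phi$-consistent. Full types are doubly exponential, so I would instead work with \emph{reduced} full types $(\alpha,A,B,F)$, augmented at each node with the $1$-types of its parent and of its two immediate siblings (altogether $O(N)$ bits). The point is that this data still lets one check, node by node, $\phi$-consistency: for instance a $1$-type occurs \emph{strictly} above the parent iff it lies in the recorded strict-ancestor set, so one can tell apart the relativisations $\tprecv$ and $\tgreatv$ that are needed when verifying the universal conjunct $\forall xy\,\chi$. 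Satisfiability then becomes: is there a finite tree carrying such a \emph{decoration} in which every node is $\phi$-consistent and $A,B,F$ are \emph{locally correct} (a node's $A$ is its parent's $A$ plus the parent's $1$-type, its $B$ is the union of the children's descendant-sets, and $F$ collects immediate and further siblings, descendants of siblings, and genuinely free elements)?

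\smallskip
\noindent\emph{Upper bound, deciding decorability.}
I would first existentially guess the set $\Pi$ of $1$-types to be realised ($|\Pi|\le N$, exponential) and check that $\Pi$ is \emph{coherent}: every reduced type built from components taken from $\Pi$ that we intend to use is $\phi$-consistent, and each of its free- and sibling-witness demands can be met by a $1$-type from $\Pi$. A combining argument (Proposition~\ref{p:combined_types}) then shows that, for a coherent $\Pi$, free and sibling witnesses can always be supplied by grafting auxiliary branches, so it remains only to realise the vertical skeleton. For that I would compute, bottom-up, whether a reduced type is \emph{realisable} in a finite subtree for a given context (parent $1$-type, strict-ancestor set, and $\Pi$): this holds iff one can choose a sequence of children, each realisable for the updated context and seeing its in-sequence neighbours as immediate siblings, whose descendant-sets union to the required $B$ and which collectively provide all of the node's lower witnesses. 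Choosing the child sequence is itself a word-automaton problem over an alphabet of size $2^{O(N)}$, solved by scanning left to right while maintaining the exponential-size accumulated sets, and an exponential-length witnessing sequence always suffices. Realisability of the root (empty ancestor set, no parent) is then decided by an alternating procedure of exponential running time — branching existentially on a node's type and its children word, universally on the distinct child-types — with search depth bounded exponentially by a pumping argument (collapse a branch segment between two nodes with the same state owing the same lower witnesses). Since $\AExpTime=\ExpSpace$, this gives the upper bound.

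\smallskip
\noindent\emph{Lower bound.}
I would reduce from acceptance of a $2^n$-space-bounded deterministic Turing machine $M$ on an input $w$ with $|w|=n$. The formula $\phi_{M,w}$, of size polynomial in $n$ and $|M|$, would be satisfied exactly by trees encoding a halting run of $M$: a path of \emph{configuration nodes} $u_0,u_1,\dots$, each carrying a complete binary tree of depth $n$ whose $2^n$ leaves are the tape cells of the corresponding configuration, the leaf at address $\bar b\in\{0,1\}^n$ being marked (by $n$ unary predicates) with $\bar b$ and with its symbol and head information, and $n$ further unary predicates marking tree levels. Exploiting that two variables with re-binding can navigate any \emph{fixed} number $O(n)$ of navigational steps from a single variable, one writes $\mathrm{poly}(n)$-size formulas enforcing well-formed addressing, the initial configuration, and — the delicate part — that the cell at address $\bar b$ in $u_{t+1}$ carries the symbol dictated by $M$ from the cells at addresses $\bar b{-}1,\bar b,\bar b{+}1$ in $u_t$; making these ``vertically adjacent'' cells of consecutive configurations relatable by a two-variable formula forces a careful choice of layout, using the transitive relations $\lessv,\lessh$ together with free-position atoms. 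Finally one demands that some configuration be accepting. A model then yields a genuine accepting run (the path being finite), and an accepting run, which halts within $2^{2^n}$ steps, yields a finite model; hence the problem is \ExpSpace-hard. (Alternatively one may reduce from tiling of an exponentially wide corridor.)

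\smallskip
\noindent\emph{Main obstacle.}
The heart of the argument is the first step of the upper bound: designing the exponential-size decoration and the coherence conditions on $\Pi$ so that local $\phi$-consistency together with coherence of $\Pi$ is \emph{equivalent} to the existence of a real finite model. This rests on getting the lossy bookkeeping exactly right — above all the $F$-component, which conflates immediate siblings, all further siblings, descendants of siblings, and genuinely free elements — and on the accompanying grafting/combining and bounded-model arguments that discharge the free- and sibling-witness obligations without ever materialising a model. In the lower bound the analogous fiddly point is the gadget that lets a two-variable formula compare corresponding cells of consecutive configurations.
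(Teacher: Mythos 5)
Your overall architecture coincides with the paper's: bound the depth and the degree of a minimal model exponentially, then run an alternating exponential-time procedure that descends the tree guessing per-node type information and checking local consistency, and conclude via $\AExpTime=\ExpSpace$. The difference --- and the place where your argument as written breaks --- is the choice of per-node abstraction. You propose to certify $\phi$-consistency from the reduced full type $(\alpha,A,B,F)$ augmented only with the parent's and the immediate siblings' $1$-types. That is too lossy for the universal conjunct $\forall xy\,\chi(x,y)$: $\chi$ may distinguish $x\succh y$ from $x\lessh y\wedge\neg(x\succh y)$ from $x\fw y$, and $x\succv y$ from $x\lessv y\wedge\neg(x\succv y)$, whereas $F$ merges all siblings with all genuinely free elements and $B$ merges children with deeper descendants; so two trees carrying identical decorations in your sense can differ on whether $\chi$ is violated, and the claimed equivalence between decorability and satisfiability fails. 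The same loss affects witness conjuncts whose order formula is, say, $\tlessh$ rather than $\tfree$. The paper sidesteps this by having the procedure guess \emph{full} types --- which are still only exponentially many bits each, being ten subsets of the $1$-type set --- and checking $\phi$-consistency and the horizontal/vertical/free conditions on those; reduced full types are used only where the loss is harmless, namely to prove the exponential path-length bound via the monotonicity of $A,B,F$ along a $\succv$-path. Your ``grafting auxiliary branches'' step for free and sibling witnesses carries a related soundness risk: grafting enlarges the $\tfree$- and $\tlessh$-components of already-placed nodes and can thereby violate $\chi$. The paper instead obtains the degree bound purely by deleting subtrees of an actual model, so every surviving full type is either preserved or combined of two old ones and Proposition~\ref{p:combined_types} applies.

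On the lower bound, the paper does not build a machine encoding from scratch; it imports \ExpSpace-hardness from \cite{Kie02}, where it is shown already for the guarded two-variable fragment with a single transitive relation, observing that that construction works (indeed more naturally) over finite trees --- so hardness holds from \GFto{} upward. Your direct reduction from an exponentially space-bounded machine is the standard alternative, but it leaves unspecified exactly the gadget that constitutes the entire difficulty: relating the cell at address $\bar b$ of configuration $t{+}1$ to the cells at addresses $\bar b{-}1,\bar b,\bar b{+}1$ of configuration $t$ using two variables and only $\lessv,\lessh$. Until that gadget is exhibited (e.g.\ by the counter/bit-comparison scheme of \cite{Kie02}), the hardness half is a plan rather than a proof.
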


The crucial fact is that every satisfiable formula has a model of exponentially bounded depth and degree.
We prove this in two steps, and present a procedure looking for such small models,
working in alternating exponential time.


\medskip\noindent
{\bf Short paths.} First, let us see how the paths of a model can be shortened.

\begin{lemma} \label{l:surgery}
Let $\phi$ be a normal form \FOtall{} formula, $\str{T}$ its model,
and $v,w \in T$  two nodes of $\str{T}$, such that $\str{T} \models v \lessv w$ and
${\rm rftp}^\str{T}(v)={\rm rftp}^\str{T}(w)$. Then the tree $\str{T}'$, obtained from $\str{T}$
by replacing the subtree rooted at $v$ by the subtree rooted at $w$,
is a model of $\phi$. 
\end{lemma}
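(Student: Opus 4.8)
\medskip\noindent\textbf{Proof sketch.}
The approach is to invoke Proposition~\ref{p:consistent_types}: to conclude that $\str{T}' \models \varphi$ it suffices to verify that every full type realized in $\str{T}'$ is $\varphi$-consistent, and for this I would show that each such full type is either already realized in $\str{T}$, or is combined, in the sense preceding Proposition~\ref{p:combined_types}, of two full types realized in $\str{T}$; $\varphi$-consistency then follows from Propositions~\ref{p:consistent_types} and~\ref{p:combined_types}. Abbreviate ${\rm rftp}^{\str{T}}(v)={\rm rftp}^{\str{T}}(w)$ as $(\alpha,A,B,F)$, and recall that $\alpha$ is the $1$-type of $v$ and of $w$, that $B$ is the (common) set of $1$-types realized strictly below $v$ and strictly below $w$, and that $F$ is the (common) set of $1$-types of the siblings of, and of the nodes in free position to, $v$, and likewise of $w$. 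Two consequences will be used throughout: since $w$ is a proper descendant of $v$, the subtrees rooted at $v$ and at $w$ realize exactly the same set of $1$-types (namely $B\cup\{\alpha\}$); and in $\str{T}'$ the subtree rooted at $w$ occupies the former slot of the subtree rooted at $v$, while the set $D$ of nodes deleted by the surgery is nonempty, since it contains $v$.

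I would then inspect the nodes of $\str{T}'$ in three groups. For a node $u$ outside the subtree rooted at $v$, its ancestors, siblings and horizontal relatives are untouched, and (distinguishing whether $u$ is an ancestor of $v$, a sibling of $v$, or in free position to $v$, and using that the subtrees rooted at $v$ and at $w$ realize the same $1$-types) one checks that the sets of $1$-types realized below $u$ and in free position to $u$ do not change, so ${\rm ftp}^{\str{T}'}(u)={\rm ftp}^{\str{T}}(u)$, which is $\varphi$-consistent. For the node $w$: in $\str{T}'$ it keeps its entire old subtree, so its $\tsuccv$- and $\tlessv$-components agree with those of ${\rm ftp}^{\str{T}}(w)$, whereas it takes over the parent, the strict ancestors, the siblings and the free-position set of $v$, so all its remaining components agree with those of ${\rm ftp}^{\str{T}}(v)$; since $v$ and $w$ bear the same $1$-type $\alpha$, this exhibits ${\rm ftp}^{\str{T}'}(w)$ as combined of ${\rm ftp}^{\str{T}}(v)$ and ${\rm ftp}^{\str{T}}(w)$, hence $\varphi$-consistent by Proposition~\ref{p:combined_types}.

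The remaining group, the proper descendants $x$ of $w$, is where I expect the real work to lie. Below $x$, and among its siblings, nothing changes; the ancestors of $x$ inside the subtree rooted at $w$ are untouched, and the set of $1$-types of the ancestors of $w$ is $A$ both in $\str{T}$ and in $\str{T}'$ (in $\str{T}'$ the ancestors of $w$ are exactly those of $v$, whose $1$-types again form $A$), so the vertical components of ${\rm ftp}^{\str{T}'}(x)$ agree with those of ${\rm ftp}^{\str{T}}(x)$. The one component that can move is $\tfree$: the surgery removes from the free-position set of $x$ precisely the ``cousins'' of $x$ that lie in $D$. The main obstacle is to show that this loses no $1$-type. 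I would argue that every such cousin sits in a side subtree hanging off a node on the path from $v$ to $w$, so it is not comparable to $w$ and is a sibling of $w$ only when it is a side child of the parent of $w$; in each case its $1$-type belongs to $F$. Since the $1$-types contributed to the free-position set of $x$ by nodes outside the subtree rooted at $v$ are exactly those of $F$, in both $\str{T}$ and $\str{T}'$, deleting $D$ removes no $1$-type, and ${\rm ftp}^{\str{T}'}(x)={\rm ftp}^{\str{T}}(x)$. This is exactly the place where the assumption ${\rm rftp}^{\str{T}}(v)={\rm rftp}^{\str{T}}(w)$ is used in full strength: it guarantees that no node of $\str{T}'$ is left without a free witness demanded by $\varphi$. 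Having treated all three groups, Proposition~\ref{p:consistent_types} yields $\str{T}'\models\varphi$. \epr
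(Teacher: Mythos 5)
Your proposal is correct and follows the same route as the paper's (very terse) proof: show that every node other than $w$ keeps its full type, that $w$'s new full type is combined of ${\rm ftp}^{\str{T}}(v)$ and ${\rm ftp}^{\str{T}}(w)$, and conclude via Propositions~\ref{p:consistent_types} and~\ref{p:combined_types}. You merely supply the case analysis that the paper dismisses with ``it can be verified,'' and your identification of the $\tfree$-component of the descendants of $w$ (handled via $F$) as the delicate point is accurate.
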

\begin{proof}
It can be verified that for every $u \in T'$, if $u {\not=} w$ then ${\rm ftp}^{\str{T}'}(u) = {\rm ftp}^{\str{T}}(u)$, and that 
${\rm ftp}^{\str{T}'}(w)$ is combined of ${\rm ftp}^{\str{T}}(v)$ and 
${\rm ftp}^{\str{T}}(w)$. Thus, by Propositions \ref{p:consistent_types} and \ref{p:combined_types},  all types realized in $\str{T}'$ are $\phi$-consistent, and $\str{T}' \models \phi$ by Proposition \ref{p:consistent_types}. 
\epr
\end{proof}

Using the above lemma we can successively shorten $\succv$-paths in a model of a normal form 
formula $\phi$ obtaining after a finite number of steps a model of $\phi$ in which on every path only distinct reduced full types
are realized.  Even though there are potentially
doubly exponentially many reduced full types it can be shown that such a model has exponentially bounded $\lessv$-paths.

\begin{lemma} \label{l:shortpaths}
Let $\phi$ be a normal form \FOtall{} formula satisfied in a finite tree. Then there exists a tree model of $\phi$
whose every $\succv$-path has length bounded by $3 \cdot (2^{2\cdot |\tau_0(\phi)|})$, exponentially in $|\phi|$. 
\end{lemma}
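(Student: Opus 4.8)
The plan is to carry out exactly the two steps announced before the statement. First, given a finite model $\str{T}\models\phi$, I repeatedly apply Lemma~\ref{l:surgery}: as long as there exist nodes $v,w$ on a common path with $\str{T}\models v\lessv w$ and ${\rm rftp}^{\str{T}}(v)={\rm rftp}^{\str{T}}(w)$, I replace the subtree rooted at $v$ by the subtree rooted at $w$. By Lemma~\ref{l:surgery} the result is again a model of $\phi$, and since $w$ was a \emph{proper} descendant of $v$ the operation strictly decreases $|T|$; hence after finitely many steps I reach a finite model $\str{T}^*\models\phi$ in which no two nodes on a common root-to-leaf path carry the same reduced full type. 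It then remains to bound the lengths of the $\succv$-paths of $\str{T}^*$.

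Fix a root-to-leaf path $v_0,\dots,v_n$ of $\str{T}^*$ and write ${\rm rftp}^{\str{T}^*}(v_i)=(\alpha_i,A_i,B_i,F_i)$. The key point is that the three set-valued components behave monotonically as $i$ increases: $A_i$ is non-decreasing, $B_i$ is non-increasing, and $F_i$ is non-decreasing. For $A$ and $B$ this is immediate, since going down the path only enlarges the set of ancestors and only shrinks the set of descendants. For $F$ the clean way to see it is to observe that $F_i$ is precisely the set of $1$-types realized by the elements that are neither $v_i$, nor an ancestor, nor a descendant of $v_i$ --- equivalently, by the elements lying outside the union of the path from the root to $v_i$ and the subtree rooted at $v_i$. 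When $v_i\lessv v_j$, this union for $v_j$ is contained in the corresponding union for $v_i$: the additional path nodes $v_{i+1},\dots,v_j$ lie inside the subtree of $v_i$, and the subtree of $v_j$ is contained in that of $v_i$. Hence the complement can only grow, so $F_i\subseteq F_j$.

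The bound now follows by counting. Let $N=2^{|\tau_0(\phi)|}$, an upper bound on the number of $1$-types, so each of $A_i,B_i,F_i$ is a subset of a set of size at most $N$. Since $A$ starts at $\emptyset$ and only grows, $F$ starts at $\emptyset$ and only grows, and $B$ ends at $\emptyset$ and only shrinks, the triple $(A_i,B_i,F_i)$ changes at most $3N$ times along the path, which splits $v_0,\dots,v_n$ into at most $3N+1$ maximal segments on which $(A_i,B_i,F_i)$ is constant; moreover the initial segment consists of the root alone, since $A_i=\emptyset$ forces $i=0$. Within any other segment the reduced full types differ only in the $1$-type component $\alpha_i$, and along a path they are pairwise distinct, so such a segment contains at most $N$ nodes. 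Therefore the path has at most $1+3N\cdot N$ nodes, i.e.\ at most $3\cdot 2^{2\cdot|\tau_0(\phi)|}$ edges, which is exponential in $|\phi|$; and since every $\succv$-path is a subpath of some root-to-leaf path, this bounds all $\succv$-paths of $\str{T}^*$.

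The only step that needs genuine care is the monotonicity of the $F$-component. Once one reformulates $F_i$ as ``the $1$-types occurring outside the path to $v_i$ together with the subtree of $v_i$'' the claim is transparent, but a direct argument must verify, via a small case analysis, that every sibling of $v_i$ and every element in free position to $v_i$ remains in free position with respect to any descendant of $v_i$ on the same path --- in particular that the siblings of a deeper node $v_j$ all lie inside the subtree of $v_i$ and hence cannot spoil $F_j\supseteq F_i$. Everything else is routine bookkeeping on top of Lemmas~\ref{l:normalform} and~\ref{l:surgery} and Propositions~\ref{p:consistent_types}--\ref{p:combined_types}.
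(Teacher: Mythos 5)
Your proof is correct and follows essentially the same route as the paper: iterate Lemma~\ref{l:surgery} until every root-to-leaf path realizes pairwise distinct reduced full types, then exploit the monotonicity of the components $A$, $B$, $F$ along a path to bound its length by $3\cdot 2^{2|\tau_0(\phi)|}$. The only differences are that you spell out the monotonicity of the $F$-component and the termination of the surgery process, which the paper merely asserts.
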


\begin{proof}
Let $\str{T} \models \phi$ be a tree in which on every $\succv$-path only distinct full types are realized
and let $v_1, v_2, \ldots, v_k$ be a $\succv$-path in $\str{T}$. Observe that the sets 
$A, B, F$ in reduced full types of $v_i$ behave monotonically. More precisely, if $(\alpha_i, A_i, B_i, F_i)$ is
the reduced full type realized by $v_i$, for $1 \le i \le k$, then for $i<j$ we have 
$A_i \subseteq A_j$, $B_i \supseteq B_j$ and $F_i \subseteq F_j$.
Thus along the path each of the sets $A, B, F$ is modified at most $2^{|\tau_0(\phi)|}$ times (since this is the number of possible $1$-types). 
The number of reduced full types with fixed $A, B, F$ is equal to the number of $1$-types, so the length of each path is
bounded as required. 
\epr
\end{proof}

\noindent
{\bf Small degree.} Now we observe that to provide all witnesses for $\forall \exists$ conjuncts of $\phi$
we only need nodes with at most exponential degree.

\begin{lemma} \label{l:narrowtrees}
Let $\phi$ be a normal form \FOtall{} formula and let $\str{T} \models \phi$. Then there exists a model
$\str{T}' \models \phi$ 
in which the number of successors of each node is bounded by $4 \cdot 2^{2\cdot|\tau_0(\phi)|}$.
Moreover 
$\str{T}'$ can be obtained by removing from $\str{T}$ some number of elements (together with the subtrees rooted at them). 
\end{lemma}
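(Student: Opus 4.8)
The plan is to obtain $\str{T}'$ from $\str{T}$ by repeatedly deleting a child of some node together with the whole subtree rooted at it; in the resulting structure $\succv$ is inherited while $\succh$ (and hence $\lessh$) is recomputed as the ``immediate surviving right sibling'' relation. Such a deletion never changes the $1$-type of a surviving node, never creates or destroys a $\succv$-, $\lessv$- or free relation between two surviving nodes, and only contracts right-sibling chains. Consequently each surviving node keeps all of its ancestors together with their $1$-types, so all of its \emph{upper} witnesses survive; the vertical and free parts of each neighbour-set $\bar{\alpha}(\theta)$ only shrink, so $\forall xy\,\chi$ can never be violated by a vertical or a free pair; and only the horizontal (sibling) neighbourhoods are affected. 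By Proposition~\ref{p:consistent_types} it therefore remains, throughout the construction, to keep re-supplying the \emph{lower}, \emph{free} and sibling witnesses and to keep the contracted sibling chains $\chi$-consistent.

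The pruning is a single top-down sweep. Along it, each node $v$ that we keep carries a set of \emph{obligations}: $1$-types $\alpha'$ that the pruned subtree of $v$ is required to realize at a proper descendant of $v$. Initially $v$ carries $\alpha'$ whenever some conjunct $\phi_i$ with $\str{T}\models\lambda_i[v]$ has, in $\str{T}$, a proper-descendant witness of $1$-type $\alpha'$; there are at most $2^{|\tau_0(\phi)|}$ of these. When the sweep reaches a kept node $v$ with children $c_1<\dots<c_m$, we keep a subset of them so that: (a) for every conjunct requiring a \emph{child} witness of $v$ we keep one child of a suitable $1$-type; (b) for every $1$-type $\alpha'$ realized somewhere in $\bigcup_j(\text{subtree of }c_j)$ we keep \emph{two} children whose subtrees realize $\alpha'$ (just one, if only one such child exists) and we pass the obligation ``realize $\alpha'$'' to each of them --- keeping two copies is what guarantees that a surviving node in one kept child-subtree still sees, in a \emph{different} kept child-subtree, a node of type $\alpha'$ in free position to it, and (b) also discharges $v$'s own obligations and its proper-descendant witnesses; (c) finally the whole sequence $c_1,\dots,c_m$ is treated as a word over the alphabet of $1$-types and shortened, in the spirit of Lemmas~\ref{l:surgery} and~\ref{l:shortpaths} and of the analysis of two-variable logic over words, so that every surviving child obtains its $\succh$-, $\lessh$- and inverse-sibling witnesses and no surviving pair of siblings violates $\chi$, while the children picked in (a)--(b) are kept as mandatory positions. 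Since obligations handed downwards never mention new $1$-types, they range over a fixed set of size $\le 2^{|\tau_0(\phi)|}$ and the propagation terminates at the leaves; counting the children kept in (a)--(b) together with those added by the shortening in (c) keeps the degree below $4\cdot 2^{2|\tau_0(\phi)|}$, the quadratic factor coming from step (c).

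That $\str{T}'\models\phi$ is then read off from Proposition~\ref{p:consistent_types}. Upper witnesses survive for free. A child or proper-descendant witness of a surviving node survives because the required $1$-type is kept realized in --- and the matching obligation is propagated down through --- a kept child-subtree until discharged. A free witness $w$ of a surviving node $v$, originally sitting in a child-subtree of the lowest common ancestor $z$ of $v$ and $w$ (which is kept, being an ancestor of $v$), survives because rule (b) applied at $z$ keeps a node of type ${\rm tp}^{\str{T}}(w)$ alive in a kept child-subtree of $z$ different from the one containing $v$; such a node is in free position to $v$, and, since $\phi$ is in normal form, $\psi_i$ depends only on the $1$-types of its two arguments, so it holds of $v$ and the new witness exactly as it held of $v$ and $w$. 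The horizontal witnesses are provided by step (c). I expect step (c) and its interplay with (a)--(b) to be the main obstacle: deleting a child contracts the right-sibling chain, so two siblings that were in $\lessh$-position in $\str{T}$ may end up in $\succh$-position in $\str{T}'$, and one must shorten the children word so that no such pair violates $\chi$ and so that every surviving child retains its immediate-sibling witness, all without deleting the (possibly many) children that are needed for vertical reasons --- this is where most of the care goes and where the quadratic factor in the degree bound originates.
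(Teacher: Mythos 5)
Your construction is essentially the paper's: at each node you prune the children down to a marked set containing witnesses for the parent's vertical requirements, two children whose subtrees realize each $1$-type occurring below (exactly the paper's $\min(2,|U_\alpha|)$ rule, which secures the free witnesses), plus extra children to repair the right-sibling word, yielding the same $4\cdot 2^{2\cdot|\tau_0(\phi)|}$ bound with the quadratic factor coming from the horizontal repair. The only place you are more schematic than the paper is your step (c): there the paper concretely keeps, for each $1$-type, the $\lessh$-minimal and $\lessh$-maximal child realizing it (so the $\tlessh$ and $\tgreath$ sets of kept children are preserved) and, in each gap between marked children, repeatedly marks the $\lessh$-maximal element carrying the $1$-type of the current $\succh$-successor, which makes every kept child's full type either retained from or combined of two full types of the original tree.
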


\begin{proof}
We show first how to decrease the degree of a single node of $\str{T}$. 
Let $v$ be a node of $\str{T}$ of full type $\bar{\alpha}_v$, and let $U$ be the set of the children of $v$.
For every element $u \in U$ let $\bar{\alpha}_u$ be its full type. 
We are going to mark some important elements of $U$ and then remove all subtrees rooted at unmarked ones producing a model $\str{T}''' \models \phi$.
First, for every $1$-type $\alpha$, if $\alpha$ is realized in $U$ precisely once then mark this realisation; if $\alpha$ is realized more than once then mark the minimal and the maximal (with respect to $\lessh$) realisations of $\alpha$. 
Further, for every $1$-type $\alpha$, let $U_\alpha=\{u \in U \ \rvert \  \alpha \in \bar{\alpha}_u(\tsuccv) \cup \bar{\alpha}_u(\tlessv)$. 
For each $\alpha$ mark $\mbox{min}(2, |U_\alpha|)$ elements of $U_\alpha$. Note that so far we have marked at most $4 \cdot |2^{\tau_0(\phi)}|$ elements of $U$.
Assume that these (listed according to $\lessh$) are: $u_1, \ldots, u_k$. We call them \emph{primarily marked} elements, and denote their set by $U_P$.

Consider the tree $\str{T}''$ obtained from $\str{T}$ by removing the subtrees rooted at elements of $U \setminus U_P$. 
It can be verified that elements from $T'' \setminus U_P$ retain in $\str{T}''$ their full types from $\str{T}$. 
Unfortunately, the $\succh$-connections among the elements of $U_P$ in $\str{T}''$ may be inconsistent with $\phi$.
To fix this problem we mark some additional elements of $U$ (at most exponentially many) between  $u_i$ and $u_{i+1}$, for all $i$.\footnote{Actually, this fragment of the construction
combined with some earlier parts, reproduces the small model theorem for \FOt{} over words.}
For every $i$, consider the $\succh$-chain $C$ of elements of $\str{T}$ between $u_i$ and $u_{i+1}$. If $C$ is empty then $u_{i+1}$ is $\succh$-successor of $u_i$ and
there is nothing to do. Otherwise, let $\alpha$ be the $1$-type of the successor $w$ of $u_i$. Find the maximal (with respect to $\lessh$) element $w'$ of type $\alpha$ in $C$,
and mark it. The elements between $u_i$ and $w'$ will never be marked, so $w'$ will become the $\succh$-successor of $u_i$ in the final model $\str{T}'''$.
Thus, $u_i$ will retain in its full type its $\bar{\alpha}_{u_i}(\tsucch)$ (singleton) set, and, due to our strategy of primarily marking maximal realisations of $1$-types, also its $\bar{\alpha}_{u_i}(\tlessh)$ set. This is not necessarily true for $w'$ and its (singleton) $\bar{\alpha}(\tprech)$ set, and $\bar{\alpha}_{w'}(\tgreath)$ set. However, 
these sets will be equal, respectively, to  $\bar{\alpha}_w(\tprech)$, and $\bar{\alpha}_w(\tgreath)$ sets of $w$, which means that the full type of $w'$ in $\str{T}'''$ will be combined of two full types
(of $w$ and $w'$) from $\str{T}$. We proceed recursively with the $\succh$-chain of elements between $w'$ and $u_{i+1}$. 

Note that the number of elements between $u_i$ and $u_{i+1}$ which are marked during this process is bounded by the number of $1$-types. Thus we mark
in total  at most
$4 \cdot 2^{|\tau_0(\phi)|} \cdot 2^{|\tau_0(\phi)|}$ elements of $U$, as required in the statement of this lemma. Let us denote the set of the marked elements $U_M$. 
We construct $\str{T}'''$ by removing from $\str{T}$ all subtrees rooted at elements of $U \setminus U_M$. 
It can be verified that all elements from $T''' \setminus U_M$ retain their full types from $\str{T}$,
and that the full types of elements from  $U_M$ in $\str{T}'''$ are either retained from $\str{T}$ 
or are combined of pairs of full types in $\str{T}$ of elements
from $U$. By Proposition \ref{p:consistent_types} we have that $\str{T}''' \models \phi$. 

The desired model $\str{T}'$ can be  obtained by applying the described procedure in depth-first manner.
\epr
\end{proof}

\medskip\noindent
{\bf Alternating procedure and complexity.} We are ready to design a procedure checking if a given \FOtall{} formula $\phi$ has a
finite tree model. By Lemma \ref{l:normalform} we may assume that $\phi$ is in normal form. 
By Lemma \ref{l:shortpaths} and Lemma \ref{l:narrowtrees} we may restrict our attention to models in
which the length of each path and the degree of each node are bounded exponentially in $|\phi|$. 
We present an alternating procedure
 working in exponential time. 
This justifies that the problem is in \ExpSpace{} since, by \cite{CKS81}, \ExpSpace=\AExpTime{}. The procedure first guesses the full type
of the root and then guesses the full types of its children, checking if the information recorded in the full types is locally consistent, and if each
full type is $\phi$-consistent. Further, it works in a loop, universally choosing one of the types of the children and proceeding similarly.

\medskip\noindent
{\bf Procedure} {\tt {FO$^2$[$\succv, \lessv, \succh, \lessh$]-sat-test}}\\
{\bf input:} an \FOtall{} normal form formula $\phi$
\begin{itemize}\itemsep0pt

\item let $maxdepth:=3 \cdot |2^{\tau_0(\phi)}|^2$; let $maxdegree:=4 \cdot 2^{2\cdot|\tau_0(\phi)|}$; 
\item let $level:=0$;
\item {\bf guess} a full type $\bar{\alpha}$ such that $\bar{\alpha}(\tprecv)=\bar{\alpha}(\tgreatv)=\bar{\alpha}(\tsucch)=\bar{\alpha}(\tlessh)=
\bar{\alpha}(\tprech)=\bar{\alpha}(\tgreath)=\bar{\alpha}(\tfree)=\emptyset$; 
\item {\bf while} $level < maxdepth$ {\bf do}
\item \hspace*{20pt} if $\bar{\alpha}$ is not $\phi$-consistent then {\bf reject}
\item \hspace*{20pt} if $\bar{\alpha}(\tsuccv) \cup \bar{\alpha}(\tlessv) = \emptyset$ then {\bf accept}
\item \hspace*{20pt} {\bf guess} an integer $1 \le  k \le maxdegree$;
\item \hspace*{20pt} for $1 \le i \le k$ {\bf guess} a full type type $\bar{\alpha}_i$;
\item \hspace*{20pt} if not $locally$-$consistent(\bar{\alpha}, \bar{\alpha}_1, \ldots, \bar{\alpha}_k)$ then {\bf reject};
\item \hspace*{20pt} $level:= level + 1$;
\item \hspace*{20pt} {\bf universally choose} $1 \le i \le k$; let $\bar{\alpha}=\bar{\alpha}_i$;
\item {\bf endwhile}
\item {\bf reject}

\end{itemize}
The function $locally$-$consistent$  checks whether, from a local point of view,
a tree may have a node of full type $\bar{\alpha}$ whose children,
listed from left to right, have full types $\bar{\alpha}_1, \ldots, \bar{\alpha}_k$. 
Namely, it returns {\bf true} if and only if all  of the following conditions hold:

\medskip
\noindent{\em Horizontal conditions:}\\
(h1) $\bar{\alpha}_i(\tprech)=\{\alpha_{i-1}\}$ for $i>1$;  $\bar{\alpha}_1(\tprech)=\emptyset$;\\
(h2) $\bar{\alpha}_i(\tsucch)=\{\alpha_{i+1}\}$ for $i<k$;  $\bar{\alpha}_k(\tsucch)=\emptyset$;\\
(h3) $\bar{\alpha}_i(\tgreath)=\bar{\alpha}_{i-1}(\tprech) \cup \bar{\alpha}_{i-1}(\tgreath)$ for $i>1$; $\bar{\alpha}_1(\tgreath)=\emptyset$;\\
(h4) $\bar{\alpha}_i(\tlessh)=\bar{\alpha}_{i+1}(\tsucch) \cup \bar{\alpha}_{i+1}(\tlessh)$ for $i<k$; $\bar{\alpha}_k(\tlessh)=\emptyset$;\\

\noindent{\em Vertical conditions:}\\
(v1) $\bar{\alpha}(\tsuccv)=\{\alpha_1, \ldots, \alpha_k \}$;\\
(v2) $\bar{\alpha}_i(\tprecv)=\{\alpha \}$ for $1 \le i \le k$;\\
(v3) $\bar{\alpha}(\tlessv)=\bigcup_{1 \le i \le k} (\bar{\alpha}_i(\tsuccv) \cup \bar{\alpha}_i(\tlessv))$;\\
(v4) $\bar{\alpha}_i(\tgreatv)=\bar{\alpha}(\tprecv) \cup \bar{\alpha}(\tgreatv)$ for $1 \le i \le k$;\\

\noindent{\em Free conditions:}\\
(f1) $\bar{\alpha}_i(\tfree)=\bigcup_{j {\not=} i} (\bar{\alpha}_j(\tsuccv) \cup \bar{\alpha}_j(\tlessv)) \cup \bar{\alpha}(\tgreath) \cup \bar{\alpha}(\tprech) \cup \bar{\alpha}(\tsucch) \cup \bar{\alpha}(\tlessh) \cup \bar{\alpha}(\tfree)  $ for $1 \le i \le k$. 

\begin{lemma} \label{l:procedure}
Procedure {\tt {FO$^2$[$\succv, \lessv, \succh, \lessh$]-sat-test}} accepts its input $\phi$ if and only if $\phi$ is satisfied in a finite tree.
\end{lemma}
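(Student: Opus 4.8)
The plan is to prove the two directions of the equivalence separately, in each case passing through the small-model normal form established by Lemmas~\ref{l:shortpaths} and~\ref{l:narrowtrees} and the type-theoretic characterization of Propositions~\ref{p:consistent_types} and~\ref{p:combined_types}.

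\emph{Soundness (if the procedure accepts, then $\phi$ has a finite tree model).} Fix an accepting computation. An accepting computation of an alternating procedure is a tree: at each existential configuration one child, at each universal configuration (the line {\bf universally choose} $1 \le i \le k$) all $k$ children. I would read this computation tree directly as a candidate tree model $\str{T}$: each node of the computation tree at which a full type $\bar\alpha$ is current becomes a node of $\str{T}$ carrying the $1$-type $\alpha$, and its children in $\str{T}$ are the $k$ guessed children $\bar\alpha_1,\dots,\bar\alpha_k$ in left-to-right order (when the {\tt accept} branch is taken, $\bar\alpha(\tsuccv)\cup\bar\alpha(\tlessv)=\emptyset$, so the node is a leaf, which is consistent). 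Since no branch reaches the final {\bf reject}, every branch ends in {\tt accept}, so $\str{T}$ is finite, and along it $level$ never exceeds $maxdepth$, so paths are bounded. The key verification is that the full type of each node $v$ of $\str{T}$, \emph{computed in the actual tree $\str{T}$}, coincides with the guessed $\bar\alpha$. This is exactly what the $locally$-$consistent$ predicate is designed to guarantee: conditions (h1)--(h4) propagate the horizontal ($\tsucch,\tprech,\tlessh,\tgreath$) components correctly among siblings; (v1)--(v4) link a node's $\tsuccv/\tlessv$ and $\tprecv/\tgreatv$ components with those of its parent and children; (f1) assembles the $\tfree$ component from the sibling subtrees and the inherited horizontal/free sets of the parent. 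A straightforward induction (downward for the ``below'' sets $\tsuccv,\tlessv$ using the recursive structure, upward/outward for $\tprecv,\tgreatv,\tprech,\tsucch,\tlessh,\tgreath,\tfree$) shows ${\rm ftp}^{\str{T}}(v)=\bar\alpha$ for all $v$; the root case uses the emptiness constraints guessed in the third line. Then, since the procedure only fails to {\bf reject} when each such $\bar\alpha$ is $\phi$-consistent, Proposition~\ref{p:consistent_types} gives $\str{T}\models\phi$.

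\emph{Completeness (if $\phi$ has a finite tree model, the procedure accepts).} Suppose $\phi$ is satisfied in a finite tree. By Lemmas~\ref{l:shortpaths} and~\ref{l:narrowtrees} there is a model $\str{T}\models\phi$ in which every $\succv$-path has length at most $maxdepth$ and every node has at most $maxdegree$ children (the bound $3\cdot(2^{2|\tau_0(\phi)|})$ of Lemma~\ref{l:shortpaths} is $\le maxdepth = 3\cdot|2^{\tau_0(\phi)}|^2$, and the degree bound is literally $maxdegree$). I would then describe the accepting computation obtained by letting the procedure ``follow'' $\str{T}$: guess for the root its full type ${\rm ftp}^{\str{T}}(\Root)$ --- which satisfies the required emptiness conditions since the root has no parent, no siblings, and nothing in free position --- and at each step, with current node $v$ and its true full type, let $k$ be the number of children of $v$ in $\str{T}$ and guess $\bar\alpha_i = {\rm ftp}^{\str{T}}(v_i)$ for the children $v_1,\dots,v_k$ of $v$ listed left to right. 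Because these are genuine full types realized in $\str{T}$, they satisfy all horizontal, vertical and free conditions, so $locally$-$consistent$ returns {\bf true}; because $\str{T}\models\phi$, Proposition~\ref{p:consistent_types} says each is $\phi$-consistent, so the first {\tt if} never rejects; whenever $v$ is a leaf, $\bar\alpha(\tsuccv)\cup\bar\alpha(\tlessv)=\emptyset$ and the branch {\tt accept}s before exceeding $maxdepth$; and the $maxdegree$ bound makes the guessed $k$ legal. Under every universal choice the argument recurses on a child, so all branches accept.

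\emph{Main obstacle.} The routine parts are the small-model reductions (already done) and checking that genuine full types satisfy $locally$-$consistent$. The real content is the soundness direction: verifying that when the procedure weaves together locally consistent full types along an accepting computation, the \emph{global} full type each node actually realizes in the resulting tree is the one that was guessed --- in particular that the monotone ``below'' sets accumulated through the subtree, and the ``above/free/horizontal'' sets inherited from above, match the guesses exactly. This is the step where one must argue carefully that conditions (v3), (v1), (f1) etc. really do capture transitive-closure behaviour (e.g. $\tlessv$ of $v$ is the union over children of their $\tsuccv\cup\tlessv$), rather than merely being necessary; I would carry it out by induction on subtree structure, and this is where I would spend the bulk of the proof.
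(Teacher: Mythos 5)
Your proposal is correct and follows essentially the same route as the paper: completeness by letting the procedure guess the genuine full types of the small model guaranteed by Lemmas~\ref{l:shortpaths} and~\ref{l:narrowtrees}, and soundness by reading the accepting run as a tree, arguing that $locally$-$consistent$ forces the realized full types to equal the guessed ones, and concluding via Proposition~\ref{p:consistent_types}. The induction you flag as the main obstacle is exactly the step the paper's proof compresses into a single sentence, so your more detailed treatment is, if anything, more careful than the original.
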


A matching \ExpSpace-lower bound follows from \cite{Kie02}, where it was shown  that a restricted variant of the two-variable guarded fragment with some unary predicates and
a single binary  predicate that is interpreted as a transitive relation is \ExpSpace-hard. It is not hard to see that the proof presented there works
fine (actually, it is even more natural) if we restrict the class of admissible structures to (finite) trees. Thus we get the following corollary.

\begin{corollary}
Over finite trees the satisfiability problem for each logic between  \GFto{} and \FOtall{} is \ExpSpace-complete.
\end{corollary}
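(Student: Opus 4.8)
The plan is to obtain the corollary by sandwiching: every logic $\mathcal{L}$ in the stated range satisfies $\GFto \subseteq \mathcal{L} \subseteq \FOtall$ as a set of formulas, so it suffices to prove the \ExpSpace{} upper bound for the largest logic, \FOtall{} (already established in Theorem~\ref{t:finitetrees}), and the \ExpSpace{} lower bound for the smallest one, \GFto{}. The upper bound transfers downward by monotonicity of finite-tree satisfiability under syntactic inclusion: given $\phi \in \mathcal{L}$ over a binary signature $\tau_{bin} \subseteq \{\succv,\lessv,\succh,\lessh\}$, any finite $\tau_{bin}$-tree satisfying $\phi$ expands to a finite tree over the full navigational signature by adding the canonical interpretations of the missing predicates (and conversely by restriction), and since $\phi$ does not mention those predicates this does not affect $\str{T}\models\phi$; hence $\phi$ is satisfiable over a finite tree iff it is so read as a \FOtall{} formula, which by Theorem~\ref{t:finitetrees} is decidable in \ExpSpace{}.

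For the lower bound I would invoke the \ExpSpace-hardness proof of~\cite{Kie02}. That proof produces, for an alternating exponential-time (equivalently, exponential-space) Turing machine $M$ and input $w$, a polynomially computable formula $\phi_{M,w}$ of the two-variable guarded fragment using a single binary symbol $T$ that occurs only in guards and is assumed transitive, such that $\phi_{M,w}$ is satisfiable iff $M$ accepts $w$. The key point is that the intended model of $\phi_{M,w}$ — the computation tree of $M$, with each configuration laid out over exponentially many tape cells addressed by a block of bit-predicates — is already a finite tree ordered by the descendant relation, so $T$ may be taken to be a \emph{tree order}. Reinterpreting $T$ as $\lessv$ turns $\phi_{M,w}$ into a \GFto{} formula, which then has a finite tree model iff $M$ accepts $w$ (accepting computation trees of $M$ being finite). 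This yields \ExpSpace-hardness of finite-tree satisfiability for \GFto{}, hence for every $\mathcal{L}$ above it; combined with the upper bound, every such $\mathcal{L}$ is \ExpSpace-complete.

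The step I expect to require the most care is verifying that the \cite{Kie02} construction really can be run over finite trees, i.e.\ that no conjunct of $\phi_{M,w}$ forces $T$ to behave in a way incompatible with a tree order — for instance forcing two $T$-incomparable elements to share a common $T$-successor, which a tree forbids, since such a successor would have two incomparable ancestors. One has to inspect the gadgets of that reduction (the grid-like layout of a configuration, and the links between successive or between branching configurations) and confirm that every $T$-edge it enforces points from an element to one of its descendants in the intended tree, so that the relation used there is genuinely a forest order. The remark in the excerpt that the adaptation is ``even more natural'' over trees reflects exactly this; once it is checked, the remainder is the routine monotonicity argument sketched above.
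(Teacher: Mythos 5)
Your proposal matches the paper's argument: the upper bound is inherited from Theorem~\ref{t:finitetrees} by syntactic inclusion, and the lower bound comes from observing that the \ExpSpace-hardness reduction of \cite{Kie02} for \GFt{} with a single transitive relation goes through (indeed, more naturally) when that relation is required to be the descendant order $\lessv$ of a finite tree. The point you flag as needing care — checking that the intended models of the \cite{Kie02} formulas are genuinely tree-ordered — is exactly what the paper asserts with its "it is not hard to see" remark, so no further comment is needed.
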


\section{Singular finite trees}

We start this section with establishing the complexity of \FOto{}.

\begin{theorem} \label{t:fotsing}
The satisfiability problem for \FOto{} over finite singular trees is \NExpTime-complete.
\end{theorem}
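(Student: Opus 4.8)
The plan is to establish an exponential‑size (singular) model property and then guess and check. The decisive feature of singular trees is that only $|\tau_0(\phi)|$ distinct $1$‑types occur, i.e.\ \emph{polynomially} many. Since the only binary symbol is $\lessv$, the full type of a node collapses to its reduced full type $(\alpha,A,B,F)$, so there are only $2^{\mathrm{poly}(|\phi|)}$ of them; moreover, along any root‑to‑leaf path the sets $A,B,F$ are monotone (exactly as in the proof of Lemma~\ref{l:shortpaths}: $A$ grows, $B$ shrinks, $F$ grows), and now each lives inside a universe of only $|\tau_0(\phi)|$ $1$‑types, so at most $O(|\tau_0(\phi)|^2)$ distinct (reduced) full types appear on one path. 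Repeatedly performing the surgery of Lemma~\ref{l:surgery} (replace the subtree at $v$ by the subtree at a deeper node $w$ with the same reduced full type) then yields a model of \emph{polynomial depth}. A stripped‑down version of Lemma~\ref{l:narrowtrees} -- simpler than the original because there is no sibling order to repair -- shows that for each $1$‑type it suffices to keep two children whose subtrees realise it (this preserves every lower witness of a node and every free witness of a node sitting in a sibling subtree), so the out‑degree can be cut to $O(|\tau_0(\phi)|)$. Altogether one obtains a singular model with $O(|\tau_0(\phi)|)^{O(|\tau_0(\phi)|^2)}=2^{\mathrm{poly}(|\phi|)}$ nodes; guessing such a tree and verifying $\phi$ on it (in time polynomial in its size, hence $2^{\mathrm{poly}(|\phi|)}$) places the problem in \NExpTime.

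\textbf{Lower bound.} For \NExpTime‑hardness I would reduce from a standard \NExpTime‑complete problem, e.g.\ the existence of a correct tiling of the $2^n\times 2^n$ grid (equivalently, acceptance by an exponential‑time nondeterministic Turing machine). The idea is to force a complete binary tree of depth $2n$, whose $2^{2n}$ leaves are the grid cells, a leaf being addressed by the sequence of bits (left/right choices) read along its root‑to‑leaf path. Singularity is respected by giving every node a \emph{single} unary predicate that jointly encodes its depth, its bit, and a small amount of bookkeeping computed by a polynomial‑size formula along the tree -- typically the length of the maximal all‑ones (resp.\ all‑zeros) suffix of the partial address ending at that node. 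This bookkeeping is exactly what is needed to express, with only two variables and $\lessv$, that one leaf's address is the binary successor of another's (and similarly the ``$+2^n$'' step for vertical neighbours): the two cells involved are necessarily in \emph{free position}, and one relates them through their (implicitly pinned‑down) lowest common ancestor together with the suffix data stored at the leaves. Once grid adjacency is available, the tiling constraints are local and routinely encoded; this matches the remark in the introduction that the ability to speak about free pairs is precisely what is being exploited.

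\textbf{Main obstacle.} On the upper‑bound side there is one genuine subtlety: one must be careful about how the normal form interacts with the depth bound, since the auxiliary unary predicates introduced by Lemma~\ref{l:normalform} are \emph{not} constrained by singularity, so naively passing to normal form reintroduces exponentially many $1$‑types and only recovers the \ExpSpace{} bound of Theorem~\ref{t:finitetrees}. So the surgery and degree‑reduction steps have to be controlled with respect to the original signature $\tau$ (using that in two‑variable logic the type of an element determines which formulas hold at it), and the check that excising a subtree preserves every existential requirement has to be redone in this setting. The harder half, however -- and what I expect to be the real crux -- is the lower bound: squeezing the binary‑successor relation on $n$‑bit addresses stored along root‑to‑leaf paths into a two‑variable formula over $\lessv$, i.e.\ getting the auxiliary predicates exactly right so that adjacency is captured neither too loosely (which would make the reduction incorrect) nor too tightly (which would make the formula unsatisfiable even when a tiling exists), while the cells being compared are always in free position.
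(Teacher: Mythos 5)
There are genuine gaps on both sides. On the upper bound, your degree-reduction step is provably wrong: you claim the out-degree can be cut to $O(|\tau_0(\phi)|)$ by keeping two children per $1$-type, but the paper exhibits a polynomial-size \FOto{} formula (using predicates $root,\elem,b_0,\dots,b_{n-1}$, where each $\elem$-child of the root encodes an $n$-bit number via $\exists y (x \lessv y \wedge b_i(y))$ and all $2^n$ numbers are forced to occur) whose every singular model has a root with $2^n$ children. The point is that witness requirements concern arbitrary subformulas, not $1$-types over the original signature, so children with the same $1$-type are not interchangeable. The same issue undermines your depth reduction: Lemma~\ref{l:surgery} and Propositions~\ref{p:consistent_types}--\ref{p:combined_types} are stated for \emph{normal-form} formulas, where truth is captured by full types over an \emph{extended} signature; over the original singular signature the (reduced) full type of a node does \emph{not} determine which subformulas hold at it --- your parenthetical claim that ``in two-variable logic the type of an element determines which formulas hold at it'' is false, and this is exactly the difficulty you flag but do not resolve. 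The paper's actual route is: prove the polynomial \emph{depth} bound directly for ENNF formulas by structural induction using a slice decomposition \`a la Weis (Lemmas~\ref{l:pqr}, \ref{l:intervals}, \ref{l:shortpathssing}) --- this is the technical heart of the upper bound and is absent from your sketch --- and only then pass to normal form on the fixed (already shallow) tree frame and apply Lemma~\ref{l:narrowtrees} to get an \emph{exponential} degree bound, yielding an exponential-size model.

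On the lower bound, your tiling reduction leaves open precisely the step you identify as the crux (expressing binary successor on addresses stored along root-to-leaf paths with two variables and $\lessv$ only), so it is not a proof; moreover it is a needlessly hard detour. The paper reduces from satisfiability of \emph{monadic} \FOt{} (known to be \NExpTime-complete): a model $\str{A}$ is encoded as a depth-two singular tree whose $\elem$-labeled children of the root are the elements of $\str{A}$, each with one leaf child per unary predicate it satisfies, and $p(x)$ translates to $\exists y\, (x \lessv y \wedge p(y))$. This makes the hardness argument a few lines and avoids any arithmetic on addresses.
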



To show the upper bound we observe that every singularly satisfiable formula has a 
singular model  whose all paths are bounded polynomially.
This fact is a generalisation of Theorem 2.1.1 from \cite{Weis11}, that every \FOt$[<]$ formula $\phi$,
singularly satisfiable over finite words, has a finite singular model with polynomially many elements. Actually, our work
is strongly influenced by the construction from \cite{Weis11}, and, generally, can be seen
as its adaptation to the case of trees. We describe here all the required constructions, but omit some proofs,
as many of them are obtained by obvious adjustments of the corresponding proofs for the case of words. Thus, 
in order to fully understand all the details, we advise the reader to familiarise with 
Chapter 2 of \cite{Weis11}.

Before going further we discuss the main differences with the case of words. The main idea from \cite{Weis11}
is to show that for a given singular word $\str{W} \models \phi$, a letter $a \in \tau_0$ and a given subformula $\xi(x)$ of $\phi$ there
exists a division of $\str{W}$ into polynomially many segments in which, at elements satisfying $a$, the value of $\xi(x)$ is constant.
In our case the role of those segments is played by \emph{slices}, i.e., connected components of trees. We show that each path
intersects polynomially many slices. In \cite{Weis11}  \emph{left} and \emph{right} witnesses are considered. In our case
they correspond to \emph{upper} and \emph{lower} witnesses (which, however, in contrast to the case of words, are not necessarily linearly ordered), but we must also deal with \emph{free} witnesses. 
Finally, the small model
is constructed by picking at most three witnesses for each slice. As the total number of considered slices in a tree may be exponential we have to be 
careful at this point, to avoid choosing too many witnesses from a single path.

Now we turn to technical details. Recall that in the current scenario we have 
four order formulas $x \lessv y$, $x {=} y$, $y \lessv x$ and $x \fw y$. We also  use a shortcut: $x \lesseqv y = x \lessv y \vee x {=} y$. 
The normal form from Lemma \ref{l:normalform} is not very useful since it introduces fresh unary predicates that destroy singularity of models. Thus, we only slightly adjust
formulas by converting them to existential negation form (ENNF).
 A formula $\phi \in \FOto{}$ is in ENNF if it does not contain any universal quantifier, and negations only appear in front of unary predicates or existential quantifiers. Negations in front of order formulas are not allowed.
Obviously, any formula $\phi \in \FOto{}$ is equivalent to a formula in ENNF of size at most $2|\phi|$.

We may view our formulas as positive boolean combinations of order formulas and formulas with at most one free variable. 

\begin{proposition} \label{p:betaform}
 Let $\phi \in \FOto{}$ be a formula in ENNF. Then there exists a number $s \in \Nn$, a positive boolean formula $\beta$ in variables $Z_{\lessv}, Z_=, Z_{\greatv}, Z_{\fw},$ $X_1, \dots, X_s$, and formulas $\phi_1 ,\dots, \phi_s \in \FOto{}$ in ENNF, each with at most one free variable, such that
$\phi = \beta(x \lessv y, x {=} y, y \lessv x, x \fw y, \phi_1 , \dots, \phi_s).$
Moreover 
$\phi \equiv (x \lessv y \wedge \phi_{\upharpoonright x \lessv y}) \vee (x {=} y \wedge \phi_{\upharpoonright x=y}) \vee (y \lessv x  \wedge \phi_{\upharpoonright y \lessv x}) \vee (x \fw y \wedge \phi_{\upharpoonright x \fw y})$
where $\phi_{\upharpoonright x \lessv y} = \beta(\top, \bot, \bot, \bot, \phi_1 , \dots, \phi_s)$, and $\phi_{\upharpoonright \theta}$ is analogously defined for the remaining $\theta$-s.

\end{proposition}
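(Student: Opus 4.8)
The statement to prove is Proposition~\ref{p:betaform}, which asserts a structural decomposition of ENNF formulas of \FOto{} into a positive boolean combination of the four order formulas and subformulas with at most one free variable.

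\medskip

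The plan is to prove this by a straightforward structural induction on the ENNF formula $\phi$, tracking the set of order formulas that may occur as ``atoms'' in the boolean skeleton $\beta$ and showing that all genuinely binary atomic formulas of $\FOto{}$ are order formulas. First I would observe that since $\tau_{bin} = \{\lessv\}$, the only binary relational atoms available are $x \lessv y$, $y \lessv x$, $x \lessv x$ (which is $\bot$), $y \lessv y$ (which is $\bot$), together with the equality atoms $x = y$, $x = x$, $y = y$ (the latter two being $\top$). Thus every binary atom is, up to trivial simplification, one of $x \lessv y$, $y \lessv x$, $x = y$, and these three are mutually exclusive; the fourth order formula $x \fw y$ is definable but will be introduced only in the final reorganisation step. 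The base case of the induction is: an atomic formula is either a unary atom $a(x)$ or $a(y)$ (take $s = 1$, $\beta = X_1$, $\phi_1$ the unary atom), or one of the binary atoms above (take $s = 0$ and let $\beta$ be the corresponding $Z$-variable, or $\top$/$\bot$). Since $\phi$ is in ENNF, negation is applied only in front of unary atoms or existential quantifiers, so $\neg a(x)$ and $\neg a(y)$ are handled as unary subformulas $\phi_i$, and we never need to negate an order formula — this is exactly why the restriction to ENNF is essential and why $\beta$ can be taken positive.

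\medskip

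For the inductive step I would treat the connectives $\wedge$ and $\vee$ by simply combining the two boolean skeletons: if $\phi = \phi' \wedge \phi''$ with decompositions $\beta'(\ldots, \phi'_1, \ldots, \phi'_{s'})$ and $\beta''(\ldots, \phi''_1, \ldots, \phi''_{s''})$, put $s = s' + s''$, concatenate the lists of one-free-variable formulas, identify the four $Z$-variables across both, and set $\beta = \beta' \wedge \beta''$; positivity is preserved. The quantifier case is the crucial one: if $\phi(x) = \exists y\, \psi(x,y)$, then $\psi$ itself has a decomposition, and — using the ``Moreover'' clause applied to $\psi$ one level down, or directly — one sees that $\exists y\, \psi(x,y)$ is a formula with only $x$ free, hence may be taken as a single new atom $\phi_i$ with $\beta = X_i$, $s = 1$. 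The same holds for $\exists x\, \psi(x,y)$ and for $\neg \exists y\, \psi$, $\neg\exists x\, \psi$. In other words, any subformula whose set of free variables has size at most one is absorbed wholesale into the list $\phi_1, \dots, \phi_s$; only the binary atoms remain visible in $\beta$. This shows the first displayed equality.

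\medskip

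For the ``Moreover'' part, I would substitute the mutually exclusive, jointly exhaustive case split $x \lessv y$, $x = y$, $y \lessv x$, $x \fw y$ into $\beta$. Exactly one of these holds for any pair $(x,y)$ in any tree, so $\phi$ is equivalent to the disjunction over the four cases of ($\theta \wedge \phi$), and in the case governed by $\theta$ we may replace the occurrences of the $Z$-variables in $\beta$ by their truth values under $\theta$: under $x \lessv y$ we have $Z_{\lessv} \mapsto \top$ and $Z_=, Z_{\greatv}, Z_{\fw} \mapsto \bot$, giving $\phi_{\upharpoonright x \lessv y} = \beta(\top,\bot,\bot,\bot,\phi_1,\dots,\phi_s)$, and symmetrically for the other three. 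Since $\beta$ is positive, each $\phi_{\upharpoonright \theta}$ is again a positive boolean combination of the $\phi_i$, i.e.\ a formula with no free binary-atom occurrences; and each $\phi_i$ has at most one free variable, so $\phi_{\upharpoonright\theta}$ has free variables among $\{x,y\}$ only through the $\phi_i$, which is exactly what is needed for the disjunction to live in \FOto{}. The main obstacle, though a mild one, is bookkeeping: one must be careful that in the ENNF normalisation no negation lands in front of an order atom (handled by pushing negations only as far as unary atoms and quantifiers, which is precisely the ENNF condition), and that the ``at most one free variable'' invariant for the $\phi_i$ is genuinely maintained through the quantifier case — here it is immediate because quantifying out $y$ (resp.\ $x$) from a formula with free variables $\subseteq\{x,y\}$ leaves free variables $\subseteq\{x\}$ (resp.\ $\subseteq\{y\}$). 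The size bound and the fact that the construction is effective are evident from the induction.
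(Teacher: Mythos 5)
Your proof is correct and is exactly the routine structural induction the paper has in mind: the proposition is stated without proof precisely because, under the ENNF restrictions (no universal quantifiers, negations only on unary atoms and existentials, no negated order formulas), every maximal subformula that is not an order atom has at most one free variable and can be absorbed into the list $\phi_1,\dots,\phi_s$, leaving a positive skeleton $\beta$, and the ``Moreover'' part follows from the mutual exclusivity and joint exhaustiveness of the four order formulas. Nothing to add.
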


For a finite tree $\tree$ and a set of nodes $P \subseteq T$ we define $max(P)$ as the set of the maximal nodes from $P$ and $min(P)$ as the set of the minimal nodes from $P$, with respect to the order relation $\lessv$. For example $max(T)$ is the set of the leaves and $min(T)$ is the singleton consisting of the root of $\tree$.

\begin{lemma}\label{l:pqr}
 Let $\zeta_1(y), . . . , \zeta_t(y)$ be \FOto{} formulas with $y$ as the only free variable and in ENNF, and let $\tree$ be a finite singular tree.
 Let $\beta$ be a positive boolean formula in the variables $Z_{\lessv}, Z_=, Z_{\greatv}, Z_{\fw}, Y_1 ,\dots , Y_t$, let
$\psi(x, y) = \beta(x \lessv y, x {=} y,$ $y \lessv x, x \fw y,$ $\zeta_1(y), \dots , \zeta_t(y))$,
and let $\phi(x) = \exists y \psi(x, y)$. Let
$P' := \{u \in T  \rvert   \tree \models \psi_{\upharpoonright x \lessv y} [u,v],$ for some $v$ s.t. $u \lessv v \}$,
$Q' := \{u \in T  \rvert   \tree \models \psi_{\upharpoonright y \lessv x}[u,v],$ for some $v$ s.t. $v \lessv u \}$,
$R' := \{u \in T  \rvert   \tree \models \psi_{\upharpoonright x \fw y}[u,v],$ for some $v$ s.t. $u \fw v\}.$
Set $P = max(P')$ and $Q = min(Q' \cup R')$. Then for all $u \in \tree$, $\tree \models \phi[u]$ iff there exists $p \in P$ s.t. $ u \lesseqv p$ or there exists $q \in Q$ s.t. $ q \lesseqv u$ or $\tree \models \psi_{\upharpoonright x {=} y}[u,u]$. 
\end{lemma}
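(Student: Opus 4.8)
The plan is to combine a purely logical decomposition of $\psi$ according to the mutual position of $x$ and $y$ with two monotonicity properties of the sets $P',Q',R'$ under the ancestor order $\lessv$, plus one small combinatorial fact about trees.

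\textbf{Step 1: logical decomposition.} Since $\psi$ is in ENNF (it is a positive combination of un-negated order formulas and the ENNF formulas $\zeta_i(y)$) and $\beta$ is positive, Proposition~\ref{p:betaform} applies to $\psi$ and gives
$\psi(x,y)\equiv(x\lessv y\wedge\psi_{\upharpoonright x\lessv y})\vee(x{=}y\wedge\psi_{\upharpoonright x=y})\vee(y\lessv x\wedge\psi_{\upharpoonright y\lessv x})\vee(x\fw y\wedge\psi_{\upharpoonright x\fw y})$,
where each $\psi_{\upharpoonright\theta}$ has $y$ as its only free variable. Distributing $\exists y$ over this disjunction and unfolding the definitions of $P',Q',R'$ one gets
\[ \tree\models\phi[u]\ \text{ iff }\ u\in P'\cup Q'\cup R'\ \text{ or }\ \tree\models\psi_{\upharpoonright x=y}[u,u]. \]
Since the last disjunct already occurs verbatim in the statement, it remains to prove: $u\in P'\cup Q'\cup R'$ iff $u\lesseqv p$ for some $p\in P$, or $q\lesseqv u$ for some $q\in Q$.

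\textbf{Step 2: monotonicity and finiteness.} Here I would use that $\lessv$ is a strict partial order in which the ancestors of any node form a chain. First, $P'$ is closed under passing to ancestors: if $u\in P'$ has a lower witness $v$ (so $u\lessv v$ and $\tree\models\psi_{\upharpoonright x\lessv y}[u,v]$) and $u'\lesseqv u$, then $u'\lessv v$, so $u'\in P'$. Dually, $Q'\cup R'$ is closed under passing to descendants; for $Q'$ this is immediate by transitivity, and for $R'$ it follows from the tree fact stated below. Because $\tree$ is finite, the first closure property gives $P'=\{u:\ u\lesseqv p\text{ for some }p\in\max(P')=P\}$ — for $u\in P'$ take a $\lessv$-maximal element $p$ of $P'$ lying in the subtree rooted at $u$; it is $\lessv$-maximal in all of $P'$ and $u\lesseqv p$; the converse inclusion is ancestor-closure — and symmetrically $Q'\cup R'=\{u:\ q\lesseqv u\text{ for some }q\in\min(Q'\cup R')=Q\}$. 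Combining the two descriptions yields exactly the equivalence required by Step 1, completing the proof.

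The only step with genuine content is the tree fact: if $a\fw b$ and $a\lesseqv c$ then $c\fw b$ (used for downward closure of $R'$, i.e.\ that a free witness of $u$ is still a free witness of every descendant of $u$). I expect this to be the main — though short — obstacle: prove it by case analysis, noting that $c=b$ or $c\lessv b$ would force $a\lessv b$ by transitivity, while $b\lessv c$ would make $a$ and $b$ two ancestors of $c$ and hence $\lessv$-comparable, contradicting $a\fw b$. Everything else — distributing $\exists y$, and the two extremal-element arguments — is routine; the only point needing a little care is that the maximal (resp.\ minimal) element exhibited for a given $u$ must be chosen within the subtree below $u$ (resp.\ the ancestor chain above $u$) so that it is genuinely $\lessv$-maximal in $P'$ (resp.\ $\lessv$-minimal in $Q'\cup R'$).
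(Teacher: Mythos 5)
Your proof is correct and follows essentially the same route as the paper's: decompose $\psi$ via Proposition~\ref{p:betaform}, then use that the restricted formulas depend only on $y$ together with the hereditary facts (ancestors inherit lower witnesses, descendants inherit upper and free witnesses, the latter being exactly the paper's observation that $q \lesseqv u$ and $q \fw v$ imply $u \fw v$). Your extremal-element argument even spells out a step the paper's appendix proof leaves implicit, namely why a suitable element of $\max(P')$ (resp. $\min(Q'\cup R')$) comparable with $u$ exists.
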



\noindent
{\it Remark}. Notice that on every path in $\tree$ there is at most one point from $P$ and at most one point from $Q$. 

\medskip
Let $a \in \tau_0$ be a letter, $\tree$ a finite singular tree, and $S$ a set of nodes of $\tree$. Then by $S^a$ we denote the set of nodes in $S$ where the letter $a$ occurs. We also say that $S$ is a tree \emph{slice} iff it induces a connected (with respect to the symmetric closure of the child relation $\succv$) subgraph of $\tree$.

\begin{lemma} \label{l:intervals}
Let $\phi \in \FOto{}$ be a formula in ENNF and with one free variable, let $\tree$ be a finite singular tree, and let $a \in \tau_0$. There is a set $S \subseteq T$ which is a union of tree slices in $\tree$ such that: for every $u \in T^a$ we have $\tree \models \phi[u]$ iff $u \in S$; and every path in $\tree$ intersects at most $|\phi|^2$ tree slices from $S$.
\end{lemma}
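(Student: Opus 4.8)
The plan is to argue by structural induction on $\phi$ (taken in ENNF, with single free variable $x$), using throughout one elementary geometric fact: a tree slice meets every path of $\tree$ in a contiguous interval, because in a tree the unique path joining two nodes of a connected set stays inside that set. Hence a finite union or intersection of sets meets any path in at most the sum of the slice-counts of the parts, and the complement of a set meets any path in at most one more slice; these are the estimates I would use to bound the slice-count of each $S$ I build.

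The base cases are where singularity is exploited. If $\phi$ is a (possibly negated) unary atom in $x$, then at every node of $T^a$ precisely the letter $a$ holds, so the truth value of $\phi$ is constant on $T^a$ and $S=\emptyset$ or $S=T$ serves (one slice, trivially within $|\phi|^2$); $x{=}x$ is similar. For $\phi=\neg\exists y\,\chi$ I would take $S=T\setminus S'$ for the $S'$ obtained from the induction hypothesis for $\exists y\,\chi$, getting at most $|\exists y\,\chi|^2+1\le|\phi|^2$ slices per path. For $\phi=\phi_1\vee\phi_2$ (resp.\ $\phi_1\wedge\phi_2$) I would take $S=S_1\cup S_2$ (resp.\ $S_1\cap S_2$); correctness on $T^a$ is immediate from the induction hypothesis, and the geometric fact gives at most $|\phi_1|^2+|\phi_2|^2\le(|\phi_1|+|\phi_2|)^2\le|\phi|^2$ slices per path.

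The substantial case is $\phi(x)=\exists y\,\psi(x,y)$. First I would rewrite $\psi$ in the form of Proposition~\ref{p:betaform}, a positive Boolean combination of the four order formulas and finitely many one-free-variable formulas, and then eliminate from under the quantifier those constituents whose free variable is $x$ --- for this I would use, exactly as in the word case of \cite{Weis11}, that $\beta$ is positive and that the four order formulas partition the pairs, together with the induction hypothesis (which controls how often an $x$-constituent can change truth along a path). After this reduction the set where $\exists y\,\psi$ holds is assembled from: the set for the strictly smaller one-free-variable formula $\psi_{\upharpoonright x{=}y}(x,x)$, to which the induction hypothesis applies; and, via Lemma~\ref{l:pqr} applied to the residual formulas $\exists y(\theta(x,y)\wedge\zeta(y))$ with $\theta\in\{x\lessv y,\,y\lessv x,\,x\fw y\}$, the ancestor-closure $\{u:\exists p\in P,\ u\lesseqv p\}$ and the ``below-$Q$'' set $\{u:\exists q\in Q,\ q\lesseqv u\}$, where $P$ and $Q$ meet every path in at most one node by the Remark following Lemma~\ref{l:pqr}. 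The ancestor-closure is a single subtree containing the root, hence one slice; the below-$Q$ set is a union of subtrees rooted at the (possibly pairwise incomparable) elements of $Q$, but every path enters at most one of them; so both contribute at most one slice per path. Taking $S$ to be the union of all these ingredients and invoking the geometric fact, the slice-count per path comes out bounded by $|\phi|^2$.

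I expect two things to be the real work. The genuinely tree-specific difficulty, with no counterpart for words, is the handling of free witnesses: a witness in free position to a node $q$ stays in free position to every descendant of $q$, and a \emph{lower} witness of $q$ may slip into free position to a descendant, so ``free'' and ``lower'' witnesses cannot be kept apart; Lemma~\ref{l:pqr} is arranged so that both are absorbed by taking $Q=\min(Q'\cup R')$, and it is precisely this merge that makes ``at most one element of $Q$ per path'' --- and hence the one-slice bound above --- correct. The second, more bookkeeping-heavy point is to push the elimination of the $x$-constituents through so that the final slice-count is bounded by $|\phi|^2$ rather than merely by some polynomial; this step is a transcription of Chapter~2 of \cite{Weis11}, and I would carry it out while checking that the quantitative estimates survive the passage from words to trees.
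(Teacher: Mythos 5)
Your proposal follows essentially the same route as the paper's own proof: structural induction with the boolean cases handled by the union/intersection/complement slice-count estimates, and the existential case handled by the decomposition of Proposition~\ref{p:betaform}, refining the slicing via the inductive hypothesis on the $x$-constituents $\xi_\sigma$, and then invoking Lemma~\ref{l:pqr} together with the remark that each path meets $P$ and $Q$ in at most one point, with the final $|\phi|^2$ arithmetic deferred to Lemma~2.1.10 of \cite{Weis11} exactly as the paper does. Your observation about the merge $Q=\min(Q'\cup R')$ absorbing free witnesses is a correct reading of the mechanism the paper relies on.
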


The proof is inductive: if $\phi=\exists y \psi(x,y)$, for $\phi(x,y)= \beta(x \lessv y, x {=} y, y \lessv x,\linebreak x \fw y, \xi_1(x), \dots, \xi_s(x), \zeta_1(y), \dots, \zeta_t(y))$, then we consider the slices 
obtained inductively for the formulas $\xi_\sigma(x)$. The slices for different $\sigma$-s may overlap. Their endpoints determine a more refined division into slices, such that in each slice, on nodes carrying $a$, the values of all $\xi_\sigma(x)$ are constant.
In each such slice we apply Lemma \ref{l:pqr} to introduce new divisions. Now arguments and calculations similar as in the proof of the  corresponding Lemma 2.1.10 from \cite{Weis11} lead to the desired claim. 


\begin{lemma} \label{l:shortpathssing}
Let $\phi$ be an \FOto{} formula over a signature $\tau$. If $\phi$ is satisfied in a singular tree, then
$\phi$ is also satisfied in a singular tree, in which the length of every path is bounded by $6 \cdot |\tau| \cdot |\phi|^3$.
\end{lemma}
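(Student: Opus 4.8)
The plan is to take a singular model $\str{T} \models \phi$ and, path by path, bound the number of "witness-relevant events" along any root-to-leaf path, then contract each path so that between consecutive events only a constant number of nodes survive. First I would put $\phi$ into ENNF and write it, via Proposition \ref{p:betaform}, as a positive boolean combination of order formulas and one-variable subformulas $\phi_1,\dots,\phi_s$. For each one-variable subformula $\xi$ of $\phi$ and each letter $a \in \tau_0$, Lemma \ref{l:intervals} supplies a set $S_{\xi,a}$, a union of tree slices, such that $\xi$ holds exactly on the $a$-nodes of $S_{\xi,a}$, and such that every path meets at most $|\phi|^2$ of those slices. Taking the common refinement of all these partitions over all subformulas $\xi$ and all letters $a$, I obtain a division of $\str{T}$ into slices such that, inside each slice, the truth value of every one-variable subformula of $\phi$ is constant on the nodes carrying any fixed letter; crucially, by a union bound over the at most $|\phi|$ subformulas and $|\tau_0|$ letters, every path meets at most roughly $|\tau_0|\cdot|\phi|\cdot|\phi|^2$ of these refined slices.

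Next I would exploit the slice structure to contract each path. Fix a maximal path $v_1 \lessv v_2 \lessv \cdots \lessv v_m$. Along it, the sequence of slices changes at most the bounded number of times computed above; so it suffices to bound, within one slice, the number of consecutive nodes on the path that we must keep. The point is that a node $u$ on the path satisfies a given $\forall\exists$-requirement of $\phi$ through an upper, lower, or free witness, and — by Proposition \ref{p:betaform} together with Lemma \ref{l:pqr} — whether such a witness exists is governed, on $a$-nodes inside a fixed slice, by the one-variable subformulas (which are constant there) and by the order relation to the at-most-one point $p\in P$ above and at-most-one point $q\in Q$ below that the Remark after Lemma \ref{l:pqr} guarantees on the path. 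Consequently, inside one slice, the only nodes of the path that genuinely "matter" are: the two endpoints of the slice, the $P$- and $Q$-points falling inside it, and the witnesses themselves (at most $3$ per existential requirement per distinct $1$-type, as in the word case). I would delete every other node of the path inside the slice by the usual contraction: splicing out a node $v_i$ amounts to the surgery of making $v_{i-1}$'s child be $v_{i+1}$'s subtree — which, as in Lemma \ref{l:surgery}, only combines full types of existing nodes and therefore preserves $\phi$-consistency, hence preserves $\str{T}\models\phi$; and, since the model is singular and we only remove nodes and reattach subtrees without adding unary information, singularity is preserved.

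Putting the pieces together: each of the $O(|\tau|\cdot|\phi|^3)$ many slices crossed by a path contributes a constant number of surviving nodes after contraction, which yields a singular model in which every path has length bounded by $6 \cdot |\tau| \cdot |\phi|^3$, as claimed. I expect the main obstacle to be the bookkeeping in the contraction step: one must verify that deleting the "irrelevant" nodes on one path does not destroy witnesses needed by nodes lying off that path (in free position to it), and that the $P$- and $Q$-points of Lemma \ref{l:pqr}, which were defined with respect to the whole tree, remain the effective witness-controllers after the surgery. As in \cite{Weis11}, the resolution is that free and lower/upper witnesses can always be re-chosen among the bounded set of surviving "marked" nodes, because the relevant subformulas are slice-constant and the surgery only merges types; I would carry out this verification by the same inductive argument on the structure of $\phi$ that underlies Lemma \ref{l:intervals}, and otherwise defer to the analogous computation in Chapter 2 of \cite{Weis11}.
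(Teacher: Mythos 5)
Your plan follows the paper's proof in all essentials: ENNF, Proposition~\ref{p:betaform}, Lemma~\ref{l:intervals} applied to the one-variable subformulas to get a slice decomposition met polynomially often by every path, selection of $P$-, $Q$- and $R$-points inside each slice, and a count of at most one such point per refined slice per path yielding the $6\cdot|\tau|\cdot|\phi|^3$ bound. Two steps, however, are justified by the wrong mechanism. First, your claim that splicing out a node ``as in Lemma~\ref{l:surgery}, only combines full types of existing nodes and therefore preserves $\phi$-consistency'' does not work here: $\phi$-consistency and type combination are defined for \emph{normal-form} \FOtall{} formulas, and the normal form is precisely what you cannot use in the singular setting; moreover Lemma~\ref{l:surgery} requires the two splice points to realize the same reduced full type, which you have not arranged, and deleting interior nodes shrinks the ancestor/descendant $1$-type sets of the survivors in ways that need not match either original type. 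The paper instead proves directly, by induction on the ENNF subformulas, that for every surviving node $u$ and every subformula $\eta$ with at most one free variable, $\str{T}\models\eta[u]$ iff $\str{T}'\models\eta[u]$ --- and both directions are genuinely needed, since ENNF permits negated existentials, so removing nodes can make subformulas become true as well as false. Second, the witnesses you retain must serve \emph{all} sources in a slice simultaneously; this is why the paper does not reuse the existentially defined sets of Lemma~\ref{l:pqr} but redefines $P''$, $Q''$, $R''$ with universal quantification over the source (e.g.\ $v\in P''$ iff $\str{T}\models\psi^{\trs}_{\kappa\upharpoonright x\lessv y}[u,v]$ for \emph{all} $u\lessv v$) and then takes extremal elements. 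Your ``at most $3$ witnesses per requirement per $1$-type'' should be these canonical universal witnesses, or the per-path count of one $P$- and one $Q$-point per slice will not go through. With these repairs your argument coincides with the paper's; note also that the paper performs the contraction globally, restricting $\str{T}$ to the union of all selected points rather than working path by path, which disposes of your (correctly identified) worry about destroying witnesses of nodes in free position to the path being contracted.
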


\begin{proof}
We assume that $\phi$ is in ENNF. Let $\tree \models \phi$ be singular, and let $\phi_1, \dots, \phi_k$ be the subformulas of $\phi$ of the form $\exists x \psi$ for some variable $x$ and some formula $\psi$. For every $\kappa \in [1,\dots, k]$ we use Proposition \ref{p:betaform} to find a positive boolean formula such that
$\psi_\kappa(x,y) = \beta(x \lessv y, x{=}y, y \lessv x, x \fw y, \xi_1(x), \dots, \xi_s(x), \zeta_1(y),$ $\dots,$ $\zeta_t(y))$.
For every $a \in \tau_0$ and every $\sigma \in [1, \dots, s]$ let $S_\sigma^a$ be a set as in Lemma \ref{l:intervals} applied to the formula $\xi_\sigma(x)$ and $a$, where every path intersects at most $|\xi_\sigma|^2$ tree slices from $S_\sigma^a$. Thus there is a set $\str{I}^a_\kappa$ of tree slices $\trs$ such that: every path in $\tree$ intersects at most $2 \cdot \sum_{\sigma \in [1,s]}|\xi_\sigma|^2$ of them; $\bigcup_{\trs \in \str{I}^a_\kappa}\trs = \tree$; and there are $\xi_1^\trs, \dots, \xi_s^\trs \in \{\top, \bot\}$ such that $\tree \models \xi_\sigma[u]$ iff $\xi_\sigma^\trs = \top$ for every $u \in \trs$ satisfying $a[u]$. For each $\trs \in \str{I}^a_\kappa$ we consider the formula $\phi^{\trs}_{\kappa} = \exists y \psi_\kappa^\trs(x,y)$, where
$\psi_\kappa^\trs(x,y) = \beta(x \lessv y, x{=}y, y \lessv x, x \fw y,\xi_1^\trs, \dots, \xi_s^\trs, \zeta_1(y), \dots, \zeta_t(y)).$ 
Let
$P'' := \{v \in \tree \ \rvert \ \tree \models \psi^\trs_{\kappa \upharpoonright x \lessv y}[u,v]$  for all $u \lessv v\}$,
$Q'' := \{v \in \tree \ \rvert \ \tree \models \psi^\trs_{\kappa \upharpoonright y \lessv x}[u,v]$  for all $v \lessv u\}$,
$R'' := \{v \in \tree \ \rvert \ \tree \models \psi^\trs_{\kappa \upharpoonright x \fw y}[u,v] \text{ for all $u \fw v$}\}$,
and let $P_\trs = max(P''), Q_\trs = min(Q''), R_\trs = max(R'')$. Let $T_\kappa^a = \bigcup_{\trs \in \str{I}^a_\kappa}(P_\trs \cup Q_\trs \cup R_\trs)$, $T' = \bigcup_{\kappa \in [1,k], a \in \tau_0}T_\kappa^a$, and let $\tree'$ be the restriction of $\tree$ to  $T'$.  Lemma \ref{l:pqr} can be used to prove  that $\tree' \models \phi$. Also it can be shown that the paths of $\tree'$ are bounded as required.
\epr
\end{proof}


\begin{corollary} \label{c:smalltree}
Let $\phi$ be a singularly satisfiable \FOto{} formula. Then it is satisfied in
a singular tree whose number of nodes is exponential in $|\phi|$. 
\end{corollary}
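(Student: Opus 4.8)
The plan is to produce the small model in two stages: first control the depth, then control the degree. The first stage is already done: by Lemma~\ref{l:shortpathssing} we may fix a singular model $\tree\models\phi$ in which every root‑to‑leaf path has length at most $d:=6\cdot|\tau|\cdot|\phi|^3$, which is polynomial in $|\phi|$. It therefore suffices to prune $\tree$ further — deleting only whole subtrees, which keeps the tree singular and the paths short — so that, in addition, every node has at most $\Delta$ children for some $\Delta$ exponential in $|\phi|$: a tree of depth $d$ and out‑degree $\Delta$ has at most $\Delta^{d}$ nodes, which is still exponential in $|\phi|$, and I expect $\Delta=2^{O(|\tau_0(\phi)|)}\cdot|\phi|^{O(1)}$ to be attainable.

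For the degree reduction I would work with $\phi$ in ENNF and argue node by node. Deleting elements can only turn a sentence $\neg\exists x\,\chi$ from false to true, so all negated‑existential (effectively universal) obligations of $\phi$ survive automatically; hence only the positively occurring existential obligations need their witnesses preserved. Splitting $\psi(x,y)$ by the relative position of $x$ and $y$ as in Proposition~\ref{p:betaform}, a witness for a retained node $u$ and an obligation $\exists y\,\psi(x,y)$ is an ancestor of $u$, the node $u$ itself, a descendant of $u$, or a node in free position to $u$. Fix a node $v$ with child set $U$ and a slice $\trs$ supplied by Lemma~\ref{l:intervals}, so that inside $\trs$ the truth value of every relevant one‑free‑variable subformula is constant on the nodes carrying a fixed letter; then, within $\trs$, whether a node of $U$ roots a subtree providing a usable free witness for a node above or beside $v$ depends only on the letters occurring in that subtree. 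So, exactly as in the marking step of Lemma~\ref{l:narrowtrees}, for each letter $a$ it suffices to keep a bounded number of children of $v$ whose subtree contains a suitable $a$‑node, together with the boundedly many children needed to supply the free witnesses demanded inside the already‑kept subtrees; all other children of $v$ are deleted with their subtrees. Since each node of $\tree$ meets only polynomially many slices and there are polynomially many relevant subformulas and at most $2^{|\tau_0(\phi)|}$ letters, this leaves $v$ with $2^{O(|\tau_0(\phi)|)}\cdot|\phi|^{O(1)}$ children. Applying the procedure top‑down and checking, via Lemma~\ref{l:pqr}, that all surviving existential obligations keep a witness, yields the desired model.

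The step I expect to be the main obstacle is making this pruning correct without a size blow‑up. Because the truth value of a nested one‑free‑variable subformula $\zeta(y)$ at a retained node depends recursively on the whole pruned tree, one cannot, as in the pure \FOt{} case of Lemma~\ref{l:narrowtrees} where full types make ``two child subtrees per $1$‑type'' enough, choose the surviving subtrees obliviously; the selection has to be carried out compatibly with the slice decomposition of Lemma~\ref{l:intervals}, precisely so that all the $\zeta(y)$ values, and with them all the $\psi_{\upharpoonright\theta}$ conditions of Proposition~\ref{p:betaform}, are preserved at the kept nodes — this verification is the bulk of the (routine but delicate) work, of the same flavour as the omitted proofs of Lemmas~\ref{l:intervals}, \ref{l:pqr} and \ref{l:shortpathssing}. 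Moreover, as already stressed in the discussion preceding this corollary, the bookkeeping must be organised per node rather than per slice: the whole tree may contain exponentially many slices, and picking a few witnesses from each of them independently would re‑introduce an uncontrolled out‑degree and push the model past the exponential bound.
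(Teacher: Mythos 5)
Your first stage matches the paper, but your degree-reduction stage misses the key idea and, as written, has a real gap. The paper does not attempt to prune children while preserving the truth of nested ENNF subformulas. Instead it exploits the fact that Lemma~\ref{l:normalform} is stated \emph{relative to a tree frame}: starting from the singular model $\str{T}$ with polynomially bounded paths, it passes to a normal-form formula $\phi'$ over an extended signature $\tau'$ that is satisfiable in a model $\str{T}'$ on the \emph{same frame} (singular only in its restriction to $\tau$), applies the already-proven Lemma~\ref{l:narrowtrees} to $\phi'$ to cut the degree down to $4\cdot 2^{2|\tau_0(\phi')|}$ by deleting whole subtrees, and then observes that deleting subtrees keeps the paths polynomially short and keeps the $\tau$-restriction singular. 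The whole proof is a composition of existing lemmas; no new pruning argument on ENNF formulas is needed.

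Your alternative route founders exactly where you suspect it might. The claim that deleting elements ``can only turn a sentence $\neg\exists x\,\chi$ from false to true, so all negated-existential obligations survive automatically'' is false for nested formulas: in ENNF a negated existential may occur inside the scope of a positive existential, so a flip from false to true of an inner $\neg\exists$ can make an outer $\exists y\,\psi(x,y)$ go from false to true (or enable it to lose all its witnesses after your pruning chooses them based on the old truth values). Truth of ENNF formulas is simply not monotone under deletion of subtrees, which is why a per-node, per-slice marking scheme in the style of Lemma~\ref{l:narrowtrees} cannot be carried out ``obliviously'' and why you correctly identify the compatibility with the slice decomposition as the bulk of the work --- work that the paper's detour through Lemma~\ref{l:normalform} and Lemma~\ref{l:narrowtrees} avoids entirely. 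To repair your argument you would essentially have to prove a degree-bounding analogue of Lemma~\ref{l:intervals}; the intended proof shows this is unnecessary.
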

\begin{proof}
 Let $\str{T} \models \phi$ be a singular tree over the signature $\tau$, let $\cT$ be the frame of $\str{T}$. By Lemma \ref{l:shortpathssing} we may assume that all its paths are bounded polynomially. Let $\phi'$ be the normal form formula over signature $\tau'$ from the statement of Lemma \ref{l:normalform}. By that lemma $\phi'$ is satisfiable in a model $\str{T}'$ based on the frame $\cT$. By Lemma \ref{l:narrowtrees} 
we can remove some subtrees from $\str{T}'$ to obtain a model  $\str{T}'' \models \phi'$ with exponentially bounded degree. Again by Lemma \ref{l:normalform}, the restriction of $\str{T}''$ to 
the original signature $\tau$ is a singular model. As its paths are bounded polynomially and the degree of nodes is bounded exponentially, the total number of nodes is bounded exponentially in $|\phi|$ as
required. \epr
\end{proof}
Corollary \ref{c:smalltree} justifies the upper bound from Theorem \ref{t:fotsing}, since for a given $\phi$ we can nondeterministically guess its exponential model and then verify it.

The exponential bound on the degree of nodes in singular models of \FOto{} formulas is essentially optimal. Indeed, let us see that there exists
a formula of size polynomial in $n$ in whose every model the root has $2^n$ children. We use unary predicates $root, elem, b_0, \ldots, b_{n-1}$, and
say that all elements in $elem$ are children of the root:
$\forall x (root(x) \Leftrightarrow \neg \exists y \; y \lessv x) \wedge \forall x (\elem(x) \Leftrightarrow (\neg \exists y (y \lessv x \wedge \neg root(y)))$.
We think that each $v$ in $elem$ encodes a number $0 \le N(v) < 2^n$ such that the $i$-th bit in its binary representation is $1$ iff  
the formula  $\delta_i(x)=\exists y (x \lessv y \wedge b_i(y))$ is satisfied at $v$. In a standard way we can now write a formula $\it{first}(x)$ which says that $N(x)=0$, a formula
$\it{last}(x)$ stating that $N(x)=2^n-1$, and a formula $\it{succ}(x,y)$ saying that $N(y)=N(x)+1$. Now the formula $\exists x \; \it{first}(x) \wedge \forall x (\neg \it{last}(x) \Rightarrow \exists y \; \it{succ}(x,y))$ is as required.
This idea can be easily employed to obtain \NExpTime-lower bound in Theorem \ref{t:fotsing}. 


It turns out, that the ability of speaking about pairs of nodes in free position is crucial for \NExpTime-hardness. Indeed 
if we allow only guarded formulas, we get \PSpace{} complexity. The upper bound in the following theorem can be proved by 
bounding polynomially not only the length of the paths but also the degree of the nodes in models of \GFto{} formulas.

\begin{theorem} \label{t:gfto}
The satisfiability problem for \GFto{} over finite singular trees is \PSpace-complete.
\end{theorem}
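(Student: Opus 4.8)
\emph{Overview.} The plan is to follow the same scheme as for Theorem~\ref{t:finitetrees}, but with \emph{polynomial} bounds: I would show that a singularly satisfiable \GFto{} formula has a singular model of polynomial depth and polynomial degree, and then guess and verify such a model in alternating polynomial time, using $\APTime=\PSpace$~\cite{CKS81}. For the lower bound I would reduce from the satisfiability of quantified Boolean formulas.

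\emph{Small model, upper bound.} First put $\phi$ into ENNF; this preserves singularity and keeps the size linear. Since $\GFto\subseteq\FOto$, Lemma~\ref{l:shortpathssing} already gives a singular model $\str{T}$ in which every path has polynomially bounded length. The new ingredient is a polynomial bound on the degree, and the key observation is that $\GFto[\lessv]$ has \emph{no free witnesses}: the only binary atoms that can serve as guards are $x\lessv y$ and $y\lessv x$ (equalities aside), so when a subformula with a single free variable $x$ is evaluated at a node $u$, it can only refer to the ancestors of $u$, the descendants of $u$, and $u$ itself. (Quantifiers guarded by $y{=}y$ yield sentences whose truth is a global constant; the polynomially many such ground subformulas are treated as parameters fixed in advance.) Hence deleting the subtree rooted at a child of a node $v$ changes no subformula value outside that subtree, and it can only destroy a \emph{lower}-witness requirement $\exists y\,(x\lessv y\wedge\chi)$ that is triggered at $v$ or at an ancestor of $v$ and whose witness lay inside the deleted subtree. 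As there are polynomially many ancestors (by the path bound) and at most $|\phi|$ such subformulas, it suffices, at each node $v$, to keep one child subtree per relevant (ancestor, lower-witness subformula) pair, plus one per true ground subformula; this leaves $v$ with polynomially many children. Performing the pruning top-down yields a singular model of $\phi$ of polynomial depth and degree.

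\emph{The procedure.} With both depth and degree bounded polynomially, I would run essentially the alternating procedure used for Theorem~\ref{t:finitetrees}, except that full types (of which there are doubly exponentially many) can be replaced, at each node, by its unary letter together with the set of subformulas of $\phi$ that hold there — polynomially much information — and by the polynomially many upper- and lower-witness obligations inherited along the current path from ancestors; Proposition~\ref{p:betaform} and Lemma~\ref{l:pqr} are used to check local consistency when passing from a node to its children. This places the problem in $\APTime=\PSpace$.

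\emph{Lower bound.} To show $\PSpace$-hardness I would reduce from TQBF. Given $\Phi=Q_1p_1\cdots Q_np_n\,\psi$, use the $2n$ unary predicates $\{0_i,1_i:1\le i\le n\}$ and write a $\GFto$ sentence $\alpha_\Phi$ whose finite (hence singular) models are the winning‑strategy trees of the evaluation game for $\Phi$: the root carries $0_1$ or $1_1$; a node carrying $0_i$ or $1_i$ with $i<n$ has a $0_{i+1}$-descendant and a $1_{i+1}$-descendant if $Q_{i+1}=\forall$, and at least one such descendant if $Q_{i+1}=\exists$; a node carrying $0_n$ or $1_n$ is a leaf and satisfies the translation of $\psi$ with each literal $p_i$ replaced by $\exists y\,(y\lessv x\wedge 1_i(y))$ for $i<n$ and by $1_n(x)$ for $i=n$. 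These are all guarded two-variable conditions; the only delicate point is to force the tree to be neatly layered — the levels $1,2,\dots,n$ appearing in this order along every path, with no repetitions and no gaps — using only the transitive predicate $\lessv$. Repetitions are forbidden by ``a level-$i$ node has no level-$j$ descendant for $j\le i$''; gaps are forbidden by the two-variable sentence ``every level-$j$ node with $j\ge i{+}2$ has a level-$(i{+}1)$ ancestor''; together with ``a level-1 node exists'' these force the root to be at level $1$ and every maximal path to realise exactly $1,\dots,n$. One then checks $\alpha_\Phi$ is satisfiable over finite (singular) trees iff $\Phi$ is true.

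\emph{Main obstacle.} I expect the hard part to be the polynomial degree bound, more precisely verifying that the top-down pruning really preserves $\phi$: one must track exactly which lower-witness obligations (and which true ground subformulas) of $v$ and of its ancestors flow into the subtree of each retained child, and argue — invoking the absence of free witnesses at every step — that no obligation is lost and that the surviving nodes keep all their subformula values.
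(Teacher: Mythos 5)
Your proposal is correct and follows essentially the same route as the paper: polynomial path length inherited from the singular \FOto{} bound, a polynomial degree bound obtained by pruning while preserving one witness path per (ancestor, lower-witness conjunct) pair — crucially exploiting that \GFto{} admits no free witnesses — and an alternating polynomial-time procedure invoking $\APTime=\PSpace$, with \PSpace-hardness via a QBF strategy-tree reduction. The only differences are presentational (the paper routes the upper bound through a nondeterministic normal-form transformation and tracks $1$-types with promised descendant types rather than subformula values, and its QBF encoding marks levels by depth formulas over $\Root/\leaf/\true/\false$ rather than by dedicated level predicates).
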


Finally we show that augmenting \GFto{} with any of the remaining binary navigational predicates leads to 
\ExpSpace-lower bound over singular trees. 
\begin{theorem} \label{t:lowerbounds}
The satisfiability  problem over singular trees for each of the logics \GFt$[\lessv,\succv]$, \GFt$[\lessv, \succh]$, \GFt$[\lessv, \lessh]$ is
\ExpSpace-hard.
\end{theorem}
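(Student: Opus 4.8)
The plan is to adapt the \ExpSpace-hardness argument for \FOtall{} over finite trees (which came from \cite{Kie02}) so that it works over \emph{singular} trees and uses only guarded quantification, at the price of adding one of the extra navigational predicates $\succv$, $\succh$, or $\lessh$ alongside $\lessv$. The starting point is the observation that \ExpSpace{} equals \AExpTime{} \cite{CKS81}, so it suffices to encode the computation of an alternating Turing machine running in exponential time, or equivalently to encode exponentially long configurations (tape cells indexed by $n$-bit addresses) together with an exponential number of configuration steps. The source of difficulty in the singular setting is that we cannot freely superimpose unary predicates on a node, so all the bookkeeping normally done by having many bits true at one node (an $n$-bit cell address, a state, a head marker, a "current bit" flag) must instead be spread out over several nodes, with the tree structure gluing them together; the guarded restriction then forces every such comparison to be made along an edge or along a transitive path rather than across free position.

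The key steps, in order, would be: (1) Fix which extra predicate we have and design a gadget for storing an $n$-bit counter using a \emph{path} of $n$ auxiliary nodes hanging below (or, in the $\succh$/$\lessh$ case, strung out as siblings of) a "cell" node, where the $i$-th auxiliary node carries one of two unary predicates $b_i^0, b_i^1$ encoding that bit; since each auxiliary node carries exactly one unary predicate this respects singularity. (2) Write guarded \FOt{} formulas that walk down (resp. along) such a gadget using $\succv$ (resp. $\succh$) step by step — this is exactly where the second navigational predicate is needed, because with $\lessv$ alone one cannot, in a guarded way, distinguish the $i$-th auxiliary node from the others, whereas with $\succv$ available the guard $x \succv y$ gives a bounded "next bit" relation; the transitive $\lessv$ (or $\lessh$) is then used to relate a cell node to \emph{all} nodes of its gadget at once. (3) Encode a "successor configuration" relation between consecutive cells and consecutive configurations by the standard technique of comparing the $n$-bit address of a cell with the $n$-bit address of its counterpart, bit by bit, using the step-relation from (2); propagate head position and state; and enforce the alternating acceptance condition with $\forall\exists$ conjuncts of the normal form. (4) Assemble the whole machine tape as one branch (or a bounded-degree tree) of the singular tree, argue that a satisfying singular tree exists iff the alternating machine accepts, and conclude \ExpSpace-hardness. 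The same skeleton handles all three signature pairs $\{\lessv,\succv\}$, $\{\lessv,\succh\}$, $\{\lessv,\lessh\}$; only the geometry of the counter gadget (vertical chain of children vs.\ horizontal chain of siblings) and the choice of "step" guard change between them — in the $\{\lessv,\lessh\}$ case one uses $\lessh$ to see a whole sibling-block and still needs a trick (e.g.\ ordering the bit-nodes and using a companion unary predicate marking "the next bit node") since $\lessh$ is transitive rather than a successor, but this is already essentially present in the word constructions of \cite{EVW02,Weis11}.

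The main obstacle I expect is realizing a usable "next bit" step inside the counter gadget while staying both guarded and singular, particularly in the $\{\lessv,\lessh\}$ case where neither available relation is a proper successor relation on the relevant nodes: one has the transitive $\lessv$ (relating a cell node to all its bit-nodes) and the transitive $\lessh$ (ordering the bit-nodes horizontally), but no atomic successor guard among the bit-nodes, so enforcing "the bits are read in the right order $b_0, b_1, \dots, b_{n-1}$" and comparing two addresses bit-by-bit requires an auxiliary construction — e.g.\ a unary predicate on each bit-node recording its index parity together with a guarded "immediately-to-the-right of the same type" argument, or a nested-gadget encoding — and checking that this auxiliary construction cannot be defeated by spurious models is the delicate part. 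The cases $\{\lessv,\succv\}$ and $\{\lessv,\succh\}$ are easier because the added successor predicate directly supplies the needed step guard. Once the gadget and its correctness lemma are in place, the rest is the routine (if lengthy) encoding of an alternating exponential-time machine, which we would only sketch, pointing to \cite{Kie02,CKS81} for the analogous computations.
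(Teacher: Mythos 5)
Your overall strategy matches the paper's for two of the three logics, but diverges on the third, and your main technical worry is resolved in the paper by a simpler device than the ones you propose. For \GFt$[\lessv,\succh]$ and \GFt$[\lessv,\lessh]$ the paper does exactly what you sketch: it redoes the \AExpTime{} encoding of \cite{Kie02}, replacing the $2n$ unary predicates $C_i,P_i$ on a cell node by $2n$ extra \emph{sibling} nodes each carrying one of two predicates $\mathit{zero}/\mathit{one}$, and addressing the $i$-th bit with a guarded formula $\exists y\, (x\lessh y \wedge \mathit{Path}_i(y)\wedge \mathit{one}(y))$. The point you flag as delicate --- how to realize a ``next bit'' step when only the transitive $\lessh$ is available --- is handled without any auxiliary unary predicates or parity markers: one defines $\mathit{Path}_{\geq 0}(y)=\neg\exists x\, x\lessh y$ and $\mathit{Path}_{\geq i+1}(y)=\exists x\,(x\lessh y\wedge \mathit{Path}_{\geq i}(x))$, and then $\mathit{Path}_i=\mathit{Path}_{\geq i}\wedge\neg\mathit{Path}_{\geq i+1}$ pins down the exact horizontal position; this nested formula is guarded and of polynomial size, so your proposed workarounds are unnecessary. (The paper also notes one further adjustment you did not mention: the two successor configurations must be rooted at distinct nodes rather than being siblings, so that the sibling-based bit counting is not corrupted.) For \GFt$[\lessv,\succv]$, however, the paper takes a much shorter route than your vertical counter gadget: it reduces directly from \GFto{} over \emph{arbitrary} (non-singular) finite trees, which is already \ExpSpace-hard, by expanding each node into an $\elem$-labelled node whose children are leaves carrying the unary predicates that held at it, and translating $p(x)$ as $\exists y\,(x\succv y\wedge p(y))$. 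This factors the singularity issue out of the machine encoding entirely; your approach of re-running the whole ATM construction with vertical bit chains is plausible but considerably more laborious and would require checking non-interference between the bit gadgets and the vertically arranged cells. As written, your proposal is a correct plan in outline, but the $\lessh$ case is left with an acknowledged gap that needs the $\mathit{Path}_{\geq i}$ trick (or an equally watertight substitute) to close.
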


\section{Future work}

%

One possible direction of a further research could be investigating the case in which infinite trees are admitted as models. It seems that the complexity results we have obtained
for finite trees can be transfered to this case without major difficulties. 
It could be interesting to examine also the cases in which $\tau_{bin}$ contains $\succv$ but does not contain $\lessv$. A related result
is obtained in \cite{CW13}, where \NExpTime-completeness of \FOt{} with counting quantifiers and arbitrary number of binary symbols, of which fixed two have to be interpreted as child relations in two trees. The trees considered in \cite{CW13} are, however, ranked and unordered.

%

\medskip\noindent
{\bf Acknowledgement.} Similar results were obtained independently in \cite{BBLW13}. The two works were merged into a single paper \cite{BBCKLMW13}. 

\bibliographystyle{plain}
\bibliography{fo2trees_kiero} 

\begin{thebibliography}{10}

\bibitem{ABN98}
H.~Andr{\'e}ka, J.~van Benthem, and I.~N\'{e}meti.
\newblock Modal languages and bounded fragments of predicate logic.
\newblock {\em Journal of Philosophical Logic}, 27:217--274, 1998.

\bibitem{BBCKLMW13}
Saguy Benaim, Michael Benedikt, Witold Charatonik, Emanuel Kiero\'nski,
  Rastislav Lenhardt, Filip Mazowiecki, and James Worrell.
\newblock Compleixty of two-variable logic on finite trees.
\newblock {\em Accepted for ICALP}, 2013.

\bibitem{BBLW13}
Saguy Benaim, Michael Benedikt, Rastislav Lenhardt, and James Worrell.
\newblock Controlling the depth, size, and number of subtrees in two variable
  logic over trees, 2013.

\bibitem{BMSS09}
Mikolaj Bojanczyk, Anca Muscholl, Thomas Schwentick, and Luc Segoufin.
\newblock Two-variable logic on data trees and xml reasoning.
\newblock {\em J. ACM}, 56(3), 2009.

\bibitem{CKS81}
A.~K. Chandra, D.~Kozen, and L.~J. Stockmeyer.
\newblock Alternation.
\newblock {\em J. ACM}, 28(1):114--133, 1981.

\bibitem{CW13}
W.~Charatonik and P.~Witkowski.
\newblock Two-variable logic with counting and trees.
\newblock {\em Accepted for LICS}, 2013.

\bibitem{EVW02}
Kousha Etessami, Moshe~Y. Vardi, and Thomas Wilke.
\newblock First-order logic with two variables and unary temporal logic.
\newblock {\em Inf. Comput.}, 179(2):279--295, 2002.

\bibitem{GKV97}
E.~Gr{\"a}del, P.~Kolaitis, and M.~Vardi.
\newblock On the decision problem for two-variable first-order logic.
\newblock {\em Bulletin of Symbolic Logic}, 3(1):53--69, 1997.

\bibitem{GO98}
E.~Gr{\"a}del and M.~Otto.
\newblock {On Logics with Two Variables}.
\newblock {\em Theoretical Computer Science}, 224:73--113, 1999.

\bibitem{GradelOR99}
E.~Gr{\"a}del, M.~Otto, and E.~Rosen.
\newblock Undecidability results on two-variable logics.
\newblock {\em Archiv f{\"{u}}r Mathematiche Logik und Grundlagenforschung},
  38(4-5):313--354, 1999.

\bibitem{Idziak88}
Katarzyna Idziak and Pawel~M. Idziak.
\newblock Decidability problem for finite heyting algebras.
\newblock {\em J. Symb. Log.}, 53(3):729--735, 1988.

\bibitem{KMW62}
A.S. Kahr, E.F. Moore, and H.~Wang.
\newblock Entscheidungsproblem reduced to the $\forall \exists \forall$ case.
\newblock {\em Proc. Nat. Acad. Sci. U.S.A.}, 48:365--377, 1962.

\bibitem{Kie02}
E.~Kiero\'{n}ski.
\newblock \mbox{EXPSPACE}-complete variant of guarded fragment with
  transitivity.
\newblock In {\em STACS}, volume LNCS 2285, pages 608--619. Springer Verlag,
  2002.

\bibitem{Kie2011}
E.~Kiero\'{n}ski.
\newblock Decidability issues for two-variable logics with several linear
  orders.
\newblock In {\em Computer Science Logic}, volume~12 of {\em LIPIcs}, pages
  337--351, 2011.

\bibitem{KO12}
E.~Kiero\'{n}ski and M.~Otto.
\newblock Small substructures and decidability issues for first-order logic
  with two variables.
\newblock {\em Journal of Symbolic Logic}, 77:729--765, 2012.

\bibitem{KMPT12}
Emanuel Kieronski, Jakub Michaliszyn, Ian Pratt-Hartmann, and Lidia Tendera.
\newblock Two-variable first-order logic with equivalence closure.
\newblock In {\em LICS}, pages 431--440. IEEE, 2012.

\bibitem{Marx04}
Maarten Marx.
\newblock Xpath with conditional axis relations.
\newblock In {\em EDBT}, volume 2992 of {\em Lecture Notes in Computer
  Science}, pages 477--494. Springer, 2004.

\bibitem{Marx05}
Maarten Marx.
\newblock First order paths in ordered trees.
\newblock In {\em ICDT}, volume 3363 of {\em Lecture Notes in Computer
  Science}, pages 114--128. Springer, 2005.

\bibitem{Mor75}
M.~Mortimer.
\newblock On languages with two variables.
\newblock {\em Zeitschrift f{\"{u}}r Mathematische Logik und Grundlagen der
  Mathematik}, 21:135--140, 1975.

\bibitem{Otto01}
M.~Otto.
\newblock Two-variable first-order logic over ordered domains.
\newblock {\em Journal of Symbolic Logic}, 66:685--702, 2001.

\bibitem{Rabin69}
Michael~O. Rabin.
\newblock Decidability of second-order theories and automata on infinite trees.
\newblock {\em Transactions of the American Mathematical Society}, 141:pp.
  1--35, 1969.

\bibitem{Sto74}
Larry~J. Stockmeyer.
\newblock {\em The Complexity of Decision Problems in Automata Theory and
  Logic}.
\newblock PhD thesis, MIT, Cambridge, Massasuchets, USA, 1974.

\bibitem{ST04}
Wieslaw Szwast and Lidia Tendera.
\newblock The guarded fragment with transitive guards.
\newblock {\em Ann. Pure Appl. Logic}, 128(1-3):227--276, 2004.

\bibitem{Weis11}
Philipp Weis.
\newblock {\em Expressiveness and succinctness of first-order logic on finite
  words}.
\newblock PhD thesis, University of Massachusetts Amherst, USA, 2011.

\end{thebibliography}

\begin{appendix}

\newtheorem*{lemma-non}{Lemma}
\newtheorem*{theorem-non}{Theorem}

\section{Proof of Lemma \ref{l:procedure}}

\begin{proof}
Assume that $\phi$ is satisfiable. By Lemma \ref{l:shortpaths} and Lemma \ref{l:narrowtrees} there exists a small model $\str{T}\models \phi$.
The procedure accepts $\phi$ by making all its guesses in accordance to $\str{T}$, i.e.~in the first step it sets $\bar{\alpha}$ to be equal
to the full type of the root of $\str{T}$ and then in each step it sets $\bar{\alpha}_i$ to be the full type of the $i$-th child of the previously considered element.
In the opposite direction, from an accepting
(tree-)run $t$  of the procedure we can naturally construct a tree structure $\str{T}_t$, with $1$-types of elements as guessed during the execution. Our procedure
guesses actually not only $1$-types but full types of elements. The function $locally$-$consistent$ guarantees that the full types of elements in $\str{T}_t$ are
indeed as guessed. Since the procedure checks if each of those full types is $\phi$-consistent, then by Proposition \ref{p:consistent_types} we have that $\str{T}_t \models \phi$. 
\epr
\end{proof}

\section{Proof of Lemma \ref{l:pqr}}

\begin{proof}
Suppose $\tree \models \phi[u]$, then there is  $v \in \tree$ such that $\tree \models \psi[u,v]$. If $u \fw v$ then $\tree \models \psi_{\upharpoonright x \fw y}[u,v]$. By definition $u \in R'$ and thus there is $q \in Q$ such that $q \lesseqv u$. The cases $v \lessv u$ and $u \lessv v$ are similar. If $u = v$ then $\tree \models \psi[u,u]$ and thus $\tree \models \psi_{\upharpoonright x = y}[u,u]$.
In the opposite direction, suppose there is $q \in Q$ such that $q \lesseqv u$. Notice that if $q \lesseqv u$ then for every node $v$ we have $q \fw v \Rightarrow u \fw v$. So if $q \in R'$ then there is a node $v$ such that  $\tree \models \psi_{\upharpoonright x \fw y}[u,v]$. Otherwise $q \in Q'$ and there exists a node $v$ such that $\tree \models \psi_{\upharpoonright y \lessv x}[u,v]$. In both cases there is a node $v$ such that $\tree \models \psi[u,v]$ and thus $\tree \models \phi[u]$. The case if there is  $p \in P$ such that $u \lesseqv p$ is similar. If $\tree \models \psi_{\upharpoonright x = y}[u,u]$ then $\tree \models \psi[u,u]$ and thus $\tree \models \phi[u]$.
\epr
\end{proof}

\section{Proof of Lemma \ref{l:intervals}}

\begin{lemma-non} {\bf \ref{l:intervals}}
Let $\phi \in \FOto{}$ over the alphabet $\tau_0$ in ENNF and with one free variable, let $\tree$ be a tree over the alphabet $\tau_0$, and let $a \in \tau_0$. There is a set $S \subseteq T$ which is a union of tree slices in $\tree$ such that for every $i \in \tree^a: \tree \models \phi[u]$ iff $u \in S$; and every path in $\tree$ intersects at most $|\phi|^2$ tree slices from $S$.
\end{lemma-non}

\begin{proof}
 Induction on the structure of $\phi$. We consider only the case when $\phi = \exists y \psi(x,y)$. Otherwise the proof is similar as in the corresponding Lemma 2.1.10 from \cite{Weis11}. Let
\[\psi(x,y) = \beta(x \lessv y, x {=} y, y \lessv x, x \fw y, \xi_1(x), \dots, \xi_s(x), \zeta_1(y), \dots, \zeta_t(y))\]
Applying the inductive hypothesis to the formulas $\xi_\sigma$ , $\sigma \in [1, s]$, let $S_\sigma$ be the set as described in the statement of this lemma, and let $\trs_{(\sigma, k_1)} , \dots , \trs_{(\sigma,k_\sigma )}$ be tree slices such that every path in $\tree$ intersects at most $|\xi_\sigma|^2$ of them and $S_\sigma = \bigcup_{l=1}^{k_\sigma}\trs_{(\sigma,l)}$. We define the set $H = \bigcup_{\sigma=1}^{s}S_\sigma \cup \{r\} \cup \Ll$, where $r$ is the root of $\tree$ and $\Ll$ is the set of leaves in $\tree$.

Looking at each tree slice $\trs$ bounded by points from $H$, the truth values of the formulas $\xi_1, \dots, \xi_s$ remain constant among all points from $\trs^a$. Let $\xi_1^\star, \dots, \xi_s^\star$ be these respective true values. Thus, on all nodes from $\trs^a$, $\phi(x)$ is equivalent to $\exists y \beta(x \lessv y, x{=}y, y \lessv x, x \fw y, \xi_1^\star, \dots, \xi_s^\star, \zeta_1(y), \dots, \zeta_t(y))$. This formula satisfies the requirements of Lemma \ref{l:pqr}, so that the truth of $\phi(x)$ over $\trs^a$ is determined by the relative position of $x$ with respect to $P$, $Q$ and by truth of the formulas $\zeta_1(x), \dots, \zeta_t(x)$ for the nodes in between $P$ and $Q$. We now can construct the set $S$ of all nodes from $\tree^a$ where $\phi(x)$ is true as the union of tree slices bounded by: points from $H$; points that result from applying this lemma to the formulas $\zeta_1(x), \dots, \zeta_t(x)$; or points from $P$ and $Q$ added on every tree slice $\trs$.

We set a path in $\tree$ and count the number of tree slices from $S$ this path intersects. An intersection of a path from $\tree$ with a tree slice is an interval. By the remark made after Lemma \ref{l:pqr} we know there is at most one point from $P$ and $Q$ added on every path in $I$, thus there is at most one point $p \in P$ and $q \in Q$ on every interval. This means we can use the calculations in Lemma 2.1.10 from \cite{Weis11} to achieve at most $|\phi|^2$ intervals on every path in $\tree$.

\epr
\end{proof}

\section{Remaining part of the proof of Lemma \ref{l:shortpathssing}}

We argue that $\tree' \models \phi$.
To see this, we show by induction that for every subformula $\eta$ of $\phi$ with at most one free variable and all $u \in T'$, $\tree \models \eta[u]$ iff $\tree' \models \eta[u]$.

If $\eta$ is an atomic formula or a boolean combination of other formulas then the claim is obvious. Suppose $\eta = \phi_\kappa$ for some $\kappa \in [1,k]$.

Suppose that $u \in \trs$ and $\tree \models \eta[u]$. Then there is a $v \in \tree$ such that $\tree \models \psi_\kappa[u,v]$. Let $\trs \in \str{I}^a_\kappa$ such that $u \in \trs$. We find $\hv \in T'$ such that $\tree \models \psi_\kappa[u,v]$ as follows: if $u \lessv v$ then there is a $\hv \in P_\trs$ such that $v \lesseqv \hv$, if $v \lessv u$ then there is a $\hv \in Q_\trs$ such that $\hv \lesseqv v$, if $u \fw v$ then there is a $\hv \in R_\trs$ such that $v \lesseqv \hv$, if $u = v$ then we set $\hv = u$. Clearly $\tree' \models \psi_\kappa[u, \hv]$ and thus $\tree' \models \eta[u]$.
Suppose now that $u \in T'$ and $\tree' \models \eta[u]$. Then it is easy to see that $\tree \models \eta[u]$.

So because $\tree \models \phi$, we have $\tree' \models \phi$. We now show that paths in $\tree'$ have a bounded length. Set $\trs \in \str{I}^a_\kappa$ and the formula $\psi_\kappa^\trs(x,y)$ as in fragment of this proof from the main part of the paper.
For every $b \in \tau_0$ and every $i \in [1,t]$ let $S^{'b}_i$ be a set as in Lemma \ref{l:intervals} applied to the formula $\zeta_i(y)$, where $S_i^{'b}$ intersects at most $|\zeta_i|^2$ tree slices on every path. Thus there is a set $\str{K}^b_\kappa$ of tree slices $\trs'$ such that every path in $\tree$ intersects at most $2 \cdot \sum_{i \in [1,t]}|\zeta_i|^2$ of them and $\bigcup \str{K}^b_\kappa = \tree$. Set a path in $\tree$ and let $\str{J}^a_\kappa$ be the set of intervals that are the intersections of this path with tree slices from $\str{K}^a_\kappa$. We claim that there is at most one element from $P_\trs^b$ on every $J \in \str{J}^b_\kappa$. Suppose we have $u \in \trs$ and $v_1, v_2 \in J \cap P_\trs^b$. We show that $\tree \models \psi_{\kappa \upharpoonright x \lessv y}^\trs[
u,v_1]$ iff $\tree \models \psi_{\kappa \upharpoonright x \lessv y}^\trs[u,v_2]$. Indeed recall that $\psi_{\kappa \upharpoonright x \lessv y}^\trs(x,y) = \beta(\top, \bot, \bot, \bot,\xi_1^\trs, \dots, \xi_s^\trs, \zeta_1(y), \dots, \zeta_t(y))$. Thus the boolean value of $\beta$ depends on the boolean values of $\zeta_i(y)$. But we assumed that they are the same for $v_1$ and $v_2$. Since $P_\trs$ is a set of maximal nodes then $v_1 = v_2$. We can do analogous calculations for the sets $Q_\trs$ and $R_\trs$.
Altogether the length of every path in $\tree'$ is at most $3 \cdot 2 \cdot \sum_{\kappa \in [1,k], a \in \tau_0, i \in [1,t]}|\zeta_i|^2 \leq 6 \cdot |\tau| \cdot |\phi|^3$.
\epr

\section{Complexity of \GFto{} over singular trees}

In this section we expand our arguments for \PSpace{} upper bound for \GFto{} over singular trees.

A \GFto{} formula $\phi$ is in \emph{normal form} if $\phi=\bigwedge_{i \in I}\forall xy (\eta_i(x,y) \Rightarrow \psi_i(x,y)) \wedge \bigwedge_{i \in J} \forall x (\lambda_i(x) \Rightarrow \exists y (\eta_i(x,y) \wedge \psi_i(x,y)))$, for some disjoint index sets $I$ and $J$, where $\eta_i$ is a guard of the form $x\lessv y$, $y \lessv x$ or $x{=}y$, $\lambda_i(x)$ is an atomic formula $a(x)$ for some unary symbol $a$, and $\psi_i(x,y)$ is a boolean combination of unary atomic formulas. 

We can prove a slightly weaker counterpart of  Lemma \ref{l:normalform} for \GFt. Namely, 
we show that satisfiability of a \GFto{} formula can be reduced to satisfiability of a normal form \GFto{} formula nondeterministically.

\begin{lemma} \label{l:normalformgf}
There exists a nondeterministic procedure {\tt GF$^2[\lessv]$-normalisation}, such that for 
a \GFto{} formula $\phi$ over a signature $\tau$, and a tree frame $\cT$  consisting of at least two nodes the following holds. The formula $\phi$ is  satisfiable
over  $\cT$ (singularly satisfiable over $\cT$)
 if and only if there exists a polynomial execution of {\tt GF$^2[\lessv]$-normalisation} on $\phi$ producing a normal form \GFto{} formula $\phi'$ over a signature $\tau'$ consisting of $\tau$ and some additional unary symbols, satisfiable over $\cT$ (satisfiable over $\cT$ in a model which restricted to $\tau$ is singular). 
\end{lemma}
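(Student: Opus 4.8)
The plan is to adapt the usual Scott-style normal-form translation to the guarded setting, with two twists: every auxiliary unary predicate is kept \emph{outside} $\tau$, so that it never spoils singularity of the $\tau$-reduct of a model, and subformulas with no free variable are handled by \emph{guessing} a truth value, which is the only place nondeterminism enters. First I would put $\phi$ into ENNF (size at most $2|\phi|$), and then repeatedly replace an innermost quantified subformula occurrence $\rho$, processing them bottom-up so that the body of $\rho$ is quantifier-free when we reach it.

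Case 1: $\rho$ has exactly one free variable. Say it is $x$, so $\rho$ is $\exists y(\alpha(x,y)\wedge\chi)$ or $\neg\exists y(\alpha(x,y)\wedge\chi)$; after discarding a vacuous quantifier if $\alpha$ does not mention $y$, the \GFt{} guard condition forces $\alpha\in\{x\lessv y,\,y\lessv x,\,x{=}y\}$, which over a tree determines the truth of every order atom occurring in $\chi$. Substituting those values turns $\chi$ into a Boolean combination $\chi'$ of unary atoms. I introduce a fresh unary symbol $P_\rho\notin\tau$, replace every occurrence of $\rho$ by $P_\rho(x)$, and conjoin $\forall x\big(P_\rho(x)\Rightarrow\exists y(\alpha(x,y)\wedge\chi'(x,y))\big)$ in the positive case, or $\forall xy\big(\alpha(x,y)\Rightarrow(P_\rho(x)\Rightarrow\neg\chi'(x,y))\big)$ in the negated case; both are already of the required normal-form shape. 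Because $\phi$ is in ENNF, $\rho$ occurs positively, hence $\phi$ is monotone in $P_\rho$ and only the implication ``$P_\rho(x)\Rightarrow\rho(x)$'' that these conjuncts express is needed — no complement-predicate gadget is required. The subcase where the free variable is $y$ is identical after consistently swapping $x$ and $y$, which keeps the guard within $\{x\lessv y,\,y\lessv x,\,x{=}y\}$.

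Case 2: $\rho$ is a sentence; after bottom-up processing it is, up to the outer negation, $\exists y(a(y)\wedge\chi(y))$ or $\exists y\,\chi(y)$ with $\chi$ a Boolean combination of unary atoms. Here the procedure guesses $t_\rho\in\{\top,\bot\}$, replaces $\rho$ by $t_\rho$, and appends a fixed block $\Delta_\rho$ of normal-form conjuncts expressing ``$\rho$ has truth value $t_\rho$''. If $\rho$ should be false, $\Delta_\rho$ is the single conjunct $\forall xy\big(x{=}y\Rightarrow\neg(a(x)\wedge\chi(x))\big)$. If $\rho$ should be true, $\Delta_\rho$ uses a constant number of fresh unary predicates: one to mark the unique root (definable by $\forall xy(x\lessv y\Rightarrow\mathrm{NR}(y))$, $\forall x(\mathrm{NR}(x)\Rightarrow\exists y\,(y\lessv x))$ and an exactly-one-of-$\{\mathrm{R},\mathrm{NR}\}$ conjunct), a marker $\mathrm{U}_\rho$ propagated up the tree ($\forall xy(x\lessv y\Rightarrow(\mathrm{U}_\rho(y)\Rightarrow\mathrm{U}_\rho(x)))$ together with $\mathrm{R}(x)\Rightarrow\mathrm{U}_\rho(x)$), and a conjunct forcing $\mathrm{U}_\rho$ down to an actual $\chi$-node ($\forall x(\mathrm{W}_\rho(x)\Rightarrow\exists y(x\lessv y\wedge\mathrm{U}_\rho(y)))$ after naming $\mathrm{U}_\rho\wedge\neg\chi$ by a fresh atom $\mathrm{W}_\rho$); finiteness of the tree then yields a node satisfying $\chi$. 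The hypothesis that $\cT$ has at least two nodes only serves to exclude the single-node frame, on which satisfiability is decided directly. After all replacements, $\phi$ is a variable-free Boolean combination of the guessed constants; if it evaluates to $\bot$, this run outputs a fixed unsatisfiable normal-form formula (e.g. $\forall xy(x{=}y\Rightarrow(a(x)\wedge\neg a(x)))$) and is useless, otherwise it outputs $\phi'$, the conjunction of all appended conjuncts.

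Correctness is checked in both directions. If $\str{T}\models\phi$ over $\cT$ (with $\str{T}$ singular, in the singular variant), run the procedure guessing each $t_\rho$ as the actual truth value of $\rho$ in $\str{T}$, and expand $\str{T}$ by interpreting each $P_\rho$ as the extension of $\rho$ and each marker predicate by its intended meaning; a direct check shows the expansion satisfies all appended conjuncts and that the Boolean skeleton is $\top$, so the run produces a $\phi'$ satisfied over $\cT$, and since the new symbols are outside $\tau$ the $\tau$-reduct of the expansion is $\str{T}$, still singular. Conversely, given a run producing $\phi'$ and a model $\str{T}'\models\phi'$ over $\cT$ (with $\str{T}'$ restricted to $\tau$ singular, in the singular variant), an induction on the processing order — using the appended conjuncts and the inductive hypothesis for inner renamings — shows that each $P_\rho$ lies inside the extension of $\rho$ in $\str{T}'$ and that each $\Delta_\rho$ pins the truth value of $\rho$ in $\str{T}'$ to $t_\rho$; monotonicity of the ENNF formula in the $P_\rho$'s, together with the fact that the skeleton is $\top$, then gives $\str{T}'\models\phi$, hence the $\tau$-reduct of $\str{T}'$ models $\phi$. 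Finally, $\phi$ has $O(|\phi|)$ quantified subformulas, each contributing $O(1)$ fresh symbols and $O(1)$ conjuncts of size $O(|\phi|)$, and the procedure guesses $O(|\phi|)$ bits, so every run is polynomial. I expect the main obstacle to be exactly Case 2: forcing ``some node satisfies $\chi$'' with conjuncts of the rigid normal-form shape (in particular with a single positive atom in the $\lambda_i$ position) is what makes the marker-propagation gadget necessary, and the impossibility of committing to the truth of such sentences deterministically is precisely why this lemma, unlike Lemma~\ref{l:normalform}, only yields a nondeterministic reduction.
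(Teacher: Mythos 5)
Your proposal is correct and follows essentially the same route as the paper: rename guarded subformulas with one free variable by fresh unary predicates (using positivity in ENNF so that a single implication suffices), guess the truth values of the closed subformulas --- which is exactly where the paper's nondeterminism comes from as well --- turn the falsified ones into $\forall\forall$ conjuncts guarded by $x{=}y$, and anchor the verified existential sentences at the root. The only real divergence is local: the paper outsources the renaming step to a known Scott-style transformation for \GFt{} from \cite{ST04}, and it handles a verified sentence $\exists x(\lambda_i(x)\wedge\psi_i(x))$ with the single conjunct $\forall x\,(root(x)\Rightarrow\exists y\,(x\lessv y\wedge((\lambda_i(y)\wedge\psi_i(y))\vee(\lambda_i(x)\wedge\psi_i(x)))))$, whose dummy witness $y$ is precisely why the frame must have at least two nodes; your finite-descent gadget with $\mathrm{U}_\rho$ and $\mathrm{W}_\rho$ is also sound (and does not even need the two-node hypothesis, while the upward-propagation conjunct in it is redundant), just heavier than necessary.
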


\begin{proof}
By the work from \cite{ST04} it follows that
for a given \GFt{} formula $\phi$ over a signature $\tau$ there exists a polynomially computable formula $\phi'=\bigwedge_{i \in I} ((\forall x \; r_i(x))\Leftrightarrow \exists x (\lambda_i(x) \wedge \psi_i(x)) \wedge ((\forall x \; r_i(x)) \vee (\forall x \; \neg r_i(x)))) \wedge \bigwedge_{i \in J} \exists x (\lambda_i(x) \wedge \psi_i(x)) \wedge \phi''$, for some disjoint index sets $I$ and $J$,  over a signature consisting of $\tau$ and
some additional unary predicates, where $\lambda_i(x)$ is an atomic formula  $a(x)$ for some unary symbol $a$, $\psi_i(x)$ is a boolean combinations of atoms, $\phi''$ is in normal form, and none of $r_i$-s is used as a guard, such that $\phi$ and $\phi'$ are satisfiable over the same tree frames.
Now for each $i \in I$ we guess whether $\forall x \; r_i(x)$ is satisfied or not and replace the occurrences of $r_i(x)$ and $r_i(y)$ in $\phi'$ 
by $\top$ or $\bot$ appropriately. We thus get a conjunction of a normal form formula, some formulas of the form
$\exists x (\lambda_i(x) \wedge \psi_i(x))$, and some formulas of the form $\neg \exists x (\lambda_i(x) \wedge \psi_i(x))$.
A formula of the last type can be rewritten as $\forall xy (x=y \Rightarrow \neg \lambda_i(x) \vee \neg \psi_i(x))$.
To deal with purely existential statements we introduce a fresh unary predicate $root$ and make it true precisely
at the root of a tree by adding the conjunct $\forall xy (x{=}y \Rightarrow (root(x) \Leftrightarrow \neg \exists y (y \lessv x)))$.
A formula $\exists x (\lambda_i(x) \wedge \psi_i(x))$ can be now rewritten as the normal form conjunct $\forall x (root(x) \Rightarrow \exists y (x \lessv y \wedge (\lambda_i(y) \wedge \psi_i(y)) \vee (\lambda_i(x) \wedge \psi_i(x))))$. This transformation works properly 
over trees containing at least two nodes. 
The describe nondeterministic procedure is thus the required {\tt GF$^2[\lessv]$-normalisation} procedure.
\epr
\end{proof}

Let us see that in an arbitrary (not necessarily singular) model $\str{T}$ of a normal form \GFto{} formula $\phi$ we can find a submodel in which the degree of nodes is bounded polynomially in $|\phi|$ and in the length of the paths of $\str{T}$. As we are able to shorten paths in singular
models to length polynomial in $|\phi|$, this will lead to a polynomial bound on the degree of nodes in singular models of \GFto{} formulas (which, as we have seen, contrasts with the case of \FOto{}).

\begin{lemma} \label{l:narrowertrees}
Let $\phi$ be a normal form \GFto{} formula and let $\str{T} \models \phi$. Then there exists a submodel
$\str{T}' \models \phi$ of $\str{T}$ in which the number of successors of each node is bounded by $max \cdot |\phi|$,
where $max$ is the length of the longest path in $\str{T}$.
\end{lemma}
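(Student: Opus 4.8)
The plan is to prune children of each node so that only polynomially many survive, retaining enough witnesses to satisfy all the $\forall\exists$-conjuncts of $\phi$. The key observation is that, for a guarded formula over $\lessv$ only, the ``free'' position has disappeared from the analysis: every $\forall\exists$-conjunct $\forall x(\lambda_i(x)\Rightarrow\exists y(\eta_i(x,y)\wedge\psi_i(x,y)))$ has its guard $\eta_i$ equal to $x{=}y$, $x\lessv y$, or $y\lessv x$. The conjuncts with guard $x{=}y$ impose no branching requirements, and those with guard $y\lessv x$ ask for an \emph{upper} witness, which lives on the path from the node to the root and is therefore unaffected by deleting subtrees. So the only witnesses we must preserve by choosing children carefully are the \emph{lower} witnesses, i.e.\ witnesses for conjuncts with guard $x\lessv y$.

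First I would fix a node $v$ and consider some ancestor-or-equal node $z$ of $v$ (including $v$ itself) together with a conjunct $\phi_i$ with guard $x\lessv y$ such that $z$ satisfies $\lambda_i$ and currently takes its witness $w_i^z$ strictly below $v$, i.e.\ $v\lesseqv w_i^z$ or $w_i^z$ is in some subtree hanging off $v$. For each such pair $(z,\phi_i)$ I keep exactly one child $u$ of $v$ whose subtree (including $u$) contains an appropriate witness $w$ with $\psi_i[z,w]$, and I mark that child. The number of ancestors-or-equal of $v$ is at most $max$ (the length of the longest path), and the number of conjuncts of the form in question is at most $|\phi|$; moreover for each such $z$ only its own witnesses matter, not those of nodes below $z$. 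Hence at most $max\cdot|\phi|$ children of $v$ get marked. I then delete all unmarked children of $v$ together with their subtrees.

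Next I would check that after this pruning (applied top-down, say in depth-first order) the result $\str{T}'$ still models $\phi$. For the $\forall xy$-conjuncts this is immediate: deleting elements cannot create a violating pair. For the $\forall\exists$-conjuncts I argue per guard type. Guard $x{=}y$: these are ``diagonal'' constraints on a single element and survive since $1$-types of surviving elements are unchanged. Guard $y\lessv x$ (upper witnesses): for any surviving node $u$ needing such a witness, its witness in $\str{T}$ was an ancestor, which still lies on the path from $u$ to the root in $\str{T}'$ and retains its $1$-type, so the witness survives. Guard $x\lessv y$ (lower witnesses): here I invoke the marking argument. For any surviving node $z$ requiring a lower witness for $\phi_i$, either its witness $w$ in $\str{T}$ was not below $v$ for any deleted child's parent $v$ on the path from $z$, in which case $w$ survives; or $w$ was below some node $v$ whose children were pruned with $z$ an ancestor-or-equal of $v$ — but then by construction a child of $v$ was marked precisely to retain a witness for $(z,\phi_i)$, so a suitable witness still exists in $\str{T}'$.

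The main obstacle I anticipate is bookkeeping in the top-down induction: when we prune at a lower node $v$ we must be sure we have not destroyed a witness that a higher, already-processed node $z$ was relying on. This is handled by making the marking \emph{cumulative}: at node $v$ we mark children not only for conjuncts of $v$ itself but for every ancestor $z$ of $v$ together with every relevant conjunct $\phi_i$, always choosing a child whose subtree still contains the needed witness; since processing is top-down and we only ever delete subtrees (never change $1$-types of retained elements), a witness chosen for $(z,\phi_i)$ at level of $v$ stays available. A second, minor point is the count: one must observe that each deleted element was removed because it (or an ancestor of it along the path back to some retained node) was an unmarked child, and that the bound $max\cdot|\phi|$ on marked children per node does not depend on the iteration, so the final degree bound holds uniformly. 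Both points are routine once the invariant ``every retained node still has all its required witnesses, and witnesses chosen for ancestors are retained'' is stated precisely and maintained through the depth-first sweep.
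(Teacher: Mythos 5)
Your proof is correct and follows essentially the same route as the paper: both prune children of each node in a depth-first sweep, keeping only those children needed to supply lower witnesses for the node itself and its at most $max$ ancestors (each contributing at most $|\phi|$ witnesses), while noting that upper witnesses and the universal conjuncts survive automatically because every retained element keeps its entire path to the root. The paper handles your bookkeeping concern by fixing each witness once and marking the whole path down to it, which is exactly the cumulative-marking invariant you describe.
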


\begin{proof}
Let $v$ be the root of $\str{T}$. For every conjunct $\phi_i$ of $\phi$ of the form 
$\forall x (\lambda_i(x) \Rightarrow \exists y (\eta_i(x,y) \wedge \psi_i(x,y)))$,
with $\eta_i(x,y)=x \lessv y$ pick a witness $w$ for $v$ and $\phi_i$,
mark $w$ and mark all the elements $u$ such that $\str{T} \models u \lessv w$, i.e.,~the
elements on the path from the root to $w$. Remove all subtrees rooted at successors of $v$
containing no marked elements. 
Repeat this process for all the elements $v$ of $\str{T}$,
say, in the depth-first manner. Note that the structure obtained after each step is a model of
$\phi$, since we explicitly take care of providing lower witnesses, and the upper witnesses are retained
automatically as every element which is not removed from the model is kept together with the whole path from
the root from the original model $\str{T}$.
Let $\str{T}'$ be the structure obtained after the final step of the above procedure.
Observe that the number of marked descendants of an element located at level $l$ is 
bounded by $(l+1) \cdot |\phi|$, thus the degree of each node of $\str{T}'$ is bounded
by $max \cdot |\phi|$ as required. 
\epr
\end{proof}

We recall the statement of Theorem \ref{t:gfto} from the main part of the paper, and prove its
part related to the upper bound. Lower bound is proved in the next section.
\begin{theorem-non} {\bf \ref{t:gfto}.}
The satisfiability problem for \GFto{} over finite singular trees is \PSpace-complete.
\end{theorem-non}
\begin{proof}

We show here that the problem belongs to \PSpace{} by designing an alternating polynomial time procedure. 
We first run the non-deterministic procedure {\tt GF$^2$[$\lessv$]-normalisation} (see Lemma \ref{l:normalformgf}) and obtain a formula $\phi'$ over signature $\tau'$. 
 It remains to test satisfiability of $\phi'$.
  The procedure builds a path in a model  together with the immediate successors of
	its nodes.  
Information about a node $u$ consists of its $1$-type, and a polynomially bounded set of atomic $1$-types
	the \emph{promised types of descendants} of $u$.   The procedure starts from guessing information about the root and
	then moves down the tree in the following way: when inspecting a node $u$ it guesses 
	information about all its children (polynomially many) and then proceeds universally to one of
	them. During the execution the following natural conditions are checked:
	\begin{enumerate}
	\item[(i)] Every guessed atomic type contains precisely one predicate from $\tau$.
	\item[(ii)] The set of promised types of descendants of the current node $u$ is sufficient to provide necessary witnesses for $u$  for conjuncts of $\phi'$ of the form
	$\forall x (\lambda_i(x) \Rightarrow \exists y (x \lessv y \wedge \psi_i(x,y)))$. 
	\item[(iii)] The current node has the required witnesses for the conjuncts of the form
	$\forall x (\lambda_i(x) \Rightarrow \exists y (y \lessv x \wedge \psi_i(x,y)))$ among its ascendants.
	\item[(iv)] The universal part $\forall \forall$ of $\phi'$ is not violated by a pair consisting of the current node $u$ and any of its
	ascendants.
	\item[(v)] Every promised type of a descendant  of the inspected node $u$ is either realised or promised by one of its children.
	\end{enumerate}
	The procedure accepts when it reaches (without violating the above conditions) in at most polynomially many steps a node with
	no promised descendants.
	
	%

The described alternating procedure works in time bounded polynomially in $\phi$, so, as \APTime=\PSpace{} \cite{CKS81},
it can be
also implemented to work in deterministic polynomial space.
We claim that it accepts $\phi$ iff $\phi$ has a finite singular tree model. Assume that $\phi$ is accepted. This means
that $\phi'$ has a tree model which restricted to $\tau$ is singular. By Lemma \ref{l:normalformgf} it follows that
$\phi$ has a singular model. In the opposite direction, let $\str{T} \models \phi$ be singular, and let $\cT$ be the frame of $\str{T}$. By Lemma \ref{l:shortpathssing} we can assume that
the depth of $\str{T}$ is bounded by $6 \cdot |\tau| \cdot |\phi|^3$. By Lemma \ref{l:normalformgf}, {\tt GF$^2$[$\lessv$]-normalisation} can produce $\phi'$ which is satisfiable over $\cT$, say in 
a model $\str{T}'$. By Lemma \ref{l:narrowtrees}, $\phi'$ is also satisfied in a submodel $\str{T}''$ of $\str{T}'$ in which the degree
of every node is bounded by $6 \cdot |\tau| \cdot |\phi|^3 \cdot |\phi'|$. Thus our alternating procedure can make all its guesses in
accordance to $\str{T}''$ and accept.
\epr

\end{proof}


\section{Lower bounds for logics over singular trees}

\begin{theorem}\label{th:singFOto}
  The satisfiability problem for \FOto{} over singular finite trees is
  \NExpTime-hard.
\end{theorem}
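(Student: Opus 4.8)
The plan is a polynomial-time reduction from the \emph{exponential tiling problem}: given a tiling system consisting of a finite set of tiles $D$, horizontal and vertical compatibility relations $H,V \subseteq D \times D$, a distinguished tile $d_0 \in D$, and a number $n$ in unary, decide whether the $2^n \times 2^n$ grid can be tiled so that consecutive tiles in each row are $H$-related, consecutive tiles in each column are $V$-related, and $d_0$ occupies position $(0,0)$. This problem is \NExpTime-complete. From an instance I will build an \FOto{} formula $\phi$, of size polynomial in $n$ and $|D|$, that is singularly satisfiable over finite trees if and only if the instance is positive.

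The intended singular model is a tree of depth two. Its unary predicates are $root$, one predicate $t_d$ per tile $d \in D$, and predicates $b_0,\dots,b_{2n-1}$. The root carries $root$; its children, called \emph{cells}, carry tile predicates; and each cell has some leaves as children, each carrying exactly one $b_p$. A cell $v$ is thought of as encoding a position $(\mathrm{row}(v),\mathrm{col}(v)) \in [0,2^n)\times[0,2^n)$, where the $p$-th bit of $\mathrm{row}(v)$, respectively $\mathrm{col}(v)$, is $1$ iff the formula $\delta_p(x) := \exists y\,(x \lessv y \wedge b_p(y))$, respectively $\delta_{n+p}$, holds at $v$ --- exactly the device from the $2^n$-children example above, now used to store the whole address \emph{off} the cell so as not to conflict with singularity. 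Being a cell is expressible as $\mathit{cell}(x) := \exists y\,(y \lessv x) \wedge \forall y\,(y \lessv x \Rightarrow root(y))$, that is, $x$ has exactly the root as a proper ancestor.

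Write $\mathit{lastcol}(x) := \bigwedge_{p<n}\delta_{n+p}(x)$ and $\mathit{lastrow}(x) := \bigwedge_{p<n}\delta_p(x)$, and let $\mathit{samerow}(x,y)$, $\mathit{samecol}(x,y)$, $\mathit{sameaddr}(x,y)$, $\mathit{colsucc}(x,y)$ (``$\mathrm{col}(y)=\mathrm{col}(x)+1$''), $\mathit{rowsucc}(x,y)$ be the obvious Boolean combinations of the formulas $\delta_p(x)$ and $\delta_p(y)$ expressing equality of the relevant blocks of bits and binary increment. Each of these stays within two variables: a bit of $x$ is read by the subformula $\exists y\,(x \lessv y \wedge b_p(y))$ and a bit of $y$ by $\exists x\,(y \lessv x \wedge b_p(x))$; the inner quantifier shadows the outer occurrence of that variable, but this is harmless since the subformula refers only to the \emph{other} variable, and no three elements are ever named simultaneously. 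Now let $\phi$ be the conjunction of:
(i) $\forall x\,(root(x) \leftrightarrow \neg\exists y\; y \lessv x)$ and $\forall x\,(\mathit{cell}(x) \Rightarrow \bigvee_{d \in D} t_d(x))$, so that, together with singularity, every cell carries a unique tile predicate;
(ii) $\exists x\,(\mathit{cell}(x) \wedge t_{d_0}(x) \wedge \bigwedge_{p<2n}\neg\delta_p(x))$, an origin cell;
(iii) $\forall x\,\big(\mathit{cell}(x) \wedge \neg\mathit{lastcol}(x) \Rightarrow \exists y\; \rho_H(x,y)\big)$, where $\rho_H(x,y) := x \fw y \wedge \mathit{cell}(y) \wedge \mathit{colsucc}(x,y) \wedge \mathit{samerow}(x,y) \wedge \bigvee_{(d,d')\in H}(t_d(x)\wedge t_{d'}(y))$;
(iv) the analogous conjunct with $\mathit{lastrow}$, $\mathit{rowsucc}$, $\mathit{samecol}$ and $V$;
(v) $\forall x\,\forall y\,\big(\mathit{cell}(x)\wedge\mathit{cell}(y)\wedge\mathit{sameaddr}(x,y) \Rightarrow \bigwedge_{d\in D}(t_d(x) \leftrightarrow t_d(y))\big)$, making the tiling well defined.

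For correctness, a tiling yields the depth-two model sketched above, and one checks (i)--(v) directly. Conversely, given a singular $\str{T}\models\phi$, put $\tau(a):=d$ whenever some cell of address $a$ carries $t_d$: conjunct (v) makes $\tau$ well defined, and conjuncts (ii)--(iv) make $\tau$ total on $[0,2^n)^2$ (walk from the origin along row- and column-successors), respect $H$ and $V$, and place $d_0$ at $(0,0)$, so $\tau$ solves the instance. Since $\phi$ uses no binary predicate other than $\lessv$ (recall $x \fw y$ abbreviates $x \neq y \wedge \neg(x \lessv y) \wedge \neg(y \lessv x)$) and has polynomial size, \NExpTime-hardness of \FOto{} over finite singular trees follows. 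I expect the only real work to be in the encoding itself: the address of a cell must live on auxiliary children rather than on the cell (to respect singularity), the formulas comparing addresses must fit into two variables (via the shadowing trick), and the adjacent cell is necessarily reached by a witness in \emph{free} position --- so conjuncts (iii)--(iv) genuinely need the $x \fw y$ conjunct, consistent with \GFto{} over singular trees being merely \PSpace. Once this is in place the correctness argument is routine.
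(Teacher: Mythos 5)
Your reduction is correct, but it takes a genuinely different route from the paper's proof. The paper reduces from satisfiability of \emph{monadic} $\FOt$ (known to be \NExpTime-complete), via a three-line structural translation $T(\cdot)$ that models an arbitrary unary structure as a depth-two singular tree: elements become $\elem$-labelled children of the root, and the predicates true of an element are pushed onto its leaf children, so that $p(x)$ becomes $\exists y\,(x \lessv y \wedge p(y))$. You instead reduce directly from the exponential tiling problem, encoding cells as children of the root, tiles as the cells' own (singular) labels, and the $2n$-bit address of each cell on auxiliary leaf children read off by $\delta_p(x)=\exists y\,(x\lessv y\wedge b_p(y))$ --- which is exactly the counting gadget the paper sketches in the main text right before remarking that ``this idea can be easily employed to obtain the \NExpTime{} lower bound.'' The paper's appendix proof is shorter and more modular, since all the combinatorial work is delegated to the known hardness of monadic $\FOt$; your proof is self-contained and has the virtue of making visible exactly where the free-position witnesses ($x \fw y$) enter, which matches the paper's observation that this ability is what separates $\FOto$ (\NExpTime) from $\GFto$ (\PSpace) over singular trees. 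The details check out: two distinct ``cells'' are forced into free position (a cell that is a proper ancestor of another cell would have to satisfy $root$, contradicting singularity together with your conjunct (i)), the variable-shadowing in the address-comparison formulas is the standard two-variable device, and totality plus well-definedness of the induced tiling follow from conjuncts (ii)--(v) as you argue.
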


\begin{proof}
  We give a~reduction from the satisfiability problem of unary \FOt,
  which is known to be \NExpTime-complete~(see e.g., \cite{EVW02}).
  For a~given $\FOt$ formula $\phi$ over a~unary signature $\tau$ we
  construct an equisatisfiable \FOto{} formula $\transl{\phi}$ over
  the signature $\tau\cup\{\lessv,\elem\}$ where $\elem$ is a~fresh unary
  predicate. Without loss of generality we may assume that $\phi$ is
  built from variables $x,y$, unary predicate symbols, boolean
  connectives $\wedge,\neg$ and existential quantification.

Now we inductively define the
translation $\transl{\phi}$.
\begin{eqnarray*}
\transl{p(x)}&=& \exists y\; x{\lessv}y \wedge p(y)\\
\transl{\neg \phi} &= & \neg\transl{\phi}\\
\transl{\phi_1\wedge\phi_2}& = & \transl{\phi}\wedge\transl{\phi_2}\\
\transl{\exists x\; \psi} &= & \exists x\; \elem(x)\wedge \transl{\psi}
\end{eqnarray*}
Note that $\transl{\phi}$ is a formula of length linear in
($|\phi|$). It remains to be shown that $\phi$ and $\transl{\phi}$ are
equisatisfiable.

For one direction, assume that $\str{A}$ is a~model of
$\phi$. Construct a~tree $\str{T}$ such that all elements of the
universe of~$\str{A}$ are immediate successors of the root of
$\str{T}$ and are labeled $\elem$; each such element $e$ has as many
immediate successors as there are predicates in $\tau$ that are true
of $e$, and each such successor is a~leaf labeled with a~distinct
predicate true of~$e$ in~$\str{A}$, see Figure~\ref{fig:repr}. It can
be easily proved by induction on the structure of $\phi$ that
$\str{T}$ is a~(singular) model of $\transl{\phi}$.
\begin{figure}[htb]
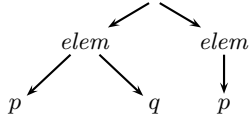

\begin{center}
\[
\begin{array}{c@{\hskip.5cm}c@{\hskip.5cm}c@{\hskip.5cm}c@{\hskip.5cm}c}
&&&\rnode{n00}{} \\[2ex]
&&\rnode{n10}{\elem}&&\rnode{n11}{\elem} \\[3ex]
&\rnode{n1f}{p} & &\rnode{n2a}{q}& \rnode{p}{p}\\[2ex]
\end{array}
\psset{nodesep=2pt}
\ncline{->}{n00}{n10}
\ncline{->}{n00}{n11}
\ncline{->}{n10}{n1f}
\ncline{->}{n10}{n2a}
\ncline{->}{n11}{p}
\]
\caption{Representation of a~structure over the signature
  $\{p,q\}$. There are two elements in the universe; the first belongs
  to the relations $p$ and $q$, the second to $p$.}
\label{fig:repr}
\end{center}
\end{figure}

For the other direction assume that $\str{T}$ is a~model of
$\transl{\phi}$. Construct a~structure $\str{A}$ such that the
universe of $\str{A}$ is the set of nodes labeled $\elem$ in~$\str{T}$
and for all elements $e$ and all predicates $p$, $p(e)$ is true in
$\str{A}$ if and only if there is a~node $e'$ labeled $p$ that is
below $e$ in~$\str{T}$. Again it is easy to prove by structural
induction that $\phi$ is true in~$\str{A}$.  \epr
\end{proof}

\begin{theorem}
  The satisfiability problem for \GFtb{} over singular finite trees is
  \ExpSpace-hard.
\end{theorem}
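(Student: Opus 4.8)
I would prove the lower bound by a polynomial many-one reduction from the satisfiability problem for \GFto{} over arbitrary finite trees, which is \ExpSpace-hard by the corollary following Lemma~\ref{l:procedure}. Given a \GFto{} formula $\phi$ over a signature $\tau_0\cup\{\lessv\}$ I would build, using a fresh unary predicate $\elem$, a \GFtb{} formula $\transl{\phi}$ over $\tau_0\cup\{\elem,\lessv,\succv\}$ that is singularly satisfiable over finite trees iff $\phi$ is satisfiable over finite trees. The idea is to encode a ``power node'' $v$ carrying a label set $\sigma\subseteq\tau_0$ by a singular $\elem$-node $\hat v$ whose children include, besides the encodings of the children of $v$, one leaf labelled $a$ for each $a\in\sigma$; the tree structure of the encoding mirrors that of the original, so that for $\elem$-nodes $\hat v,\hat w$ we have $\hat v\lessv\hat w$ iff $v\lessv w$. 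The reason the child relation $\succv$ is indispensable here --- and this is exactly what separates this case from the \PSpace{} bound of Theorem~\ref{t:gfto} --- is that the label of $v$ can then be recovered \emph{locally}, from the children of $\hat v$ alone, without seeing the labels buried deeper below $\hat v$.

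\textbf{The translation.} Concretely, $\transl{\phi}$ is the conjunction of structural axioms and a recursive translation of $\phi$. The structural axioms state that the root is an $\elem$-node, expressible as $\exists x(x{=}x\wedge\neg\exists y\,(y\lessv x)\wedge\elem(x))$, and (for cleanliness) that non-$\elem$ nodes are leaves. The recursive translation replaces every unary atom $a(x)$ by $\exists y\,(x\succv y\wedge a(y))$ and $a(y)$ by $\exists x\,(y\succv x\wedge a(x))$, keeps every binary guard $x\lessv y$ or $y\lessv x$ intact while inserting an $\elem(\cdot)$ conjunct for the newly bound variable into its matrix, turns a unary guard $a(x)$ into the guard $x{=}x$ with the old guard moved, translated, into the matrix, and relativises each top-level quantifier of $\phi$ to $\elem$. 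For correctness I would check that the encoding above yields a finite singular model of $\transl{\phi}$ whenever $\phi$ has a finite model, and conversely that from any finite singular model $\str{T}'$ of $\transl{\phi}$ one recovers a finite model of $\phi$ by contracting $\str{T}'$ onto its $\elem$-nodes (which form a tree rooted at the root of $\str{T}'$, with $\lessv$ the restriction of the one in $\str{T}'$) and setting ${\rm tp}(v)=\{a\in\tau_0 : \str{T}'\models\exists y\,(v\succv y\wedge a(y))\}$.

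\textbf{Main obstacle and remaining routine.} The delicate point is verifying that $\transl{\phi}$ really stays inside \GFtb{}: the translation introduces new quantifiers and relativisations within the two-variable budget. This goes through because $\exists y\,(x\succv y\wedge a(y))$ is itself a guarded formula (guard $x\succv y$); because substituting it for $a(x)$ inside a binary-guarded scope only reuses the variable $y$ in a well-scoped way --- no translated atom ever needs both outer variables at once, so the usual \FOt{} renaming trick keeps conjunctions and nestings of translated atoms faithful; and because relativising a binary-guarded quantifier merely adds an $\elem(\cdot)$ conjunct inside the guard's scope, while a unary guard can be safely weakened to the legal guard $x{=}x$. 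The rest is routine: a structural induction showing $\str{T}'\models\transl{\psi}[v,w]$ iff $\str{T}\models\psi[v,w]$ for all $\elem$-nodes $v,w$ and all subformulas $\psi$ of $\phi$ (the base case being precisely that $\hat v$ has a child labelled $a$ iff $a\in{\rm tp}^{\str{T}}(v)$), together with the easy observation that $\transl{\phi}$ has size linear in $|\phi|$ and is computable in polynomial time. The same encoding, with the label of a node recorded along its right-sibling chain instead of among its children (and a marker separating label leaves from encoded children), handles \GFt$[\lessv,\succh]$ and \GFt$[\lessv,\lessh]$, giving all of Theorem~\ref{t:lowerbounds}.
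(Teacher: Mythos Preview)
Your reduction from \GFto{} over arbitrary finite trees is exactly the paper's approach: the paper too encodes each node by an $\elem$-labelled singular node with its unary labels stored as immediate $\succv$-children, uses the same recursive translation $\transl{a(x)}=\exists y\,(x\succv y\wedge a(y))$ together with $\elem$-relativisation of the $\lessv$-guarded quantifiers, and adds a small structural axiom; the correctness argument is the routine structural induction you describe.

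Your closing aside about Theorem~\ref{t:lowerbounds} is the one point of divergence: the paper does \emph{not} port this encoding to the horizontal predicates but instead gives a direct generic reduction from \AExpTime{} Turing machines, using sibling chains to store bit-encodings of configuration and cell numbers. Your sibling-chain idea may well be workable (with care about where the root's labels live and how the marker interacts with $\lessv$-descendants), but it is not the route the paper takes for those logics.
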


\begin{proof}
  We give a~reduction from \GFto{} over arbitrary trees. The idea of
  the encoding is the same as in Theorem~\ref{th:singFOto}: a node $e$
  in a~tree is modeled by a~singular node labeled $\elem$ with
  immediate successors encoding predicates true in~$e$. The binary
  predicate $\lessv$ is used to preserve the structure of the tree, the
  additional $\succv$ predicate gives the access to nodes modeling unary
  predicates. In the following reduction, for a~given $\GFto$ formula
  $\phi$ over a~signature $\tau=\tau_0\cup \{\lessv\}$ we construct a
  \GFtb{} formula over the signature $\tau\cup\{\succv,\elem\}$ that is
  satisfiable over singular trees if and only if $\phi$ is satisfiable
  over trees.

  Let us start with a formula ensuring that the underlying structure
  is an encoding of a~tree. The formula $\mathit{tree}$ is defined as
  the conjunction of
\[
\bigwedge_{p\in\tau_0\cup\{\elem\}}\forall x \; p(x) \Rightarrow
\forall y\; y\lessv x\Rightarrow \elem(y)
\]
with 
\[
\forall x\; \elem(x) \Rightarrow \forall y\;
x\lessv y\Rightarrow\bigvee_{p\in\tau_0\cup\{\elem\}} p(y).
\]
It ensures that (unless the tree is trivial, i.e., no node is labeled
at all) each node is labeled with some predicate symbol, all internal nodes
are labeled $\elem$ and only leaves may be labeled with predicates
from $\tau_0$. 

Without loss of generality we may assume that the formula $\phi$ is
built from unary atoms, boolean connectives $\wedge,\neg$ and guarded
existential quantification. The translation $\transl{\phi}$ of
a~formula $\phi$ is defined inductively as follows.

\begin{eqnarray*}
  \transl{p(x)}&=& \exists y\; x\succv y \wedge p(y)\\
  \transl{\neg \phi} &= & \neg\transl{\phi}\\
  \transl{\phi_1\wedge\phi_2}& = & \transl{\phi}\wedge\transl{\phi_2}\\
  \transl{\exists x\; p(x) \wedge\psi(x)} &= 
      & \exists x\; \elem(x)\wedge \transl{p(x)} \wedge\transl{\psi(x)}\\
  \transl{\exists y\; x{\lessv }y \wedge\psi(x,y)} &= 
      & \exists y\;  x{\lessv }y \wedge \elem(y) \wedge\transl{\psi(x,y)}\\
  \transl{\exists y\; y{\lessv }x \wedge\psi(x,y)} &= 
      & \exists y\;  y{\lessv }x \wedge \elem(y) \wedge\transl{\psi(x,y)}
\end{eqnarray*}
Note that $\transl{\phi}$ is a~guarded formula of length linear in
($|\phi|$). Again a~simple inductive argument shows that $\phi$ is
satisfiable if and only if $\mathit{tree}\wedge\transl{\phi}$ has
a~singular tree model. \epr
\end{proof}

\begin{theorem} 
  The satisfiability problems for \GFt$[\lessv, \succh]$ and
  \GFt$[\lessv, \lessh]$ over singular trees are \ExpSpace-hard.
\end{theorem}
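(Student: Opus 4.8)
The plan is to follow the template of the two preceding theorems and reduce from $\GFto$ over arbitrary finite trees, which is $\ExpSpace$-hard by the Corollary of Section~3. Given a $\GFto$ sentence $\phi$ over $\tau=\tau_0\cup\{\lessv\}$ — which, as before, we may assume to be built from unary atoms, $\wedge$, $\neg$ and guarded existential quantification — I shall construct in each of the two target logics a formula $\mathit{tree}\wedge\transl{\phi}$ that is satisfiable over singular trees iff $\phi$ is satisfiable over trees. A node $e$ of a model $\str{T}$ of $\phi$ is represented by a singular node labelled $\elem$; every other node of the encoding carries a label disjoint from $\elem$, so that relativising by $\elem(y)$ recovers the set of representatives and $\transl{\exists y\,(x\lessv y\wedge\psi)}=\exists y\,(x\lessv y\wedge\elem(y)\wedge\transl{\psi})$ (and symmetrically for $y\lessv x$) faithfully simulates the ancestor/descendant navigation of $\phi$, while the unary predicates holding at $e$ are recorded by dedicated leaves labelled by the symbols of $\tau_0$, reachable from the representative of~$e$ through the auxiliary horizontal predicate. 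The formula $\mathit{tree}$ pins down the admissible shape of encodings (with a uniform treatment of the root, e.g.\ via a fresh top node), and a routine structural induction on $\phi$ — entirely parallel to the one behind the $\GFtb$ reduction — establishes the faithfulness of $\transl{\cdot}$; both reductions are linear in $|\phi|$, so $\ExpSpace$-hardness follows.

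For $\GFt[\lessv,\succh]$ the gadget is the $\succh$-analogue of the $\succv$-gadget used above: $e$ is modelled by a pair $e^\bullet\succh e^\circ$, with $e^\bullet$ labelled $\elem$, $e^\circ$ labelled by a fresh symbol $m$, and the $\tau_0$-leaves encoding the predicates true at $e$ hung immediately below $e^\circ$; among the children of the representative of the parent of $e$ these pairs are placed consecutively, $e_1^\bullet\succh e_1^\circ\succh e_2^\bullet\succh e_2^\circ\succh\cdots$. Since $\succh$ is the \emph{immediate} right-sibling relation, from $e^\bullet$ the unique $\succh$-successor is $e^\circ$, whose entire subtree consists of exactly $e$'s $\tau_0$-leaves; hence $\transl{p(x)}=\exists y\,(x\succh y\wedge\exists x\,(y\lessv x\wedge p(x)))$ reads off precisely the predicates of $e$ and nothing about its descendants, whereas the $\elem$-relativised clauses work because neither $e^\circ$ nor its leaves is a $\lessv$-descendant of $e^\bullet$. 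Every quantifier of $\mathit{tree}$ and of $\transl{\phi}$ is relativised by $\succh$, by $\lessv$, or by an equality, so both formulas lie in $\GFt[\lessv,\succh]$.

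The case of $\GFt[\lessv,\lessh]$ is, I expect, the main obstacle, and the $\succh$-gadget cannot simply be transcribed: when the only horizontal predicate is the transitive $\lessh$, no guarded two-variable formula can isolate the immediate right sibling of a node (an existential asserting ``something lies strictly between $x$ and $y$'' has three free variables and so admits no atomic guard), and more broadly, with only two variables and only transitive relations one cannot separate the ``own region'' of a node from the self-similar copies of it sitting deeper, so a naive translation of $\transl{p(x)}$ would also pick up the predicate-leaves of the descendants of $e$. The way out I would pursue is to re-engineer the gadget so that the predicate data of $e$ occupies a block of leaves whose position is fixed by a sibling-extremal or otherwise globally definable property rather than by proximity to an immediate neighbour — for instance so that the block is bracketed by fresh marker symbols and pinned down by $\mathit{tree}$ uniformly across all levels — and then to read $p$ at $e$ by a guarded formula combining $\lessv$ (to descend into the block) with $\lessh$ (to locate its delimiters). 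Verifying that such a layout is definable by a guarded two-variable $\mathit{tree}$-formula over $\{\lessv,\lessh\}$ and that the induced translation stays faithful is the technical crux; once it is in place, the remainder of the argument (structural induction, linear size, appeal to the Corollary) is identical to the $\succh$ case, and yields $\ExpSpace$-hardness for $\GFt[\lessv,\lessh]$ as well.
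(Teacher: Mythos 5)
Your reduction for \GFt$[\lessv,\succh]$ (from \GFto{} over arbitrary finite trees, in the spirit of the paper's \GFtb{} reduction over singular trees) looks workable, but the proposal does not actually prove the \GFt$[\lessv,\lessh]$ half of the statement: you correctly diagnose that your pairing gadget breaks when only the transitive sibling order is available, and then you explicitly leave the repair as ``the technical crux'' to be verified. A hardness argument that ends there has established hardness for only one of the two logics, so there is a genuine gap.

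The missing idea is exactly the one the paper uses, and it dissolves your worry that with only $\lessh$ one cannot address positions among siblings. One \emph{can} isolate the sibling at position exactly $i$ (counted from the leftmost sibling) by the polynomial-size guarded two-variable formulas $\mathit{Path}_{\geq 0}(y)=\neg\exists x\,(x\lessh y)$, $\mathit{Path}_{\geq i+1}(y)=\exists x\,(x\lessh y\wedge \mathit{Path}_{\geq i}(x))$ and $\mathit{Path}_i(y)=\mathit{Path}_{\geq i}(y)\wedge\neg \mathit{Path}_{\geq i+1}(y)$ (with an analogous recursion for $\succh$). The paper does not simulate \GFto{} models at all; it gives a direct generic reduction from \AExpTime{} following \cite{Kie02}: each tape cell of an exponential-time alternating machine is a node accompanied by $2n+2$ right siblings, each carrying a single bit via predicates $\mathit{zero}/\mathit{one}$, and the $i$-th bit of a configuration or position number is read off by the guarded formula $\exists y\,(x\lessh y\wedge\mathit{Path}_i(y)\wedge\mathit{one}(y))$; the only structural adjustment needed is that the two successor configurations can no longer be siblings and must be rooted at separate nodes. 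This positional-addressing mechanism is what your $\lessh$ gadget lacks: the predicate block of a representative must sit at sibling positions that are fixed and globally addressable, not merely ``adjacent'' to the representative, and your proposal stops short of constructing (and verifying the definability of) such a layout. Until that is done, the $\lessh$ case remains unproved.
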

\newcommand{\zero}{\mathit{zero}}
\newcommand{\one}{\mathit{one}}
\newcommand{\none}{\mathit{none}}
\newcommand{\Number}{\mathit{number}}
\newcommand{\cell}{\mathit{cell}}
\newcommand{\conf}{\mathit{conf}}

\begin{proof}
  We follow the construction from \cite{Kie02} and give a generic
  reduction from \AExpTime{}. Consider an alternating Turing machine
  $M$ working in exponential time. Without loss of generality we may
  assume that $M$ works in time $2^n$ and that every non-final
  configuration of $M$ has exactly two successor configurations.  Let
  $w$ be an input word of size $n$. Following \cite{Kie02} we
  construct a~formula whose models encode accepting configuration
  trees of machine $M$ on input $w$.
\newcommand{\horizEnc}{
 \begin{array}{c@{\hskip.5cm}c@{\hskip.5cm}c}
\circlenode{p2}{}&\rnode{p3}{\ldots}&\circlenode{p4}{}
\end{array}
\ncline{->}{p2}{p3}
\ncline{->}{p3}{p4}
}
\begin{figure}[htb]
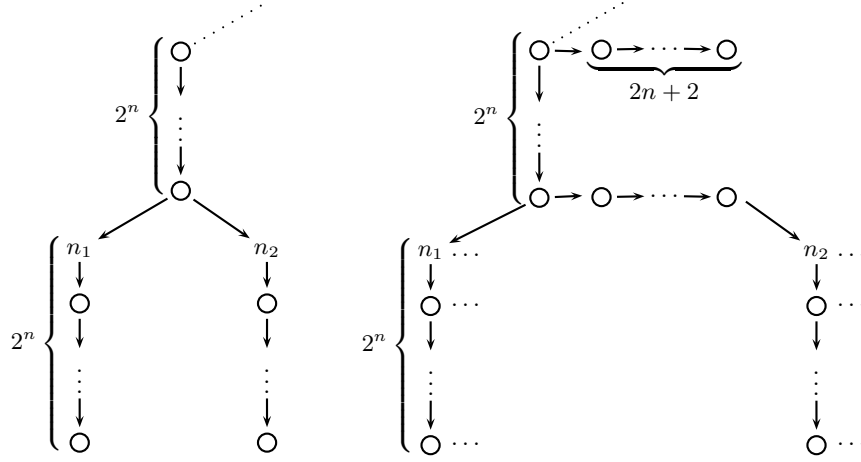

\begin{center}
\[
\begin{array}{c@{\hskip.2cm}c@{\hskip.2cm}c@{\hskip.2cm}c@{\hskip.2cm}c}
&&&\rnode{root}{} \\[2ex]
&&2^n\left\{
\begin{array}{c}
\circlenode{conf0}{} \\[4ex]
\rnode{conf00}{\vdots} \\[2ex]
\circlenode{confn}{} 
\end{array}\right.\mbox{~~~~}
\\[10ex]
&
2^n\left\{
\begin{array}{c}
\rnode{n1f}{n_1} \\[2ex]
\circlenode{conf1}{}\\[4ex]
\rnode{conf1n}{\vdots}\\[2ex]
\circlenode{conf1nn}{}
\end{array}
\right.
&&
\begin{array}{c}
\rnode{n2a}{n_2} \\[2ex]
\circlenode{conf2}{}\\[4ex]
\rnode{conf2n}{\vdots}\\[2ex]
\circlenode{conf2nn}{}
\end{array}
\end{array}
\psset{nodesep=2pt}
\ncline[linestyle=dotted]{root}{conf0}
\ncline{->}{conf0}{conf00}
\ncline{->}{conf00}{confn}
\ncline{->}{confn}{n1f}
\ncline{->}{confn}{n2a}
\ncline{->}{n1f}{conf1}
\ncline{->}{n2a}{conf2}
\ncline{->}{conf1}{conf1n}
\ncline{->}{conf1n}{conf1nn}
\ncline{->}{conf2}{conf2n}
\ncline{->}{conf2n}{conf2nn}
\mbox{~~~~~~~}
\begin{array}{c@{\hskip.02cm}c@{\hskip.2cm}c@{\hskip.2cm}c@{\hskip.2cm}c}
&&\rnode{sroot}{} \\[2ex]
&&2^n\left\{
\begin{array}{c@{\hskip.5cm}c}
\circlenode{sconf0}{} &\rnode{h0}{{\underbrace{\horizEnc}}}\\
& {2n+2}\\
\rnode{sconf00}{\vdots} \\[2ex]
\circlenode{sconfn}{}  &\rnode{hn}{\horizEnc}\rnode{hn2}{}
\end{array}\right.\mbox{~~~~}
\\[10ex]
&
2^n\left\{
\begin{array}{cc}
\rnode{sn1f}{n_1}&\lefteqn{\ldots} \\[2ex]
\circlenode{sconf1}{}&\lefteqn{\ldots}\\[4ex]
\rnode{sconf1n}{\vdots}\\[2ex]
\circlenode{sconf1nn}{}&\lefteqn{\ldots}
\end{array}
\right.
&&
\begin{array}{cc}
\rnode{sn2a}{n_2}&\ldots \\[2ex]
\circlenode{sconf2}{}&\ldots\\[4ex]
\rnode{sconf2n}{\vdots}\\[2ex]
\circlenode{sconf2nn}{}&\ldots
\end{array}
\end{array}
\psset{nodesep=2pt}
\ncline[linestyle=dotted]{sroot}{sconf0}
\ncline{->}{sconf0}{sconf00}
\ncline{->}{sconf00}{sconfn}
\ncline{->}{sconfn}{sn1f}
%
\ncline{->}{sn1f}{sconf1}
\ncline{->}{sn2a}{sconf2}
\ncline{->}{sconf1}{sconf1n}
\ncline{->}{sconf1n}{sconf1nn}
\ncline{->}{sconf2}{sconf2n}
\ncline{->}{sconf2n}{sconf2nn}
\ncline{->}{sconf0}{h0}
\ncline{->}{sconfn}{hn}
\ncline{->}{hn2}{sn2a}
\]
\caption{Left: frame of a~configuration tree in \cite{Kie02}; nodes
  $n_1$ and $n_2$ are siblings. Right: frame of a~configuration tree
  in our encoding; nodes $n_1$ and $n_2$ are not siblings.}
\label{fig:frames}
\end{center}
\end{figure}
In~\cite{Kie02} each configuration is represented by $2^n$ elements of
a~tree, each of which represents a~single cell of the tape of $M$ (see
left part of Figure~\ref{fig:frames}). Each such node is then labeled
with unary predicate symbols from the set
$\{C_1,\ldots,C_n,P_1,\ldots,P_n\}$ to encode the number of
a~configuration (i.e., the depth of the configuration in the
computation tree) and its position (i.e., the number of a~cell) in the
configuration: $C_i(x)$ is true if the $i$-th bit of the configuration
number is 1 and $P_i(x)$ is true if the $i$-th bit of the position
number is 1. Additional predicate symbols are used to encode the
tape symbol and the state of the machine (if it is necessary, i.e., if
the head of of the machine is scanning the cell under
consideration). Here, to encode the numbers, we use additional $2n$
elements that are siblings of the node representing a~cell, see right
part of Figure \ref{fig:frames}. Each of these elements stores 
information about a~single bit using one of two unary predicates
$\zero$ or $\one$. Then the atomic formulas $C_i(x)$ and $P_i(x)$ are
simulated by formulas
\[
\exists y\; x\lessh y \wedge \mathit{Path}_i(y)\wedge \one(y)
\mbox{~~and~respectively~~}\exists y\; x\lessh y \wedge
\mathit{Path}_{n+i}(y)\wedge \one(y)
\]
where the subformula $\mathit{Path}_i(y)$ is defined recursively as
follows.  For the logic \GFt$[\lessv, \succh]$ we define
\begin{eqnarray*}
  \mathit{Path}_0(y)&=& \neg \exists x\; x{\succh} y\\
  \mathit{Path}_{i+1}(y)&=& \exists x\; x{\succh} y \wedge \mathit{Path}_i(x)
\end{eqnarray*}
and for the logic \GFt$[\lessv, \lessh]$ we define
\begin{eqnarray*}
  \mathit{Path}_{\geq 0}(y)&=& \neg \exists x\; x{\lessh} y\\
  \mathit{Path}_{\geq i+1}(y)&=& \exists x\; x{\lessh} y \wedge \mathit{Path}_{\geq i}(x)\\
  \mathit{Path}_{i}(y)&=& \mathit{Path}_{\geq i}(y)\wedge \neg \mathit{Path}_{\geq i+1}(y).
\end{eqnarray*}
Note that in both cases the formula $\mathit{Path}_i$ is guarded and
has polynomial length.  The negated atomic formulas $\neg C_i(x)$ and
$\neg P_i(x)$ are simulated using predicate $\zero$ instead of
$\one$.

Now, having the ability to count, we may encode tape symbols and
states of the machine by simply using more siblings, and we may follow
the lines of the construction in \cite{Kie02} to encode the
computation of $M$.  The only remaining subtle point is that in
\cite{Kie02} the two successor configurations are siblings in
a~computation tree while here they must not be siblings in order not
to mess up the information about numbers --- this may be simply done
by rooting the two configurations at different nodes as shown on
Figure~\ref{fig:frames}.  \epr
\end{proof}

\medskip
\begin{theorem}
The satisfiability problem for \GFto{} over singular trees is \PSpace-hard.
\end{theorem}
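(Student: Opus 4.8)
The plan is to reduce from the validity problem for quantified Boolean formulas, which is \PSpace-complete. Fix an instance $\Psi = Q_1 x_1 Q_2 x_2\cdots Q_n x_n\,\varphi$ with $Q_i\in\{\forall,\exists\}$ and $\varphi$ a quantifier-free Boolean formula, which I may assume is in negation normal form. I would build a \GFto{} formula $\alpha_\Psi$ over the all-unary signature $\{R,T_1,F_1,\dots,T_n,F_n\}$ whose intended (singular) models are the ``assignment trees'': a root carrying $R$; below it a layer of nodes carrying $T_1$ or $F_1$ (a node carrying $T_i$, resp.\ $F_i$, records the choice $x_i:=1$, resp.\ $x_i:=0$, on its branch), with the root having both a $T_1$- and an $F_1$-child when $Q_1=\forall$ and at least one child when $Q_1=\exists$; and so on down to a layer of leaves at depth $n$. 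Such trees are automatically singular, since along any branch each of the $2n+1$ predicates labels one node.

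First I would write the conjuncts pinning down this shape. The main difficulty is that \GFto{} only has the transitive descendant relation $\lessv$: there is no child predicate, and none is expressible with two variables, so the layering cannot be enforced directly. Instead I would use: (R1) $\forall x\,(R(x)\Leftrightarrow\neg\exists y\;y\lessv x)$, forcing $R$ exactly at the root; (R2) for each $i<n$, $\forall xy\,\big(y\lessv x\Rightarrow\neg T_{i+1}(x)\vee R(y)\vee\bigvee_{j\le i}(T_j(y)\vee F_j(y))\big)$ and likewise with $F_{i+1}$, saying every ancestor of a layer-$(i+1)$ node sits in layers $0,\dots,i$; and (R3) for each $1\le i<n$, $\forall x\,\big(T_{i+1}(x)\Rightarrow\exists y(y\lessv x\wedge(T_i(y)\vee F_i(y)))\big)$ and likewise with $F_{i+1}$. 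A short induction using (R1)--(R3) shows that the ancestor set of any layer-$(i+1)$ node is exactly one node from each of the layers $0,1,\dots,i$, linearly ordered, with consecutive ones in the parent--child relation; in particular every model is exactly a layered assignment tree of depth $n$. Then I would add the branching conjuncts: if $Q_{i+1}=\exists$, $\forall x(\lambda_i(x)\Rightarrow\exists y(x\lessv y\wedge(T_{i+1}(y)\vee F_{i+1}(y))))$; if $Q_{i+1}=\forall$, the analogous conjunct with only $T_{i+1}(y)$ and, separately, only $F_{i+1}(y)$; here $\lambda_0=R$ and $\lambda_i$ stands for ``$T_i$ or $F_i$'' --- a non-atomic guard, handled by splitting the conjunct into a copy guarded by $T_i$ and a copy guarded by $F_i$. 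Finally the matrix conjunct $\forall x((T_n(x)\vee F_n(x))\Rightarrow\varphi^*(x))$, where $\varphi^*$ is $\varphi$ with every literal $x_i$ replaced by $\exists y(y\lessv x\wedge T_i(y))$ and every literal $\neg x_i$ by $\exists y(y\lessv x\wedge F_i(y))$; because a leaf has exactly one ancestor in layer $i$ and $\varphi$ is in negation normal form, $\varphi^*$ is a positive Boolean combination of guarded atoms that faithfully evaluates $\varphi$ under the assignment read off the branch. All conjuncts are in \GFto{} and $|\alpha_\Psi|$ is polynomial in $|\Psi|$.

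For the correctness, in one direction, if $\Psi$ is true then the existential player has a winning strategy in the associated evaluation game, and the corresponding strategy tree --- universal layers branching both ways, existential layers branching in the single way dictated by the strategy --- is a singular model of $\alpha_\Psi$, the matrix conjunct holding at every leaf precisely because the strategy is winning. Conversely, by the structural conjuncts a singular model $\str{T}$ of $\alpha_\Psi$ is a layered assignment tree; choosing at each existential-layer node one of its children gives a strategy, and since every universal-layer node has both children, every play consistent with it reaches a depth-$n$ leaf of $\str{T}$, where the matrix conjunct guarantees $\varphi$; so the strategy is winning and $\Psi$ is true.

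I expect the only delicate point to be getting (R1)--(R3) exactly right: in the singular, guarded, descendant-only setting each node carries one label and we may only talk about ancestors and descendants, never siblings, incomparable nodes, or immediate neighbours, so recovering an honest layered tree from this weak vocabulary is where the work is --- the key observation being that forbidding ``out-of-layer'' ancestors already rules out two nodes of the same layer on one branch, so no separate no-repetition axiom is needed. Once the layering is in place, encoding the quantifier alternation and the evaluation of $\varphi$ is routine.
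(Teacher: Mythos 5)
Your proposal is correct in its overall route --- it is the same reduction from QBF validity via trees whose branches encode valuations, with universal layers branching both ways and existential layers choosing --- but it realizes the layering quite differently from the paper. The paper keeps the unary signature fixed at $\{\Root,\leaf,\true,\false\}$ and recovers the layer of a node from its \emph{depth}, using the guarded formulas $\depth_i$ and $\height_i$ obtained by nesting $k$ guarded existentials along $\lessv$; an extra layer of $\leaf$-nodes sits strictly below all variable nodes, so every variable value is read off a strict ancestor of the leaf. You instead let the signature grow with $n$, tagging layer $i$ directly with $T_i/F_i$, and push the work into the structural axioms (R1)--(R3). Those axioms do work: since the ancestors of any node form a chain, (R2) excludes two comparable nodes of the same layer, and combined with (R3) and singularity an easy induction gives exactly one ancestor per lower layer with consecutive layers in the parent--child relation, just as you argue. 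What your version buys is a cleaner encoding with no depth-counting machinery; what it gives up is that the paper's proof establishes hardness already over a four-letter unary alphabet, whereas yours needs $2n+1$ unary predicates. One concrete slip to repair: in $\varphi^*$ you translate the literal $x_n$ as $\exists y\,(y\lessv x\wedge T_n(y))$ and evaluate it at the layer-$n$ node itself, but $\lessv$ is strict, so this subformula is always false there; either translate $x_n$ (resp.\ $\neg x_n$) simply as $T_n(x)$ (resp.\ $F_n(x)$), or add a separate leaf layer below layer $n$ as the paper does.
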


\begin{proof}
  We propose a~reduction from the satisfiability of quantified boolean
  formulas, QBF. Let $\psi$ be an instance of QBF problem. Without
  loss of generality we may assume that $\psi$ is of the form
\[
\exists v_k\ldots\exists v_2\forall v_1 \psi'
\]
where the number of all quantifiers ($k$) is even, all even-numbered
variables are existentially quantified, all odd-numbered variables are
universally quantified and $\psi'$ is a~propositional formula over the
variables $v_1,\ldots,v_k$.

We now translate the formula $\psi$ to a~formula  over the signature 
\[\tau=\set{\Root, \leaf,\true, \false, \lessv }
\]
such that $\psi$ is true if and only if
its translation is satisfiable over singular trees.

First, for $i\in \set{0,\dots, k}$ we define auxiliary formulas
$\depth_i$ and $\height_i$.  Let $\depth_0(x)=\Root(x)$ and for $i\geq
1$ let $\depth_i(x)=\exists y \;x{\lessv }y\wedge
\depth_{i-1}(y)$. Intuitively, the formula $\depth_i(x)$ expresses
that the node $x$ occurs at distance at least $i$ from the root. Let
$\height_0(x)=\leaf(x)$, $\height_1(x)=\depth_{k}(x)$ and let
$\height_{i}(x)=depth_{k+1-i}(x)\wedge \neg \depth_{k+2-i}(x)$ for
$i>1$. For $i>0$ the formula $\height_i(x)$ expresses that $x$ is
a~node at depth exactly $k+1-i$; in the construction below, for $i\geq
0$, the formula $\height_i(x)$ will mean that the subtree rooted at $x$ has
height $i$. Note that $\height_i(x)$ is a~guarded formula of length
linear in $i$.

In the following construction a~model of the translation of $\psi$ is
a~tree that describes a~set of valuations justifying that $\psi$ is
true. It is a binary tree of depth $k+1$ where every path describes
a~valuation of variables $v_1,\ldots, v_k$. Every node at height $i$
is labeled either $\true$ or $\false$, which corresponds to a~value of
the variable $v_i$ under a~given valuation. Every non-leaf node at odd
height $i$ has two successors corresponding to the universally
quantified variable $v_{i+1}$; every node at even height $i$ where $i>0$ has one
successor corresponding to the existentially quantified variable $v_{i+1}$. If $k>0$ then let
$\mathit{tree}_k$ be the conjunction of
\begin{eqnarray}
  \exists x\;\Root(x),\\
  \forall x\; \Root(x)&\Rightarrow &(\exists y\;
  x{\lessv }y\wedge\height_k(y)\wedge (\true(y)\vee \false(y)), \\
  \forall x\; \true(x)&\Rightarrow\big( &\height_i(x)\Rightarrow \nonumber\\
&&\big((\exists y\;
  x{\lessv }y\wedge\height_{i-1}(y)\wedge \true(y))\\&\wedge&\;\,(\exists y\;
  x{\lessv }y\wedge\height_{i-1}(y)\wedge \false(y))\big)\; \big)\nonumber\\
  &&\mbox{for all even numbers $2\leq i\leq k$,}\nonumber\\
  \forall x\; \false(x)&\Rightarrow \big(&\height_i(x)\Rightarrow \nonumber\\
&&\big((\exists y\;
  x{\lessv }y\wedge\height_{i-1}(y)\wedge \true(y))\\ &\wedge& \;\,(\exists y\;
  x{\lessv }y\wedge\height_{i-1}(y)\wedge \false(y))\big)\;\big)\nonumber\\
  &&\mbox{for all even numbers $2\leq i\leq k$,}\nonumber\\
  \forall x\; \true(x)&\Rightarrow \big( &\height_i(x)\Rightarrow \nonumber\\
&&\exists y\; x{\lessv }y\wedge\height_{i-1}(y)
  \wedge \big(\true(y)\vee \false(y)\big)\big) \\
  &&\mbox{for all odd numbers $3\leq i<k$,}\nonumber
\\
  \forall x\; \false(x)&\Rightarrow \big( &\height_i(x)\Rightarrow \nonumber\\
&&\exists y\; x{\lessv }y\wedge\height_{i-1}(y)
  \wedge \big(\true(y)\vee \false(y)\big)\big) \\
  &&\mbox{for all odd numbers $3\leq i<k$,}\nonumber\\
  \forall x\; \true(x)&\Rightarrow \big( &\height_1(x)\Rightarrow \nonumber\\
&&(\exists y\;
  x{\lessv }y\wedge \leaf(y))\big),\\
  \forall x\; \false(x)&\Rightarrow \big( &\height_1(x)\Rightarrow \nonumber\\
&&(\exists y\;
  x{\lessv }y\wedge \leaf(y))\big).
\end{eqnarray} 
In the case of $k=0$ the formula $\mathit{tree}_0$  boils down
to $\exists x\;\Root(x)\wedge\forall x\; \Root(x)\Rightarrow (\exists
y\; x{\lessv }y\wedge \leaf(y))$.  Note that $\mathit{tree}_k$ is a~guarded
formula of length polynomial in~$k$.  Now we inductively define the
translation $\transl{\psi'}$ of the quantifier-free formula~$\psi'$.
\begin{eqnarray*}
\transl{\true}&=& \true\\
\transl{\false}&=& \false\\
\transl{v_i}&=& \exists y\; y{\lessv }x \wedge \height_i(y)\wedge \true(y)\\
\transl{\neg \phi} &= & \neg\transl{\phi}\\
\transl{\phi_1\wedge\phi_2}& = & \transl{\phi}\wedge\transl{\phi_2}\\
\transl{\phi_1\vee\phi_2} &= & \transl{\phi}\vee\transl{\phi_2}
\end{eqnarray*}
Note that $\transl{\psi'}$ is a~guarded formula of length polynomial
in ($|\psi'|+k$).  It is not difficult to prove by induction on $k$ (and
by nested structural induction on propositional formulas with free
variables $v_1,\ldots,v_k$) that $\psi$ is true if and only if
$\mathit{tree}_k\wedge \forall x \; \leaf(x)\Rightarrow
\transl{\psi'}$ has a~singular tree model. Each node labeled $\leaf$
in such a~model uniquely determines a~path to a~node labeled $\Root$
and such a~path corresponds to a~valuation of the variables
$v_1,\ldots,v_k$ that makes the formula $\psi'$ true. \epr
\end{proof}


\end{appendix}
\end{document}